\def\algbackskip{\hskip-\ALG@thistlm}
\definecolor{blue(pigment)}{rgb}{0.2, 0.2, 0.6}
\newcommand{\remove}[1]{\iftrue{{{\color{red} #1}}}\fi}
\newtheorem {theorem}{Theorem}
\newtheorem {lemma}{Lemma}
\newtheorem {proposition}{Proposition}
\newtheorem {corollary}{Corollary}
\newcommand{\layer}{\mathcal{L}}
\newcommand{\ecc}{e}
\newcommand{\FPC}{4-point condition}
\newcounter{mycomment}
\title{Eccentricity function in distance-hereditary graphs\footnote{Declarations of interest: none}}
\author{\medskip Feodor F. Dragan and Heather M. Guarnera   \\
Algorithmic Research Laboratory, Department of Computer Science \\
Kent State University, Kent, Ohio, USA \\
\href{mailto:dragan@cs.kent.edu}
{\texttt{dragan@cs.kent.edu}},
\href{mailto:hmichaud@kent.edu}
{\texttt{hmichaud@kent.edu}}
}
\date{}
\begin{document}
\maketitle

\begin{abstract}
A graph $G=(V,E)$ is distance hereditary if every induced path of $G$ is a shortest path. 
In this paper, we show that the eccentricity function $\ecc(v)=\max\{d(v,u): u\in V\}$ in any distance-hereditary graph $G$  is almost unimodal, that is, every vertex $v$ with $\ecc(v)> rad(G)+1$ has a neighbor with smaller eccentricity. Here, $rad(G)=\min\{\ecc(v): v\in V\}$ is the radius of graph $G$.  
Moreover, we use this result to fully characterize the centers of distance-hereditary graphs.
Several bounds on the eccentricity of a vertex with respect to its distance to the center of $G$ or to the ends of a diametral path are established.
Finally, we propose a new linear time algorithm to compute all eccentricities in a distance-hereditary graph.

\medskip

\noindent
{Keywords:} Distance-hereditary graph, eccentricity function, center, radius, diameter, linear-time algorithm
\end{abstract}

\section{Introduction}
The eccentricity $\ecc(v)$ of a vertex~$v$ is the length of a longest shortest path from~$v$ to any other vertex.
In a distance-hereditary graph $G$, the length of any induced path between two vertices equals their distance in $G$~\cite{doi:10.1093/qmath/28.4.417}.
The diameter $diam(G)$ (maximum eccentricity) and radius $rad(G)$ (minimum eccentricity) of distance-hereditary graphs have been extensively studied. A close relationship between diameter and radius was discovered in~\cite{Dragan_1994,YuChepoi91}, where it was shown that $diam(G) \geq 2rad(G) - 2$.
It was shown in~\cite{Dragan_1994} 
that with two sweeps of a Breadth-First Search (BFS) one can obtain a value that is very close to the diameter. In fact, any vertex~$v$ that is furthest from an arbitrary vertex~$u$ has eccentricity $\ecc(v) \geq diam(G) - 2$.
Later, Feodor Dragan and Falk Nicolai~\cite{DRAGAN2000191} showed that by using instead LexBFS (Lexicographic Breadth-First Search) one can get a vertex~$v$ (last visited by a LexBFS starting at any vertex $u$) with $\ecc(v) \geq diam(G) - 1$,
and additionally if $\ecc(v)$ is even, then $\ecc(v)$ exactly realizes the diameter of $G$.
This yielded a linear time algorithm to compute the diameter as well as a diametral pair of vertices~\cite{Dragan_1994,DRAGAN2000191}, i.e., a pair $x,y$ such that $d(x,y)=diam(G)$.
There is also a linear time algorithm to find a central vertex (a vertex with minimum eccentricity) and calculate the radius~\cite{Dragan_1994}.
These results were very recently generalized in~\cite{Coudert_2019}; it follows from~\cite{Coudert_2019} that all vertex eccentricities of a distance-hereditary graph $G$  can be computed in total linear time via a split decomposition of $G$. 

Here, we establish further properties of the eccentricity function in distance-hereditary graphs.
{Understanding the eccentricity function and being able to efficiently compute the diameter, radius, and all vertex eccentricities is of great importance.
For example, in the analysis of social networks
(e.g., citation networks or recommendation networks), biological systems (e.g., protein interaction networks),
computer networks (e.g., the Internet or peer-to-peer networks), transportation networks (e.g., public transportation
or road networks), etc., the {\it eccentricity} $e_G(v)$ of a vertex $v$ is used to measure the importance of $v$ in the network:
the {\em eccentricity centrality index} of $v$ \cite{Brandes} is defined as $\frac{1}{e_G(v)}$.}
A graph's eccentricity function is \emph{unimodal}
if every non-central vertex~$v$ has an adjacent vertex~$u$ with $\ecc(u) < \ecc(v)$.
The unimodality of the eccentricity function has been studied in a variety of graph classes;
for example, it is exactly unimodal in Helly graphs~\cite{FDraganPhD}
and almost unimodal in $(\alpha_1, \Delta)$-metric  graphs~\cite{Dragan2017EccentricityAT} (that includes all chordal graphs)  and in hyperbolic graphs~\cite{Alrasheed_2016}.
In particular, it was shown~\cite{Dragan2017EccentricityAT} that in a chordal graph~$G$ the unimodality of the eccentricity function can break for a (non-central) vertex~$v$, i.e., all neighbors $w$ of $v$ 
satisfy $\ecc(w) \ge \ecc(v)$, only
under very specific conditions: that $diam(G) = 2rad(G)$, that $\ecc(v) = rad(G) + 1$, and that $v$ is at distance 2 from a central vertex. 
We show in the main theorem of Section~\ref{sec:unimodality} that the same conditions hold for vertices of distance-hereditary graphs. 
This result, which is of independent interest, is a crucial intermediate step to establish the remaining results of this paper.

The center $C(G)$ (all vertices of $G$ with minimum eccentricity and the graph induced by those vertices) of a distance-hereditary graph $G$ is also of interest.
Many graph classes have a well defined center.
The center of a tree is either $K_1$ or $K_2$~\cite{Jordan_1869},
the center of a maximal outerplanar graph is one of seven special graphs~\cite{doi:10.1002/jgt.3190040108},
and more generally all possible centers of 2-trees are known~\cite{PROSKUROWSKI19801}.
Graph centers have also been characterized fully for chordal graphs~\cite{Chepoi_1988}.
In distance-hereditary graphs it is known~\cite{YuChepoi91} that the diameter of the center is no more than~3.
This was later improved by Hong-Gwa Yeh and Gerard Chang~\cite{YEH2003297} that either $diam(C(G))= 3$ and $C(G)$ is connected or $C(G)$ is a cograph (which may not be connected), i.e., a $P_4$-free graph. Furthermore, any cograph is the center of some distance-hereditary graph. 
We complete the characterization of centers of distance-herediatary graphs by investigating the instance that $C(G)$ is not a cograph (i.e., $diam(C(G)) = 3$),
in which case $C(G)$ takes the form of a graph~$H$ which is further described in Section~\ref{sec:centers},
and moreover each such~$H$ is the center of some distance-hereditary graph.
%
Finally, we obtain several bounds on the eccentricity of an arbitrary vertex~$v$ with respect to its distance to a mutually distant pair and also its distance to the center $C(G)$.
A simple dynamic programming algorithm is also presented which computes all eccentricities in a distance-hereditary graph in total linear time by utilizing a pruning sequence which is a characteristic property of distance-hereditary graphs.

{The main contributions of this paper are summarized as follows.
\begin{itemize}[noitemsep, nolistsep]
    \item We show that the eccentricity function in distance-hereditary graphs is almost unimodal: for every vertex $v \notin C(G)$ there is a neighbor $w$ such that $e(w) < e(v)$ or $diam(G) = 2rad(G)$, $e(v)=rad(G)+1$, and $d(v,C(G))=2$. We present several consequences of this result for obtaining the eccentricity of a vertex in a distance-hereditary graph~$G$. \item We propose certificates for the diameter, the radius and all eccentricities and show that the eccentricity of any vertex is closely bounded by its distances to just two mutually distant vertices. 
    \item We fully characterize the centers of distance-hereditary graphs. The center of a distance-hereditary graph~$G$ is either a cograph or takes the form of a graph~$H$ described in Section~\ref{sec:centers}. Moreover, every cograph and each such~$H$ is the center of some distance-hereditary graph.
    \item We present a new linear time algorithm to compute all eccentricities in a distance-hereditary graph. 
\end{itemize}
}

\section{Preliminaries}~\label{sec:preliminaries}
Let $G=(V,E)$ be an undirected, simple (without loops or parallel edges), connected graph. Let $n=|V|$ and $m=|E|$. 
A \emph{path} $P(v_0,v_k)$ is a sequence of vertices $v_0,...,v_k$ such that $v_iv_{i+1} \in E$ for all $i \in [0,k-1]$; its \emph{length} is $k$.
A graph~$G$ is connected if there is a path between every pair of vertices.
Let $d_G(x,y)$ be the distance between two vertices~$x$ and~$y$ in~$G$, that is, the length of a shortest path from~$x$ to~$y$.
A subgraph $H$ of a graph $G$ is called \emph{isometric} if the distance in $H$ between any of its two vertices equals their distance in $G$.  By $G- \{x\}$ we denote an induced subgraph of $G$ obtained from $G$ by removing a vertex $x\in V$.
The \emph{eccentricity} $\ecc_G(v)$ of a vertex~$v$ is the maximum distance from~$v$ to any vertex.
The subindex is omitted if~$G$ is known by context.
The \emph{diameter} ($diam(G)$) and \emph{radius} ($rad(G)$) of a graph~$G$ is the maximum and minimum eccentricity of a vertex, respectively.
The \emph{center} is the set of vertices whose eccentricities are minimum: $C(G) = \{v \in V : \ecc(v) = rad(G)\}$. It will be convenient to denote by $C(G)$ also the subgraph of $G$ induced by set $C(G)$. 
We define $C^k(G) = \{v \in V : \ecc(v) \leq rad(G) + k\}$.
We denote the set of furthest vertices from~$v$ as $F(v)=\{u \in V: d(u,v) = \ecc(v) \}$.
Vertices $x,y$ are considered to be \emph{mutually distant} if $x \in F(y)$ and $y \in F(x)$; they are called \emph{a pair of mutually distant vertices}. A pair $\{x,y\}$ is called \emph{a diametral pair} if $d(x,y)=diam(G)$. 
The \emph{interval} $I(x,y)= \{v\in V: d(x,v)+d(v,y)=d(x,y)\}$ is the set of all vertices that are on shortest paths between~$x$ and~$y$.
An interval \emph{slice} is defined as $S_k(x,y) = \{v \in I(x,y): d(x,v) = k\}$ for some non-negative integer~$k$.
We denote by $<S>$ the subgraph of~$G$ induced by the vertices~$S \subset V$. Let also $d(v,S)=\min\{d(v,u): u\in S\}$ and $diam(S)=\max\{d_G(x,y): x,y\in S\}$. 

The \emph{neighborhood} of~$v$ consists of all vertices adjacent to~$v$, denoted by~$N(v)$,
and the \emph{closed neighborhood} of~$v$ is defined as $N[v] = N(v) \cup \{v\}$.
The $k^{th}$ neighborhood of a vertex $v$ is the set of all vertices of distance~$k$ to~$v$, that is, $N^k(v)=\{u \in V: d(u,v)=k\}$.
Whereas a \emph{disk} of radius~$k$ centered at a set~$S$ (or a  vertex) is the set of vertices of distance at most~$k$ to some vertex of~$S$, that is, $D(S,k) = \{u \in V: d(u,S) \leq k\}$.
A vertex~$v$ is said to be \emph{universal} to a set~$S$ if $N(v) \supseteq S$. Let $V=\{v_1,...,v_n\}$. 
For an $n$-tuple of non-negative integers $(r(v_1),...,r(v_n))$,
a subset $M\subseteq V$ is an {\emph r-dominating set} for a set $S\subseteq V$ in $G$ if and only if for every $v \in S$ there is a vertex~$u \in M$ with $d(u,v) \leq r(v)$. We also say that $M$ {\emph r-dominates} $S$ in $G$. 
If $r(v_i)=1$ for all $i$, then we say that $M$ {\emph dominates} $S$ in $G$.  
If $S=V$ then we say that $M$ $r$-dominates $G$.
Two vertex sets $A$ and $B$ of $G$ are said to be \emph{joined} if each vertex of $A$ is adjacent to every vertex of $B$. 
A vertex is \emph{pendant} if $|N(v)|=1$.
Two vertices~$v$ and~$u$ are \emph{twins} if they have the same neighborhood or the same closed neighborhood.
\emph{True twins} are adjacent; \emph{false twins} are not.
A graph is distance-hereditary if and only if each of its connected induced subgraphs is isometric~\cite{doi:10.1093/qmath/28.4.417}, that is,
the length of any induced path between two vertices equals their distance in~$G$.
These graphs are also characterized by a pruning sequence~\cite{Bandelt:1986:DG:8625.8626}: they can be dismantled by repeatedly removing either a pendant vertex or one vertex from a pair of twin vertices. 

The following propositions provide basic information on distance-hereditary graphs necessary for the next sections.


\begin{proposition}\label{prop:dhgCharacterization}  \cite{Bandelt:1986:DG:8625.8626,doi:10.1137/0217032} 
For a graph $G$, the following conditions are equivalent:
\setlist{nolistsep}
\begin{enumerate}[noitemsep, label=(\roman*)]
	\item $G$ is distance-hereditary;
	\item\label{byForbiddenSubgraphs} The house, domino, gem, and the cycles $C_k$ of length $k \geq 5$ are not induced subgraphs of $G$ (see Figure  \ref{fig:dhg});
	\item\label{byDownNeighbors} For an arbitrary vertex $x$ of $G$ and every pair of vertices $v,u \in N^k(x)$, that are connected in the same component of the graph $< V \setminus N^{k-1}(x) >$,
	we have $N(v) \cap N^{k-1}(x) = N(u) \cap N^{k-1}(x)$;
	\item\label{byFourPointCondition} (\FPC) For any four vertices $u,v,w,x$ of $G$ at least two of the following distance sums are equal: $d(u,v) + d(w,x)$, $d(u,w) + d(v,x)$, and $d(u,x) + d(w,v)$.
	If the smaller sums are equal, then the largest one exceeds the smaller ones by at most 2.
	\item\label{byPruningSequence}
	$G$ can be reduced to one vertex graph by a pruning sequence of one-vertex deletions: removing a pendant vertex or a single vertex from a pair of twin vertices.
\end{enumerate}
\end{proposition}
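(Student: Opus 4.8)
The plan is to prove Proposition~\ref{prop:dhgCharacterization} by a cycle of implications together with two standalone equivalences, essentially following Bandelt and Mulder~\cite{Bandelt:1986:DG:8625.8626} and D'Atri and Moscarini~\cite{doi:10.1137/0217032}. Concretely, I would establish $(i)\Rightarrow(ii)\Rightarrow(v)\Rightarrow(i)$ to tie together the definition, the forbidden-subgraph list, and the pruning sequence, and then prove $(i)\Leftrightarrow(iii)$ and $(i)\Leftrightarrow(iv)$ directly. The working form of $(i)$ is the one already recorded in the Preliminaries: $G$ is distance-hereditary iff every induced path of $G$ is a shortest path, equivalently every connected induced subgraph is isometric~\cite{doi:10.1093/qmath/28.4.417}.

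For $(i)\Rightarrow(ii)$ it suffices to exhibit in each of the house, domino, gem, and each $C_k$ with $k\ge 5$ an induced path whose length exceeds the distance between its endpoints; for instance, in $C_k=v_0v_1\cdots v_{k-1}v_0$ the subpath $v_0v_1\cdots v_{k-1}$ is induced of length $k-1>1=d(v_0,v_{k-1})$, and analogous witnesses are immediate for the other three graphs, so none of them can be an induced subgraph of a distance-hereditary graph. For $(v)\Rightarrow(i)$ I would induct on $|V|$: the first vertex $z$ removed by a pruning sequence is pendant or has a twin, $G-\{z\}$ inherits a pruning sequence and is distance-hereditary by induction, and reattaching $z$ preserves the property. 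Indeed a pendant vertex is an endpoint of any induced path using it and only extends shortest paths by one, while an induced path through a twin $z$ of $z'$ can be rerouted through $z'$ (whose neighborhood coincides with, or contains, that of $z$) into an induced path of $G-\{z\}$ of the same length, which is shortest by the inductive hypothesis; since the twin relation also shows $G-\{z\}$ is isometrically embedded, the original path is shortest in $G$.

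The crux is $(ii)\Rightarrow(v)$: from a graph $G$ containing none of the four forbidden configurations one must extract a pendant vertex or a pair of twins. I would fix an arbitrary vertex $x$, take the largest $k$ with $N^k(x)\ne\emptyset$, and analyze a connected component $A$ of $\langle N^k(x)\rangle$. Using the absence of induced cycles of length at least five and of the house, domino, and gem, one first shows that all vertices of $A$ have a common neighborhood in $N^{k-1}(x)$ (this is condition $(iii)$ restricted to the last layer, which itself must be deduced from the forbidden list via short-cycle and house/domino/gem arguments), and then that $A$ contains a vertex $z$ simplicial in $D(x,k)$; such a $z$ is pendant in $G$ or has a twin in $A$. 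This structural case analysis --- ruling out every obstruction to the existence of such a vertex using only the four forbidden subgraphs --- is the main obstacle and the most delicate part of the argument.

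Finally, $(i)\Leftrightarrow(iii)$ and $(i)\Leftrightarrow(iv)$ are handled separately. For $(iii)$: if some $v\in N^k(x)$ had a private down-neighbor not shared by a vertex $u\in N^k(x)$ lying in the same component of $\langle V\setminus N^{k-1}(x)\rangle$, splicing a shortest $u$--$v$ path of that component with single edges into $N^{k-1}(x)$ produces an induced path longer than the corresponding distance, giving $(i)\Rightarrow(iii)$; the converse follows by induction on the BFS layers, the common-down-neighbor property propagating isometry to every connected induced subgraph. For the four-point condition $(iv)$: $(i)\Rightarrow(iv)$ is a finite case analysis on the pairwise distances of four vertices using the layering structure (the $+2$ slack is already witnessed by the $4$-cycle), while $(iv)\Rightarrow(i)$ follows because the quadruple of vertices of any induced $C_k$ ($k\ge5$), house, domino, or gem violates the $+2$ inequality, so a graph satisfying $(iv)$ contains none of them and is distance-hereditary by the chain $(ii)\Rightarrow(v)\Rightarrow(i)$ established above.
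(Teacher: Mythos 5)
The paper offers no proof of this proposition at all: it is quoted from the cited literature (Bandelt--Mulder and D'Atri--Moscarini), so the only question is whether your outline would actually go through, and in two places it would not as written. The most concrete failure is your argument for $(iv)\Rightarrow(i)$ via $(ii)$: you claim that a quadruple of vertices inside any induced house, domino, gem or $C_k$ ($k\ge 5$) violates the four-point condition, but condition (iv) speaks about distances in $G$, and an induced subgraph need not be isometric. For the house, gem and $C_5$ this is harmless (all relevant pairs are adjacent or at distance exactly $2$ in $G$), but for the domino and for $C_k$ with $k\ge 6$ the long distances can be shortcut outside the subgraph. For example, in the domino with corner pairs $\{1,6\}$ and $\{3,4\}$ at domino-distance $3$, the only violating quadruples use these pairs; if $G$ has outside vertices making $d_G(1,6)=d_G(3,4)=2$, the three sums become $4,4,2$ and (iv) is satisfied on every quadruple of domino vertices. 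So $(iv)\Rightarrow$ ``no forbidden induced subgraphs'' cannot be read off the forbidden configuration alone; one must either bring the shortcut vertices into the analysis (e.g.\ argue on a shortest induced cycle of length $\ge 5$ and on minimal counterexamples) or prove $(iv)\Rightarrow(iii)$ or $(iv)\Rightarrow(i)$ directly by induction on distances, which is how the cited sources proceed.

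The second issue is that the step you yourself call the crux, $(ii)\Rightarrow(v)$, is only gestured at, and its stated intermediate claim is not correct: a vertex $z$ of a component $A$ of the last layer that is simplicial in $D(x,k)$ need not be ``pendant in $G$ or have a twin in $A$.'' Already in $C_4$, rooted at $x$ with neighbors $a,b$ and last layer $A=\{v\}$, the vertex $v$ is neither pendant nor twinned inside $A$; its twin is the root $x$. The true statement (every distance-hereditary graph on at least two vertices has a pendant vertex or a pair of twins, possibly spanning different layers) requires exactly the delicate case analysis with the house/domino/gem/long-cycle exclusions that you defer, so as a proof plan this implication is still open. The remaining pieces --- $(i)\Rightarrow(ii)$, $(v)\Rightarrow(i)$, and $(i)\Rightarrow(iii)$ --- are essentially right in spirit, though in $(i)\Rightarrow(iii)$ the spliced path $b,u,\dots,v,a$ is not automatically induced (chords from $a$ or $b$ into the connecting path must be handled by choosing extremal neighbors), and $(i)\Rightarrow(iv)$ is a genuine induction, not merely ``a finite case analysis.''
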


\begin{figure}
    [htb] 
     \vspace*{-.2cm}
    \centering
    \includegraphics[]{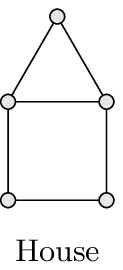}%
    \hspace*{1cm}%
    \includegraphics[]{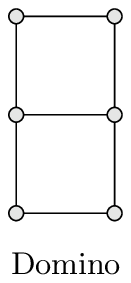}%
    \hspace*{1cm}%
    \includegraphics[]{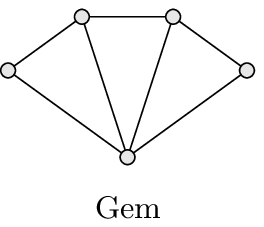}%
    \caption
    {
        Forbidden induced subgraphs in a distance-hereditary graph.
    }
    \label{fig:dhg} %
\end{figure}

\begin{proposition}\label{prop:dominatingClique}\cite{Dragan_1994}
Let $G$ be a distance-hereditary graph with $n$-tuple $(r(v_1), ..., r(v_n))$ of non-negative integers and $M \subseteq V$.
If every vertex pair $u,v \in M$ satisfies $d(u,v) \leq r(u) + r(v) + 1$ then $M$ has an $r$-dominating clique $C$.
If every vertex pair $u,v \in M$ satisfies $d(u,v) \leq r(u) + r(v)$ then there exists either a single vertex or a pair of adjacent vertices which $r$-dominates $M$.
\end{proposition}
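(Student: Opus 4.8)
The plan is to argue by induction on $n=|V|$, using the pruning-sequence characterization of distance-hereditary graphs (Proposition~\ref{prop:dhgCharacterization}). Let $x$ be the last vertex removed by such a sequence, so that $G'=G-\{x\}$ is again distance-hereditary and, in $G$, the vertex $x$ is either pendant (with unique neighbour $p$) or has a twin $y$ (true or false). The first, routine, step is to observe that $G'$ is an \emph{isometric} subgraph of $G$: deleting a pendant vertex changes no distance between surviving vertices, and any shortest path through a vertex having a twin can be rerouted through that twin. Hence the distance hypotheses of the statement are inherited by $G'$ for any subset of surviving vertices.

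If $x\notin M$, we are done at once: the induction hypothesis applied to $M$ in $G'$ returns an $r$-dominating clique (resp.\ an $r$-dominating vertex or edge), which is still a clique (resp.\ a vertex or edge) of $G$ and still $r$-dominates $M$ because no relevant distance has changed. So assume $x\in M$. The idea is to \emph{transfer the demand of $x$} onto a surviving neighbour: put $M'=(M\setminus\{x\})\cup\{x'\}$, where $x'$ is $p$ in the pendant case and the twin $y$ otherwise, and give it the radius $r'(x')=r(x)-1$ in the pendant case and $r'(x')=r(x)$ in the twin case, replaced by $\min\{r(x'),r'(x')\}$ if $x'$ already belonged to $M$; keep $r$ unchanged elsewhere. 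Using $d(p,z)=d(x,z)-1$ for the pendant case and $d(y,z)=d(x,z)$ for every $z\neq x$ in the twin case, one checks that $M'$ still satisfies the pairwise hypothesis in $G'$. By induction $M'$ has an $r'$-dominating clique $C$ in $G'$ (resp.\ an $r'$-dominating vertex or edge); it $r$-dominates $M\setminus\{x\}$ immediately, and it also $r$-dominates $x$ by the triangle inequality together with the pendant/twin distance identities, apart from the boundary situations addressed next.

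I expect the main obstacle to be precisely those boundary situations, where the transfer ``loses too much'': $r(x)=0$, so the radius cannot be decremented and $x$ must itself belong to the dominating clique, and the case of a \emph{false} twin $y$ with $r(x)\le 1$, where the detour through $y$ costs $2$ rather than $1$. These must be settled by directly editing the object returned by the induction rather than by a clean transfer. When $r(x)=0$ the hypothesis gives $d(x,z)\le r(z)+1$ for every $z\in M$, so for each $z$ either $x$ itself or the first interior vertex of a shortest $x$-to-$z$ path $r$-dominates $z$; since $N[x]=\{x,p\}$ is already a clique in the pendant case, and in the twin case the clique produced by induction lies inside $N[x]=N[y]$ once its $y$ is removed, one obtains the required clique by swapping $y$ out and $x$ in (or, for a true twin, simply adjoining $x$). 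The false-twin/small-radius case is handled in the same spirit: either the returned clique already avoids $y$, or $C\setminus\{y\}\subseteq N(y)=N(x)$ supplies a vertex adjacent to $x$, and if the clique has degenerated to $\{y\}$ one appends an arbitrary neighbour of $y$.

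The second statement follows from the very same induction carried out under the tighter hypothesis $d(u,v)\le r(u)+r(v)$: this is exactly the bookkeeping that keeps the dominating set to at most two vertices --- hence a clique on at most $2$ vertices --- throughout the recursion, so the slack responsible for genuinely larger cliques in the first part never appears, and the boundary cases collapse (when $r(x)=0$ the single vertex $x$ already $r$-dominates $M$). An alternative, more global route would instead pick a pair $x,y\in M$ maximizing $d(x,y)-r(x)-r(y)$ and build the clique on the central edge (or vertex) of a shortest $x$-to-$y$ path, using the \FPC{} (Proposition~\ref{prop:dhgCharacterization}) to verify that this clique meets every disk $D(z,r(z))$; there I would expect checking the ``meets every disk'' condition uniformly over $z$ to be the crux.
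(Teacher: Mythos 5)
This proposition is quoted by the paper from~\cite{Dragan_1994} and is not proved anywhere in the text, so there is no internal proof to compare against; your argument has to stand on its own, and in my reading it essentially does. Your induction along a pruning sequence (Proposition~\ref{prop:dhgCharacterization}(v)), transferring the demand of the eliminated vertex $x$ onto its pendant neighbour with radius $r(x)-1$, or onto its twin with radius $\min\{r(\cdot),r(x)\}$, preserves the pairwise hypothesis because $d(p,z)=d(x,z)-1$ and $d(y,z)=d(x,z)$ for surviving $z$, and the returned clique (or vertex/edge) then $r$-dominates $x$ except in exactly the boundary situations you single out. Your fixes for those are also sound: for a pendant $x$ with $r(x)=0$ the clique $\{x,p\}$ works since $d(p,z)=d(x,z)-1\le r(z)$; for a true twin with $r(x)=0$ one may adjoin $x$ because $C\setminus\{y\}\subseteq N(y)\setminus\{x\}\subseteq N(x)$; and for the second statement the case $r(x)=0$ indeed collapses since then $x$ alone $r$-dominates $M$, while the false-twin, $r(x)=1$ case at worst turns a single dominating vertex $y$ into the dominating edge $\{y,w\}$, $w\in N(y)=N(x)$, which is still of the admissible form. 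Two small points deserve tightening rather than signalling a gap. First, in the false-twin discussion you write $N[x]=N[y]$ where you mean $N(x)=N(y)$. Second, in the first statement the corner ``false twin, $r(x)=0$, $y\in M$, and the clique returned by induction is exactly $\{y\}$'' needs the swap and the appended neighbour simultaneously: take any $w\in N(y)=N(x)$ (nonempty by connectivity) and output the clique $\{x,w\}$; it dominates $x$ trivially, dominates $y$ because the pairwise hypothesis $d(x,y)=2\le r(y)+1$ forces $r(y)\ge 1$, and dominates every other $z\in M$ because $d(x,z)=d(y,z)\le r(z)$. With these spelled out, your induction is a complete, self-contained proof that uses only the pruning-sequence characterization, which is arguably more elementary than invoking the metric machinery of the original reference; your sketched ``global'' alternative (building the clique near the middle of a pair maximizing $d(x,y)-r(x)-r(y)$ and verifying domination via the four-point condition) is closer in spirit to how such results are usually derived in the literature, but you do not need it.
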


\begin{proposition}\label{prop:eccentricityOfFurthestWrtRadius}\cite{Dragan_1994}
{For every vertex~$v$ of a distance-hereditary graph~$G$, a furthest from~$v$ vertex~$u \in F(v)$ satisfies $e(u) \ge 2rad(G) - 3$.}
\end{proposition}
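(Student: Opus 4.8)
\emph{Reduction to an extremal configuration.} The statement is immediate when $rad(G)\le 3$, since $e(u)\ge rad(G)\ge 2\,rad(G)-3$ always holds; so I would assume $rad(G)\ge 4$ and keep in mind the known inequality $diam(G)\ge 2\,rad(G)-2$. The plan is to argue by contradiction, supposing $e(u)\le 2\,rad(G)-4$. Since $u\in F(v)$, every vertex $z$ satisfies $d(v,z)\le e(v)=d(v,u)\le e(u)\le 2\,rad(G)-4$, so in particular $rad(G)\le e(v)\le 2\,rad(G)-4$. Fix a diametral pair $\{x,y\}$ and apply the four-point condition (Proposition~\ref{prop:dhgCharacterization}, condition~\ref{byFourPointCondition}) to $u,v,x,y$. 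Using $d(u,x),d(u,y)\le e(u)$ and $d(v,x),d(v,y)\le e(v)$, the sum $d(u,v)+d(x,y)=e(v)+diam(G)$ cannot be at most either of the other two sums, as that would give $e(u)\ge diam(G)\ge 2\,rad(G)-2$, contradicting $e(u)\le 2\,rad(G)-4$. Hence it is the strict maximum; so the two smaller sums are equal and the largest exceeds them by at most $2$. Unpacking $e(v)+diam(G)\le d(u,x)+d(v,y)+2$ together with $d(v,y)\le e(v)$ forces $d(u,x)\ge diam(G)-2\ge 2\,rad(G)-4\ge e(u)\ge d(u,x)$, hence $d(u,x)=e(u)=2\,rad(G)-4$, $diam(G)=2\,rad(G)-2$, and $d(v,y)=e(v)$; symmetrically $d(u,y)=e(u)$ and $d(v,x)=e(v)$. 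Thus $x,y\in F(u)\cap F(v)$, with $d(u,v)=e(v)$ and $d(x,y)=2\,rad(G)-2$.

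\emph{Squeezing around the midpoint.} Next I would let $m$ be the midpoint of a shortest $x$--$y$ path, so $d(m,x)=d(m,y)=rad(G)-1$. Applying the four-point condition to $m,x,y,v$ and to $m,x,y,u$ — in each of these two quadruples two of the three sums already coincide, so the third exceeds them by at most $2$ — yields $d(m,u)\le rad(G)-1$ and $d(m,v)\le e(v)-rad(G)+3\le rad(G)-1$ (the last inequality because $e(v)\le 2\,rad(G)-4$). Hence $m$ lies within distance $rad(G)-1$ of each of $x,y,u,v$, so any $z\in F(m)$ (which has $d(m,z)=e(m)\ge rad(G)$) lies outside $\{x,y,u,v\}$. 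A further four-point application to $m,z,x,u$ and to $m,z,y,u$ then pins the configuration down even further: it gives $e(m)\le rad(G)+1$ and $d(z,x)=d(z,y)=d(z,u)+2$ with $d(z,u)\in\{2\,rad(G)-5,\,2\,rad(G)-4\}$.

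\emph{The contradiction and the main obstacle.} What remains is to turn this highly rigid configuration into an actual impossibility. Two natural routes: (a) feed it into the layer characterization (Proposition~\ref{prop:dhgCharacterization}, condition~\ref{byDownNeighbors}) for a BFS rooted at $x$ — in which $u$ sits in layer $diam(G)-2$, $y$ in the last layer, and $m$ near layer $rad(G)-1$ — and extract a house, domino, gem, or long induced cycle; or (b) locate a second near-central vertex and contradict $diam(C(G))\le 3$. Route (b) already disposes of the base case $rad(G)=4$, where $e(v)=rad(G)=4=e(u)$ makes $u$ and $v$ two central vertices at distance $4>3$. I expect this last step to be the main obstacle: the four-point condition only determines distances up to an additive slack of $2$, and removing that slack to obtain a genuine contradiction for all $rad(G)\ge 4$ requires carefully exploiting the rigidity of distance-hereditary layerings — equal-distance vertices lying in one connected component of $G-N^{k-1}(x)$ have identical neighbourhoods inside $N^{k-1}(x)$ — together with parity bookkeeping on the shortest $x$--$y$ path and on the geodesics joining $u,v,z$ to $m$.
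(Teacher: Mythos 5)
The paper does not actually prove this proposition; it imports it from~\cite{Dragan_1994}, so the only question is whether your argument stands on its own. As written it does not: there is a genuine gap, and it sits exactly where the real content of the statement lies. Your reductions are correct but essentially repackage facts that are already available (every $u\in F(v)$ has $e(u)\ge diam(G)-2$, and $diam(G)\ge 2rad(G)-2$): assuming $e(u)\le 2rad(G)-4$, the four-point computation correctly forces $diam(G)=2rad(G)-2$, $e(u)=d(u,x)=d(u,y)=2rad(G)-4$, $d(v,x)=d(v,y)=d(u,v)=e(v)$, and the midpoint bounds $d(m,u),\,d(m,v)\le rad(G)-1$ and $e(m)\le rad(G)+1$ do follow. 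But the improvement from $2rad(G)-4$ to $2rad(G)-3$ is precisely the claim that this rigid configuration is impossible, and that is the step you explicitly leave open (``the main obstacle''), offering only two candidate routes --- extracting a forbidden subgraph via Proposition~\ref{prop:dhgCharacterization}\ref{byDownNeighbors}/\ref{byForbiddenSubgraphs}, or contradicting $diam(C(G))\le 3$ --- without carrying either one out. Some of the further ``pinned down'' claims (e.g.\ $d(z,x)=d(z,y)=d(z,u)+2$ with $d(z,u)\in\{2rad(G)-5,2rad(G)-4\}$) are asserted rather than derived, but even granting them no contradiction is reached.

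The one case you do close, $rad(G)=4$, works only because there $e(u)=e(v)=rad(G)$, so $u,v$ are both central and $d(u,v)=4>3\ge diam(C(G))$ (an imported result). For $rad(G)\ge 5$ the vertices $u$ and $v$ are far from central ($e(u)=2rad(G)-4>rad(G)$), so that route does not extend, and the proposal gives no argument ruling out the extremal configuration in general. Until that impossibility argument is supplied --- presumably by exploiting the layering rigidity or a dominating-clique argument as in~\cite{Dragan_1994} --- the proposal is an honest reduction plus a plan, not a proof.
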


\begin{proposition}\label{prop:slicesAreJoined}
Let $G$ be a distance-hereditary graph and $x,y \in V$.
Any vertex $v \in S_k(x,y)$ has $S_{k+1}(x,y) \subseteq N(v)$, i.e., neighboring interval slices are joined. 
\end{proposition}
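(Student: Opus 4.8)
The plan is to apply the \FPC{} from Proposition~\ref{prop:dhgCharacterization} directly to the four vertices $x$, $y$, and an arbitrary pair $v\in S_k(x,y)$, $u\in S_{k+1}(x,y)$; the statement then follows with essentially no case analysis. First I would record the pairwise distances forced by the definition of interval slices. Writing $D=d(x,y)$, membership in $S_k(x,y)$ and $S_{k+1}(x,y)$ gives $d(x,v)=k$, $d(v,y)=D-k$, $d(x,u)=k+1$, and $d(u,y)=D-k-1$, while $d(x,y)=D$; the distance $d(u,v)$ is the only undetermined quantity. Since $v$ and $u$ lie at different distances from $x$ they are distinct, so $d(u,v)\ge 1$.

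Next I would compare the three distance sums appearing in the \FPC{}: namely $d(x,y)+d(u,v)=D+d(u,v)$, $d(x,u)+d(v,y)=(k+1)+(D-k)=D+1$, and $d(x,v)+d(u,y)=k+(D-k-1)=D-1$. The latter two are unequal, so the requirement that at least two of the three sums coincide forces $D+d(u,v)$ to equal $D+1$ or $D-1$; as $d(u,v)\ge 1$, the only possibility is $d(u,v)=1$. Hence $u\in N(v)$, and since $v\in S_k(x,y)$ and $u\in S_{k+1}(x,y)$ were arbitrary, $S_{k+1}(x,y)\subseteq N(v)$ for every $v\in S_k(x,y)$, which is exactly the assertion that consecutive interval slices are joined.

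I do not expect any genuine obstacle here: the only point that needs a moment's care is the observation $d(u,v)\ge 1$, which discards the spurious value $d(u,v)=-1$ and leaves $d(u,v)=1$ as the one value consistent with the condition. One could alternatively argue via characterization~\ref{byDownNeighbors} of Proposition~\ref{prop:dhgCharacterization}, using a shortest path from $x$ to $y$ through $u$ and the vertex of $S_k(x,y)$ it passes through, but invoking the \FPC{} is the most direct route.
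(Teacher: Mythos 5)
Your proof is correct, but it takes a genuinely different route from the paper. You invoke the \FPC{} (Proposition~\ref{prop:dhgCharacterization}\ref{byFourPointCondition}) directly on $x,y,v,u$: with $d(x,v)=k$, $d(v,y)=D-k$, $d(x,u)=k+1$, $d(u,y)=D-k-1$, the two ``cross'' sums are $D-1$ and $D+1$, so the sum $D+d(u,v)$ must coincide with one of them, and since $u\neq v$ (they are at different distances from $x$) this forces $d(u,v)=1$. That is a clean, purely metric argument, and it needs no connectivity considerations. The paper instead uses characterization~\ref{byDownNeighbors}: it fixes a neighbor $u\in S_{k+1}(x,y)\cap N(v)$ of $v$ (the successor of $v$ on a shortest $(x,y)$-path through $v$), observes that $u$ and an arbitrary $w\in S_{k+1}(x,y)$ are joined by shortest paths to $y$ avoiding the first $k$ levels around $x$, hence have the same neighbors in $N^k(x)$, and concludes $w\in N(v)$ — exactly the alternative you sketch in your last sentence. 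What your version buys is brevity and no need to exhibit a witness neighbor or argue about components of $\langle V\setminus N^{k}(x)\rangle$; what the paper's version buys is consistency with the layering machinery (condition~\ref{byDownNeighbors}) that it reuses repeatedly in later proofs. The only points needing the care you already gave them are $d(u,v)\ge 1$ and the harmless degenerate cases (e.g., $v=x$ or $u=y$), where the conclusion holds trivially.
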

\begin{proof}
Consider a vertex $u \in S_{k+1}(x,y) \cap N(v)$ and any other vertex $w \in S_{k+1}(x,y)$.
Then $u$ and $w$ are connected in $< V \setminus N^{k+1}(x) >$ via shortest paths $P(u,y)$ and $P(w,y)$. By Proposition~\ref{prop:dhgCharacterization}\ref{byDownNeighbors}, they share neighboring vertices in $S_k(x,y)$. Hence, $w \in N(v)$.
\end{proof}

Unless otherwise stated, the graph $G$ is assumed to be distance-hereditary in all subsequent results.

\section{Unimodality of the eccentricity function}\label{sec:unimodality}
Recall that the eccentricity function is unimodal in $G$ if every non-central vertex~$v$ of $G$ has a neighbor~$u$ such that $\ecc(u) < \ecc(v)$. 
Our main result of this section is that in distance-hereditary graphs any vertex with sufficiently large eccentricity does have a neighbor with strictly smaller eccentricity.
Unimodality can break only at vertices with eccentricity equal to $rad(G)+1$, but those vertices are close (within 2) to the center $C(G)$.
Moreover, this can only occur when  $diam(G)=2rad(G)$.
This property of the eccentricity function of distance-hereditary graphs aligns with known results for other graph classes, such as chordal graphs and the underlying graphs of 7-systolic complexes~\cite{Dragan2017EccentricityAT}.

Our proof will be based on the following two lemmas. 
\begin{lemma}\label{lem:twoFromCenter}
If a vertex~$x$ of $G$ has $e(x) = rad(G) + 1$, then $d(x,C(G)) \leq 2$.
\end{lemma}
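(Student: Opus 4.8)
The plan is to dispose of the case $rad(G)\le 2$ at once and then, for $rad(G)\ge 3$, exhibit a central vertex within distance~$2$ of $x$ using the second part of Proposition~\ref{prop:dominatingClique} together with the \FPC{} (Proposition~\ref{prop:dhgCharacterization}\ref{byFourPointCondition}). If $rad(G)\le 2$ we are done immediately, since for any $c\in C(G)$ we have $d(x,c)\le e(c)=rad(G)\le 2$. So assume $r:=rad(G)\ge 3$.

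First I would invoke the second part of Proposition~\ref{prop:dominatingClique} for $M=V$ with the weights $r(x)=1$ and $r(w)=r$ for every $w\neq x$. The required inequality $d(p,q)\le r(p)+r(q)$ holds for every pair: for $p,q\neq x$ it is $d(p,q)\le 2r$, which follows from $diam(G)\le 2\,rad(G)$, and for a pair $\{x,w\}$ it reads $d(x,w)\le r+1=e(x)$. Thus there is a single vertex $s$, or an edge $s_1s_2$, that $r$-dominates $V$ for these weights; in particular it lies within distance~$1$ of $x$. In the single-vertex case, $d(s,w)\le r$ for all $w\neq x$ and $d(s,x)\le 1\le r$, so $e(s)\le r$, i.e.\ $s\in C(G)$, and $d(x,s)\le 1$: done.

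It remains to handle a dominating edge $s_1s_2$, say with $d(x,s_1)\le 1$; by the choice of weights, $D(s_1,r)\cup D(s_2,r)=V$. If $s_1$ or $s_2$ is central we are done, so assume $e(s_1)=e(s_2)=r+1$. Choose $a\in F(s_1)$ and $b\in F(s_2)$. Since $d(s_1,a)=r+1>r$, the equality $D(s_1,r)\cup D(s_2,r)=V$ forces $d(s_2,a)=r$, and symmetrically $d(s_1,b)=r$; applying the \FPC{} to $\{s_1,s_2,a,b\}$ and using $d(a,b)\le 2r$ then pins $d(a,b)=2r-1$, so $\{a,b\}$ is a near-diametral pair (consistent with Proposition~\ref{prop:eccentricityOfFurthestWrtRadius}). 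I would then study a shortest $(a,b)$-path together with its interval slices $S_i(a,b)$, which are pairwise joined by Proposition~\ref{prop:slicesAreJoined}, identify a slice whose vertices have eccentricity exactly $r$, and argue that such a vertex must lie in $N[s_1]\cup N[s_2]$ — hence within distance~$2$ of $x$.

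The main obstacle is exactly this last step: a soft hyperbolicity-type argument only places a central vertex within distance~$3$ of $x$, and shaving this to~$2$ requires pinning down precisely which middle slice of the near-diametral $(a,b)$-path is central and showing a central vertex is adjacent to (or equal to) $s_1$ or $s_2$. I expect this to come down to a short case analysis on the value of $diam(G)-2\,rad(G)\in\{0,1,2\}$ and on the parities of the distances involved, carried out by further applications of the \FPC{} to configurations among $x$, $s_1$, $s_2$, $a$, $b$, and their furthest vertices.
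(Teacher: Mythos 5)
There is a genuine gap, and it sits exactly where the lemma's difficulty lies. Your reduction is fine up to a point: applying the second part of Proposition~\ref{prop:dominatingClique} with $r(x)=1$, $r(w)=rad(G)$ is legitimate (the pair conditions do hold), the single-vertex case is correct, and in the edge case your deductions $e(s_1)=e(s_2)=rad(G)+1$, $d(s_2,a)=d(s_1,b)=rad(G)$ and, via the \FPC{}, $d(a,b)=2rad(G)-1$ all check out. But from there the proof stops: the step that would actually produce a central vertex near $x$ is only announced (``I would then study \dots identify a slice whose vertices have eccentricity exactly $r$ \dots''), and it is unsupported. The pair $a,b$ is not known to be diametral or mutually distant, so nothing so far forces any slice of $I(a,b)$ to consist of central vertices (middle vertices of a path of length $2rad(G)-1$ can have eccentricity $rad(G)+1$), and no argument is given that a central vertex would be adjacent to $s_1$ or $s_2$. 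Moreover, even the target you state is off by one: you only know $d(x,s_1)\le 1$, so a central vertex in $N[s_2]$ gives $d(x,C(G))\le 3$, not $\le 2$; to conclude you would need a central vertex in $N[s_1]$ (or $x\in\{s_1,s_2\}$), which is precisely the ``$3$ versus $2$'' issue you acknowledge but do not resolve.

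For comparison, the paper's proof also finds $d(x,C(G))\le 3$ cheaply (one application of the \FPC{} to $u,x,v,c$ with $c$ a closest central vertex, $v\in S_1(c,x)$, $u\in F(v)$); the real work is excluding $d(x,c)=3$, which it does by a careful contradiction argument combining further \FPC{} applications with Proposition~\ref{prop:dhgCharacterization}\ref{byDownNeighbors} and the facts that induced paths are shortest and $C_k$, $k\ge 5$, is forbidden. Your domination-based setup is a genuinely different and reasonable opening, but as written it delivers only the easy half of the lemma; the hard half would still require an argument of the same flavor as the paper's second part, and that argument is missing.
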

\begin{proof}
Let $c$ be a central vertex closest to~$x$.
Consider any vertex~$v \in S_1(c,x)$ and vertex~$u \in F(v)$ furthest from~$v$.
As~$v$ is not central, $d(v,u) = rad(G) + 1$ and, by distance requirements, $d(c,u) = \ecc(c) = rad(G)$.
Hence, $u \in F(c) \cap F(v)$.

First we claim that $d(c,x) \leq 3$.
Since $c \in I(v,u)$ and $v \in I(c,x)$, we have $d(u,v) + d(c,x) = d(u,c) + d(v,x) + 2$. 
Consider the \FPC{} on vertices $u,x,v,c$.
As two distance sums must be equal, then either $d(u,x) + d(c,v) = d(u,v) + d(c,x)$ or $d(u,x) + d(c,v) = d(u,c) + d(v,x)$.
We have $d(u,x) + d(c,v) \leq \ecc(x) + 1 = rad(G) + 2$. 
In the first case we get $d(u,x) + d(c,v) = d(u,v) + d(c,x) = d(u,c) + d(v,x) + 2 = rad(G) + d(v,x) + 2$. Hence, $d(v,x) \leq 0$ and, by the triangle inequality, $d(x,C(G)) \leq 1$.  
In the second case we get $d(u,x) + d(c,v) = d(u,c) + d(v,x) = rad(G) + d(v,x)$. Hence, $d(v,x) \leq 2$ and, by the triangle inequality, $d(x,C(G)) \leq 3$, establishing the claim.

Assume now that $d(c,x) = 3$ and consider $y \in S_1(v,x)$. 
We next claim that $\ecc(y) = rad(G) + 1$.
By the choice of~$c$, vertex~$y$ is non-central and so $e(y) \geq rad(G) + 1$.
Since $y \in N(v) \cap N(x)$ with $\ecc(v) = \ecc(x) = rad(G) + 1$,  by distance requirements, $\ecc(y) \leq rad(G) + 2$.
By way of contradiction assume that $\ecc(y) = rad(G) + 2$.
Consider a furthest vertex~$y^* \in F(y)$. 
By distance requirements, $d(v,y^*) = d(x,y^*) = rad(G) + 1$ and $d(c,y^*) = rad(G)$.
Since there is a $(v,x)$-path via vertex~$y$ in $<V \setminus N^{rad(G)}(y^*)>$, by Proposition~\ref{prop:dhgCharacterization}\ref{byDownNeighbors}, the neighbors of~$v$ and~$x$ in $N^{rad(G)}(y^*)$ are shared.
Therefore, $cx \in E$, contradicting with $d(c,x) = 3$. Thus,  $\ecc(y) = rad(G) + 1$ must hold.

We now obtain a general contradiction in two steps. Recall that $\ecc(y) = \ecc(v)=\ecc(x)=rad(G) + 1$ and $\ecc(c)=rad(G)$. 
First, consider the \FPC{} on vertices $y,y^*,c,v$.
Consider three sums: $d(y,y^*) + d(c,v) = rad(G) + 2$, $d(y^*,c) + d(y,v) \leq rad(G) + 1$, and $d(y^*,v) + d(c,y) \leq rad(G) + 3$.
Clearly, the first and the second sums are not equal.
If the second and the third sums are equal, then $d(y^*,v) = d(y^*,c) + d(y,v) - d(c,y) \leq rad(G) - 1$, contradicting with $d(y^*,y) = rad(G) + 1$.
Therefore, the first and the third sums are equal.
Then $d(y^*,v) = d(y,y^*) + d(c,v) - d(c,y) = rad(G)$. Let $P(y^*,v)$ be any shortest path between $y^*$ and $v$. Its length is $rad(G)$. Consider also the path $Q=P(y^*,v),y,x$ (extension of $P(y^*,v)$ that includes also $y$ and $x$). In distance-hereditary graphs, every induced path is a shortest path. As $d(x,y^*) \leq \ecc(x) = rad(G) + 1$, $Q$ cannot be induced.  As $d(y,y^*)=rad(G)+1$, vertex~$x$ must be adjacent to some vertex on $P(y^*,v)$. To avoid large induced cycles $C_k$ of length $k \geq 5$, $x$ must be adjacent to a vertex~$z \in P(y^*,v)$ which is a neighbor of $v$. 
Thus, $d(y^*,x) = rad(G)$.
Necessarily, $zc \notin E$  since $d(x,c)=3$.
We also have that $d(y^*,c) \leq \ecc(c) = rad(G)$.

Next, consider the \FPC{} on vertices $y^*,c,x,v$.
We have $d(y^*,c) + d(x,v) \leq rad(G) + 2$, $d(y^*,v) + d(c,x) = rad(G) + 3$, and $d(y^*,x) + d(c,v) = rad(G) + 1$.
Since at least two sums must be equal, necessarily $d(y^*,c) = d(y^*,x) + d(c,v) -  d(x,v) = rad(G) - 1$.
Now all distances from~$y^*$ are known. 
We have $d(c,y^*) = d(z,y^*) = rad(G) - 1$, $d(v,y^*) = d(x,y^*) = rad(G)$, and $d(y,y^*) = rad(G) + 1$.
Since there is a $(v,x)$-path in $<V \setminus N^{rad(G)-1}(y^*)>$, by  Proposition~\ref{prop:dhgCharacterization}\ref{byDownNeighbors}, the neighbors of~$v$ and~$x$ in $N^{rad(G)-1}(y^*)$ are shared.
Therefore, vertices $x$ and $c$ must be adjacent, contradicting with $d(x,c)=3$. 

Obtained contradiction proves the lemma.  
\end{proof}

\begin{lemma}\label{lem:locality}
If there is a non-central vertex~$v$ of $G$ such that each vertex $w \in N(v)$ has $\ecc(w) \ge \ecc(v)$,
then $\ecc(v) = rad(G) + 1$ and $diam(G) = 2rad(G)$.
\end{lemma}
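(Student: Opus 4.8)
The plan is to argue by contradiction, assuming $v$ is a non-central vertex all of whose neighbors have eccentricity at least $e(v)$, and to derive that $e(v)$ cannot exceed $rad(G)+1$ and that $diam(G)=2rad(G)$ must hold. First I would set $r=rad(G)$ and pick a central vertex $c$, together with a vertex $u\in F(v)$ furthest from $v$; so $d(v,u)=e(v)\ge r+1$. The key structural fact to exploit is Proposition~\ref{prop:slicesAreJoined}: consecutive slices of an interval $I(c,v)$ (or $I(u,v)$) are joined, so the neighbor of $v$ lying one step toward $c$ (respectively toward $u$) ``controls'' the whole slice. Since that neighbor $w\in S_1(c,v)$ (or $S_1(u,v)$, say $w'$) satisfies $e(w)\ge e(v)$ by hypothesis, I would look at a vertex furthest from $w$ and use the triangle inequality and the distance constraints $d(c,\cdot)\le r$ to pin down exactly how far $w$'s furthest vertex can be from $c$, from $v$, and from $u$. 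The aim is to show that $e(w)$ is forced to equal $e(v)$ and that a furthest vertex from $w$ is also furthest from $v$, which then lets me iterate the argument along a shortest $(v,c)$-path, analogously to the chordal-graph proof referenced from~\cite{Dragan2017EccentricityAT}.

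Next I would deploy the \FPC{} (Proposition~\ref{prop:dhgCharacterization}\ref{byFourPointCondition}) on quadruples built from $\{u,v,c,w\}$ and their furthest vertices, exactly in the style of the proof of Lemma~\ref{lem:twoFromCenter}. The typical move is: two of the three distance sums must be equal, one of the three candidate equalities is immediately ruled out by a known distance, so the surviving equality yields a sharp value for an unknown distance; and when the two smaller sums coincide the third exceeds them by at most $2$, which is what caps $e(v)$ at $r+1$ rather than something larger. I expect that combining the slice-joining property (to get adjacencies, hence forbidden-cycle arguments excluding long induced paths) with these \FPC{} case analyses will force $e(v)\le r+1$; combined with non-centrality of $v$ this gives $e(v)=r+1$.

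To obtain $diam(G)=2rad(G)$, I would take $u\in F(v)$ with $d(v,u)=r+1$ and now analyze the neighbor $w'\in S_1(u,v)$, which by hypothesis has $e(w')\ge e(v)=r+1$. A vertex $u'\in F(w')$ furthest from $w'$ has $d(w',u')\ge r+1$, hence $d(v,u')\ge r$ (triangle inequality) and in fact I would argue $d(v,u')=r+1$ is impossible unless it creates the desired large diameter, while $d(v,u')=r$ together with $d(w',u')=r+1$ and $w'\in I(v,u)$ should, via the \FPC{} on $\{u,u',v,w'\}$ and Proposition~\ref{prop:slicesAreJoined}, force a vertex at distance $2r$ from somewhere — concretely, that $u$ and $u'$ are at distance about $2r$, or that some vertex realizes eccentricity $2r$. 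Since $diam(G)\ge 2rad(G)-2$ always and a construction of a vertex at distance $\ge 2r-1$ of even parity upgrades (by the LexBFS diameter result quoted in the introduction, or directly) to $2r$, this yields $diam(G)=2r$.

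The main obstacle, I expect, is the bookkeeping in the second half: showing that the neighbor of $v$ toward $u$ having eccentricity $\ge r+1$ genuinely forces $diam(G)=2r$ rather than merely $2r-1$ or $2r-2$. This is where I would have to be careful about parities and about whether a furthest vertex from $w'$ can ``hide'' at distance $r$ from $v$ on the same side as $u$; ruling that out will require invoking distance-heredity — specifically Proposition~\ref{prop:dhgCharacterization}\ref{byDownNeighbors} to say that $v$ and $u$ (being joined-slice-neighbors of $w'$) share their neighbors in the sphere $N^{\,r}(u')$, which then collapses $d(v,u')$ and $d(u,u')$ into a single tight relation. The first half (bounding $e(v)$ by $r+1$) I expect to be a relatively direct adaptation of Lemma~\ref{lem:twoFromCenter}'s technique and should go through with the slice-joining lemma plus one or two \FPC{} invocations.
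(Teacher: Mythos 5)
There is a genuine gap, and it sits exactly where you flagged your "main obstacle": the step from a pair at distance $2rad(G)-1$ to $diam(G)=2rad(G)$. Your proposed upgrade --- "a vertex at distance $\ge 2rad(G)-1$ of even parity upgrades (by the LexBFS diameter result \ldots) to $2rad(G)$" --- is not valid. The LexBFS statement applies only to the vertex last visited by a LexBFS, not to an arbitrary vertex found mid-argument, and in general a distance-hereditary graph can perfectly well have $diam(G)=2rad(G)-1$ (an even path already does), so exhibiting a pair at distance $2rad(G)-1$ proves nothing further; one must exploit the hypothesis on $v$ itself. Indeed, your own \FPC{} computation on $\{u,u',v,w'\}$ with $d(v,u')=rad(G)$, $d(w',u')=rad(G)+1$ forces only $d(u,u')=2rad(G)-1$, which is precisely the stuck case. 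The paper escapes it with an idea absent from your plan: choose the neighbor $x\in S_1(v,u)$ \emph{minimizing} $|F(x)|$. Then, when the non-induced path $P(u,v)\cup P(v,y)$ produces a chord and hence a second neighbor $w\in S_1(v,u)$ with $y\in F(x)\setminus F(w)$, minimality yields a replacement vertex $t\in F(w)\setminus F(x)$, and a second application of the \FPC{} to $x,y,w,t$ (using that the extreme sums differ by at least $3$) forces $d(t,y)\ge 2rad(G)$. Without this minimality/replacement device (or a substitute for it), your argument cannot rule out $diam(G)=2rad(G)-1$.

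The first half of your proposal also stops short of a proof. The paper's bound $\ecc(v)\le rad(G)+1$ does not go through a central vertex at all: it takes $u\in F(v)$, the neighbor $x\in S_1(v,u)$ (to which the hypothesis $\ecc(x)\ge\ecc(v)$ applies), a vertex $y\in F(x)$, notes $d(v,y)\le d(x,y)$, and runs one \FPC{} on $u,v,x,y$, where the largest sum is at least $2rad(G)+2$ while the others are capped, forcing the two smaller sums equal and $\ecc(v)\le rad(G)+3/2$. Your sketch via the neighbor toward $c$ and "iterating along a shortest $(v,c)$-path" is not licensed by the hypothesis, which constrains only $N(v)$: after one step you have no lower bound on the eccentricities of the next vertex's neighbors, so the analogy with the chordal-graph unimodality argument does not transfer as stated. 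As written, both halves are plans with the decisive inequalities still to be found, and the second half's central mechanism is incorrect.
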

\begin{proof}
Consider a vertex $u \in F(v)$, a vertex $x \in S_1(v,u)$ with minimal $|F(x)|$, and let $y \in F(x)$.
By assumption, $\ecc(x) \geq \ecc(v)$ and $v$ is non-central.
If $d(v,y) = d(x,y) + 1$, then $\ecc(v) \geq d(v,y) = d(x,y) + 1 = \ecc(x) + 1 \geq \ecc(v) + 1$, a contradiction.
Thus, $d(x,y) - 1 \leq d(v,y) \leq d(x,y)$.

Consider the \FPC{} on vertices $u,v,x,y$.
We have  $d(u,v) + d(x,y) = \ecc(v) + \ecc(x) \geq 2rad(G) + 2$, $d(u,y) + d(x,v) \leq 2rad(G) + 1$, and  $d(u,x) + d(v,y) = \ecc(v) - 1 + d(v,y) \leq \ecc(v) + \ecc(x) - 1$.
Since two sums must be equal, necessarily $d(u,y) + d(x,v) = d(u,x) + d(v,y)$.
Hence $2rad(G) + 1 \geq d(u,y) + d(x,v) = d(u,x) + d(v,y) \geq (\ecc(v) - 1) + (\ecc(x) - 1) \geq 2\ecc(v) - 2 \geq 2rad(G)$.
Therefore, $2rad(G) \geq d(u,y) \geq 2rad(G) - 1$ and $\ecc(v) \leq rad(G) + 3/2$.
Since eccentricity is an integer and~$v$ is non-central, $\ecc(v) = rad(G) + 1$ must hold.

It remains only to show that $diam(G) = 2rad(G)$.
If $d(u,y) = 2rad(G)$, we are done. So, assume that $d(u,y) = 2rad(G) - 1$. 
We get $d(y,v) = d(u,y) + d(x,v) - d(u,x) = rad(G)$, and so $\ecc(x) = rad(G) + 1$. Furthermore, $v\in I(x,y)$.
The length of path $Q=P(u,v) \cup P(v,y)$ (the concatenation of $P(u,v)$ with $P(v,y)$) is $2rad(G)+1$. As $d(u,y) = 2rad(G) - 1$, $Q$ is not an induced path. Hence, there are vertices $s \in P(v,u)$ and $w \in P(v,y)$ such that $sw \in E$. 
To avoid large induced cycles $C_k$ of length $k \geq 5$,  necessarily $s \in S_1(x,u)$ and $w \in S_1(v,y)$ must hold.
Then, $w$ belongs to $S_1(v,u)$ as well as $x$. Since  $y \in F(x) \setminus F(w)$ (note that $\ecc(w)\ge \ecc(v)=rad(G)+1$ by assumption),
by minimality of $|F(x)|$, there is a vertex $t \in F(w) \setminus F(x)$. 
Hence, $d(t,x) < rad(G) + 1$ and $d(t,w) = \ecc(w) \geq rad(G) + 1$. 

Now consider the \FPC{} on vertices $x,y,w,t$.
We have $d(x,y) + d(w,t) = \ecc(x) + \ecc(w) \geq 2rad(G) + 2$, whereas $d(t,y) + d(w,x) \leq 2rad(G) + 2$ and $d(x,t) + d(w,y) \leq 2rad(G) - 1$.
As $d(x,y) + d(w,t) - d(x,t) - d(w,y) \geq 3$, then only the two largest sums are equal: $d(x,y) + d(w,t) = d(t,y) + d(w,x)$.
Hence, $diam(G) \geq d(t,y) = d(x,y) + d(w,t) - d(w,x) \ge  2rad(G)$. 
That is, $diam(G)=d(t,y)=2rad(G)$. \end{proof}

We are ready to prove the main result of this section. 

\begin{theorem}\label{thm:localityAndCloseToCenter}
Every vertex~$v \notin C(G)$ either has an adjacent vertex~$w$ with $\ecc(w) < \ecc(v)$ or
$\ecc(v) = rad(G) + 1$, $diam(G) = 2rad(G)$, and $d(v,C(G)) = 2$.
\end{theorem}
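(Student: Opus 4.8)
The plan is to obtain the theorem as an immediate consequence of Lemma~\ref{lem:locality} and Lemma~\ref{lem:twoFromCenter}, with only a one-line observation needed to sharpen the bound on the distance to the center. I would argue by assuming the negation of the first alternative: let $v \notin C(G)$ and suppose that \emph{every} neighbor $w \in N(v)$ satisfies $\ecc(w) \ge \ecc(v)$. Then $v$ is a non-central vertex satisfying exactly the hypothesis of Lemma~\ref{lem:locality}, so that lemma hands us two of the three required conclusions at once: $\ecc(v) = rad(G) + 1$ and $diam(G) = 2rad(G)$.

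The remaining task is to establish $d(v, C(G)) = 2$. Since $\ecc(v) = rad(G) + 1$, Lemma~\ref{lem:twoFromCenter} gives $d(v, C(G)) \le 2$, and since $v \notin C(G)$ we trivially have $d(v, C(G)) \ge 1$. Thus the only case left to exclude is $d(v, C(G)) = 1$. But if $v$ had a neighbor $c \in C(G)$, then $\ecc(c) = rad(G) < rad(G) + 1 = \ecc(v)$, which contradicts our standing assumption that no neighbor of $v$ has strictly smaller eccentricity. Hence $d(v, C(G)) = 2$, completing the proof.

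I do not anticipate any genuine obstacle: all the work has already been done in the two lemmas, and the only thing requiring care is the logical bookkeeping of the disjunction. Concretely, one negates the first disjunct (``$v$ has a neighbor of strictly smaller eccentricity''), and must then verify all three conjuncts of the second disjunct --- the first two coming verbatim from Lemma~\ref{lem:locality}, and the third ($d(v,C(G))=2$) coming from Lemma~\ref{lem:twoFromCenter} together with the short neighbor argument above, which uses the negated hypothesis a second time to rule out $d(v,C(G))=1$.
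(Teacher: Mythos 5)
Your proposal is correct and matches the paper's own proof essentially verbatim: negate the first disjunct, invoke Lemma~\ref{lem:locality} for $\ecc(v)=rad(G)+1$ and $diam(G)=2rad(G)$, then combine Lemma~\ref{lem:twoFromCenter} with the observation that a central neighbor would have strictly smaller eccentricity to pin down $d(v,C(G))=2$. No issues.
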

\begin{proof}
If a non-central vertex~$v$ has no neighbors with smaller eccentricity then $d(v,C(G))> 1$ and, by   Lemma~\ref{lem:locality}, $\ecc(v) = rad(G) + 1$ and $diam(G) = 2rad(G)$. Thus, by Lemma \ref{lem:twoFromCenter}, $d(v,C(G))=2$. \end{proof}


\section{Certificates for eccentricities}\label{sec:certificates}
We obtain as a consequence of Theorem~\ref{thm:localityAndCloseToCenter} several new results for distance-hereditary graphs on
lower and upper certificates for eccentricities, which were introduced in~\cite{Dragan2018RevisitingRD}
as a way to compute exactly or approximately eccentricities in a graph by maintaining upper and lower bounds.
A set $L$ (set $U$) of vertices is a \emph{lower certificate} (respectively, an \emph{upper certificate}) for eccentricities of $G$ if it is used to obtain lower bounds (respectively, upper bounds) of eccentricities in $G$. 
Given all distances from a vertex~$v$ to all vertices in $L \cup U$ as well as the eccentricities of vertices in $U$, we have 
the following lower and upper bounds for the eccentricity of any vertex~$v$~\cite{Dragan2018RevisitingRD}:

\[
e_L(v) \leq e(v) \leq e^{U}(v),
\text{ where }
\begin{cases}
	\ecc^U(v) = \min_{x \in U} d(v,x) + \ecc(x), \\
	\ecc_L(v) = \max_{x \in L} d(v,x).
\end{cases}
\]

A lower certificate $L$ (an upper certificate $U$) is said to be \emph{tight} if  $\ecc_L(v) = \ecc(v)$ ($\ecc^U(v) = \ecc(v)$, respectively) for all $v \in V$. 
A \emph{diameter certificate} is a set~$U$ such that $\ecc^U(v) \leq diam(G)$ for all $v \in V$,
and therefore the diameter is realized by $\max_{v \in V}e^U(v)$.
A \emph{radius certificate} is a set~$L$ such that $\ecc_L(v) \geq rad(G)$ for all $v \in V$,
and therefore the radius is realized by $\min_{v \in V}e_L(v)$.
In what follows, we define the set of all diametral vertices of $G$ as $D(G)=\{v \in V: \ecc(v)=diam(G)\}$.

In this section we show that all eccentricities can exactly be  determined in distance-hereditary graphs by
computing distances from vertices of $C^1(G)$ to all vertices, since $C^1(G)$ forms a tight upper certificate.
We also show that in distance-hereditary graphs the set $C(G)$ is a diameter certificate
and the set $D(G)$ is a radius certificate (a kind of duality between $C(G)$ and $D(G)$).
This agrees with radius and diameter certificates in chordal graphs~\cite{Dragan2018RevisitingRD} 
but, as we show later, this does not hold for arbitrary graphs.

We use the following corollary to Theorem~\ref{thm:localityAndCloseToCenter}.
\begin{corollary}\label{cor:pathToFurthestIntersectsCenter}
Let $G$ be a distance-hereditary graph.
\setlist{nolistsep}
\begin{enumerate}[noitemsep, label=(\roman*)]
	\item If $diam(G) < 2rad(G)$ then, for every pair of vertices $v \in V$ and $u \in F(v)$, there is a vertex $w \in I(v,u) \cap C(G)$ such that $u \in F(w)$.
	\item 
	For every pair of vertices $v \in V\setminus C(G)$ and $u \in F(v)$,
	there is a vertex $w \in I(v,u) \cap C^1(G)$ such that $u \in F(w)$.
\end{enumerate}
\end{corollary}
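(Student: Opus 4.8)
The plan is to prove both parts by walking along a shortest $(v,u)$-path and repeatedly invoking Theorem~\ref{thm:localityAndCloseToCenter}. For part (i), assume $diam(G) < 2rad(G)$. I start with $v_0 = v$ and $u \in F(v_0)$. If $v_0 \in C(G)$ we are done with $w = v_0$. Otherwise, since $\ecc(v_0) > rad(G)$ and $diam(G) < 2rad(G)$ rules out the exceptional case of Theorem~\ref{thm:localityAndCloseToCenter} (which requires $diam(G) = 2rad(G)$), vertex $v_0$ has a neighbor $w$ with $\ecc(w) < \ecc(v_0)$. The key point is that I should be able to choose such a neighbor \emph{on a shortest path toward $u$}: I claim $d(w,u) = d(v_0,u) - 1$, i.e.\ $w \in I(v_0,u)$. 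Indeed, $d(w,u) \le \ecc(w) < \ecc(v_0) = d(v_0,u) \le d(w,u)+1$, which forces $d(w,u) = d(v_0,u)-1$ and also $d(w,u) = \ecc(w)$, so $u \in F(w)$. Setting $v_1 = w$ and iterating, the quantity $d(v_i,u)$ strictly decreases at each step while $u$ remains furthest from $v_i$; the process must terminate, and it can only terminate when we reach a vertex $v_k \in C(G)$ (if $v_k \notin C(G)$ we could continue). Then $w = v_k \in I(v,u) \cap C(G)$ with $u \in F(w)$, as required. One should double check the interval membership accumulates correctly: each $v_{i+1} \in I(v_i,u)$ and $v_i \in I(v_{i+1},\dots)$, so by induction all the $v_i$ lie on a common shortest $(v,u)$-path, hence $v_k \in I(v,u)$.

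For part (ii) the argument is essentially the same but we no longer assume $diam(G) < 2rad(G)$, so the exceptional case of Theorem~\ref{thm:localityAndCloseToCenter} can occur. Starting from $v \in V \setminus C(G)$ and $u \in F(v)$, I run the same descent: as long as the current vertex $v_i \notin C^1(G)$, we have $\ecc(v_i) \ge rad(G) + 2 > rad(G) + 1$, so Theorem~\ref{thm:localityAndCloseToCenter} guarantees a neighbor of strictly smaller eccentricity (the exceptional case needs $\ecc(v_i) = rad(G)+1$), and as before such a neighbor can be taken on a shortest path to $u$ and still has $u$ furthest from it. The descent therefore continues until we first reach a vertex $v_k$ with $\ecc(v_k) \le rad(G) + 1$, i.e.\ $v_k \in C^1(G)$; at that vertex we stop and output $w = v_k$. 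Along the way $u$ stays furthest and all intermediate vertices stay on a common shortest $(v,u)$-path, giving $w \in I(v,u) \cap C^1(G)$ with $u \in F(w)$. Note this also covers the case $v \in C^1(G) \setminus C(G)$ trivially with $w=v$ (and in part (i) the descent likewise handles $v \in C(G)$ immediately).

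The main obstacle, and the only step requiring care, is justifying that the eccentricity-decreasing neighbor supplied by Theorem~\ref{thm:localityAndCloseToCenter} can always be chosen inside the interval $I(v_i,u)$ so that $u$ remains a furthest vertex. The clean way to see this is the inequality chain above: any neighbor $w$ of $v_i$ with $\ecc(w) < \ecc(v_i)$ automatically satisfies $d(w,u) = d(v_i,u) - 1$ and $\ecc(w) = d(w,u)$, because $d(v_i,u) = \ecc(v_i) > \ecc(w) \ge d(w,u) \ge d(v_i,u) - 1$. So in fact \emph{every} such neighbor works; there is no choice to make, and the termination and interval-accumulation arguments go through by a straightforward induction on $d(v_i,u)$.
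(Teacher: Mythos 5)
Your proof is correct and follows essentially the same route as the paper: an induction/descent on eccentricity using Theorem~\ref{thm:localityAndCloseToCenter}, together with the observation that any neighbor $w$ of $v$ with $\ecc(w)<\ecc(v)$ automatically satisfies $d(w,u)=d(v,u)-1$ and $u\in F(w)$ (your inequality chain is exactly the implicit justification in the paper's proof). No gaps; your write-up is just a more explicit version of the paper's argument.
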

\begin{proof}
Consider any vertex $v \in V$ and $u \in F(v)$ and proceed by induction on $k := \ecc(v)$.
If $k = rad(G)$, then $w = v$ and we are done.
If $k = rad(G) + 1$ and $diam(G) = 2rad(G)$ then again $w = v$ and we are done.
If $k > rad(G) + 1$ or $k = rad(G) + 1$ and $diam(G) < 2rad(G)$
then, by Theorem~\ref{thm:localityAndCloseToCenter}, a neighbor $z$ of $v$ with $\ecc(z) = k - 1$
satisfies $u \in F(z)$, and we can apply the induction hypothesis.
\end{proof}

\begin{lemma}
The set $C^1(G)$ is a tight upper certificate for all eccentricities of $G$.
\end{lemma}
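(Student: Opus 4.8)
The plan is to show that for every vertex $v \in V$ there exists some $w \in C^1(G)$ with $d(v,w) + \ecc(w) = \ecc(v)$, which gives $\ecc^U(v) \le \ecc(v)$ for the certificate $U = C^1(G)$; combined with the universally valid bound $\ecc(v) \le \ecc^U(v)$, this forces equality and hence tightness. The case analysis follows the structure already set up in Corollary~\ref{cor:pathToFurthestIntersectsCenter}, so I would lean on it directly rather than redoing the induction.

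First, fix $v \in V$ and pick any $u \in F(v)$. If $v \in C(G) \subseteq C^1(G)$, then $w = v$ works trivially since $\ecc(v) = rad(G)$. Otherwise $v \in V \setminus C(G)$, and I would apply part (ii) of Corollary~\ref{cor:pathToFurthestIntersectsCenter}: there is a vertex $w \in I(v,u) \cap C^1(G)$ with $u \in F(w)$. Because $w$ lies on a shortest $(v,u)$-path we have $d(v,w) + d(w,u) = d(v,u) = \ecc(v)$, and because $u \in F(w)$ we have $d(w,u) = \ecc(w)$. Substituting gives $d(v,w) + \ecc(w) = \ecc(v)$, so the minimum defining $\ecc^U(v)$ is at most $\ecc(v)$, as needed. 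This shows $\ecc^U(v) = \ecc(v)$ for all $v$, i.e.\ $C^1(G)$ is a tight upper certificate.

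I do not anticipate a genuine obstacle here: all the real work has been pushed into Theorem~\ref{thm:localityAndCloseToCenter} and its corollary, and what remains is just the observation that a witness $w$ sitting on a shortest path from $v$ to one of its furthest vertices $u$, with $u$ also furthest from $w$, realizes the bound $\ecc^U(v) \le \ecc(v)$ exactly. The only point needing a word of care is the membership $w \in C^1(G)$ when $v$ is non-central: this is exactly what Corollary~\ref{cor:pathToFurthestIntersectsCenter}(ii) supplies (note $\ecc(w) \le rad(G)+1$ by definition of $C^1(G)$), so no separate argument about the eccentricity of $w$ is required. One could phrase the whole proof in two or three sentences, but I would spell out the substitution $d(v,w) + \ecc(w) = d(v,w) + d(w,u) = d(v,u) = \ecc(v)$ explicitly so the reader sees where tightness comes from.
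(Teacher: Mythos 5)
Your proposal is correct and follows essentially the same route as the paper, which simply cites Corollary~\ref{cor:pathToFurthestIntersectsCenter} together with the definition of a tight upper certificate; you have merely spelled out the substitution $d(v,w)+\ecc(w)=d(v,w)+d(w,u)=d(v,u)=\ecc(v)$ that the paper leaves implicit. No gap.
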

\begin{proof}
The statement follows from Corollary~\ref{cor:pathToFurthestIntersectsCenter} and the definition of a tight upper certificate.
\end{proof}

\begin{lemma}\label{lem:dhgDiamCertificate}
The center $C(G)$ is a diameter certificate of $G$.
\end{lemma}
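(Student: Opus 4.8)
The goal is to show that for every vertex $v \in V$ we have $e^{C(G)}(v) \le diam(G)$, i.e., there exists a central vertex $c$ with $d(v,c) + rad(G) \le diam(G)$. I would split the argument according to the relationship between $diam(G)$ and $rad(G)$, using the known bound $diam(G) \ge 2rad(G) - 2$ together with Theorem~\ref{thm:localityAndCloseToCenter} and its Corollary~\ref{cor:pathToFurthestIntersectsCenter}.

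First consider the case $diam(G) < 2rad(G)$, i.e., $diam(G) \in \{2rad(G)-2, 2rad(G)-1\}$. Here I would invoke Corollary~\ref{cor:pathToFurthestIntersectsCenter}(i): take $u \in F(v)$, so $d(v,u) = e(v)$; the corollary gives a central vertex $w \in I(v,u)$, so $d(v,w) + d(w,u) = d(v,u)$ with $d(w,u) \le rad(G)$, hence $d(v,w) \ge e(v) - rad(G)$. Actually what I want is an upper bound on $d(v,c) + rad(G)$ for some central $c$; here $w$ itself works since $d(v,w) = d(v,u) - d(w,u) \le e(v) - (e(v) - ?)$... let me instead argue directly: $d(v,w) \le e(v)$ trivially is too weak. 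The cleaner route: $d(v,w) + rad(G) = d(v,w) + e(w)$. Since $u \in F(w)$, $e(w) = d(w,u)$, so $d(v,w) + e(w) = d(v,w) + d(w,u) = d(v,u) = e(v) \le diam(G)$. That is exactly what we need, and it handles every $v$ when $diam(G) < 2rad(G)$.

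Next consider $diam(G) = 2rad(G)$. For $v \in C(G)$ the bound is trivial ($d(v,v) + rad(G) = rad(G) \le 2rad(G) = diam(G)$). For $v \notin C(G)$, apply Corollary~\ref{cor:pathToFurthestIntersectsCenter}(ii): with $u \in F(v)$ there is $w \in I(v,u) \cap C^1(G)$ with $u \in F(w)$, so $e(w) \le rad(G)+1$ and $d(v,w) + e(w) = d(v,w) + d(w,u) = d(v,u) = e(v) \le diam(G)$. If $w$ is actually central we are done; the only remaining case is $e(w) = rad(G)+1$. In that case I would use Theorem~\ref{thm:localityAndCloseToCenter}: since $diam(G) = 2rad(G)$ it is possible that $w$ has no neighbor of smaller eccentricity, so I cannot simply walk down from $w$. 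Instead I would use Lemma~\ref{lem:twoFromCenter} to get a central vertex $c$ with $d(w,c) \le 2$, and then bound $d(v,c) + rad(G) \le d(v,w) + 2 + rad(G)$; combined with $d(v,w) + rad(G)+1 = d(v,w) + e(w) = e(v)$, this gives $d(v,c) + rad(G) \le e(v) + 1 \le diam(G) + 1$, which is off by one. So the crude triangle-inequality detour is not quite enough, and this is the main obstacle: I must show that when $e(w) = rad(G)+1$ one can reach a central vertex from $w$ along a geodesic toward $u$ (or otherwise exploit $u \in F(w)$ and $diam(G) = 2rad(G)$) without adding length. I expect the resolution to use the structural facts from the proof of Lemma~\ref{lem:twoFromCenter} — that a central vertex $c$ nearest $w$ satisfies $u \in F(c)$ and lies on a shortest $w$–$u$ path, so that $d(v,c) + rad(G) = d(v,c) + d(c,u) = d(v,u) = e(v) \le diam(G)$ — or to observe that $w \in C^1(G)$ with $u\in F(w)$ forces, via the $\ecc(w) = rad(G)+1$ analysis, the existence of such an on-geodesic central vertex; nailing down that this central vertex really lies on a shortest $v$–$u$ path (not merely a shortest $w$–$u$ path) is the delicate point and will likely require re-running the four-point-condition bookkeeping on $\{v, u, w, c\}$.

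Putting the pieces together: in all cases we exhibit a central vertex $c$ (equal to $w$ itself, or a central vertex nearby on a geodesic toward a furthest vertex $u$) with $d(v,c) + e(c) = d(v,c) + rad(G) \le d(v,u) = e(v) \le diam(G)$, which is precisely the statement that $C(G)$ is a diameter certificate. I would present the short $diam(G) < 2rad(G)$ case first, then the easy sub-case of $diam(G) = 2rad(G)$ where Corollary~\ref{cor:pathToFurthestIntersectsCenter}(ii) already yields a central vertex, and finally devote the bulk of the proof to the $e(w) = rad(G)+1$ sub-case, where the careful geodesic/four-point-condition argument lives.
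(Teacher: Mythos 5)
Your first case ($diam(G)<2rad(G)$, via Corollary~\ref{cor:pathToFurthestIntersectsCenter}(i) and $d(v,w)+e(w)=d(v,u)=e(v)\le diam(G)$) is exactly the paper's argument. The problem is the case $diam(G)=2rad(G)$: you leave the sub-case $e(w)=rad(G)+1$ unresolved, conceding that your triangle-inequality detour is off by one and that the fix ``will likely require re-running the four-point-condition bookkeeping.'' As written, that is a genuine gap --- the lemma is not proved for this case --- and the machinery you reach for (Corollary~\ref{cor:pathToFurthestIntersectsCenter}(ii), Lemma~\ref{lem:twoFromCenter}, locating an on-geodesic central vertex) is not needed at all.

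The case $diam(G)=2rad(G)$ is closed by a one-line observation, valid in \emph{any} graph, which is what the paper does: for every vertex $v$ and every central vertex $c$ we have $d(v,c)\le e(c)=rad(G)$ simply by the definition of eccentricity, hence
\[
e^{C(G)}(v)\ \le\ d(v,c)+e(c)\ \le\ 2rad(G)\ =\ diam(G).
\]
No geodesic through $u$, no membership of $c$ in $I(v,u)$, and no four-point condition is required; the distance-hereditary structure is only used in the $diam(G)<2rad(G)$ case (through Corollary~\ref{cor:pathToFurthestIntersectsCenter}). So your overall decomposition is right and your first case is correct, but the ``delicate point'' you flag dissolves once you notice that when the diameter equals $2rad(G)$ the certificate inequality is trivially satisfied by any central vertex; supply that observation and the proof is complete.
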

\begin{proof}
This is clear by Corollary~\ref{cor:pathToFurthestIntersectsCenter} if $diam(G) < 2rad(G)$.
Additionally, in any graph $G$ with $diam(G) = 2rad(G)$ all central vertices $c \in C(G)$ and every diametral pair of vertices $x,y$
satisfy $d(x,y) = d(x,c) + d(c,y) = d(x,c) + rad(G) = rad(G) + d(y,c) = 2rad(G)$.
\end{proof}


%

\begin{lemma}\label{lem:dhgRadiusCertificate}
The set $D(G)$ is a radius certificate of $G$.
\end{lemma}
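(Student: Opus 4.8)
**The plan is to show that for every vertex $v \in V$, the maximum distance from $v$ to a diametral vertex is at least $rad(G)$.** In other words, I need to establish that $\ecc_{D(G)}(v) = \max_{y \in D(G)} d(v,y) \geq rad(G)$ for all $v$, which is exactly the statement that $D(G)$ is a radius certificate. The two cases are governed by the relationship between $diam(G)$ and $rad(G)$, mirroring the structure used in Lemma~\ref{lem:dhgDiamCertificate}.

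First I would dispatch the easy case $diam(G) = 2rad(G)$. Here, pick any diametral pair $x,y$, so $d(x,y) = 2rad(G)$. For an arbitrary vertex $v$, the triangle inequality gives $d(v,x) + d(v,y) \geq d(x,y) = 2rad(G)$, hence $\max\{d(v,x), d(v,y)\} \geq rad(G)$. Since $x, y \in D(G)$, we conclude $\ecc_{D(G)}(v) \geq rad(G)$. This requires no distance-hereditary structure at all.

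The substantive case is $diam(G) < 2rad(G)$, i.e., $diam(G) = 2rad(G) - 1$ or $diam(G) = 2rad(G) - 2$ (recall $diam(G) \geq 2rad(G) - 2$ in distance-hereditary graphs). Here I would use Corollary~\ref{cor:pathToFurthestIntersectsCenter}(i): take any $v \in V$ and any $u \in F(v)$; the corollary yields a central vertex $w \in I(v,u) \cap C(G)$ with $u \in F(w)$. Then $\ecc(u) \geq d(u,w) = \ecc(w) = rad(G)$. Now I want to push $u$ to a diametral vertex. The idea is to iterate: if $u \notin D(G)$, apply the machinery again — take a furthest vertex from $u$, and use Theorem~\ref{thm:localityAndCloseToCenter} (or Proposition~\ref{prop:eccentricityOfFurthestWrtRadius}) to walk toward a vertex of larger eccentricity while preserving enough distance back to $v$. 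Concretely, I expect that starting from any $v$ and taking $u' \in F(u)$, since $\ecc(u) \geq rad(G)$ and $u \in I(v, \cdot)$ lies far from $v$, one can argue via the four-point condition on $v, u, u'$ and a diametral-type vertex that $\max\{d(v,u), d(v,u')\} \geq rad(G)$, with $u'$ having strictly larger eccentricity; repeating terminates at a diametral vertex still at distance $\geq rad(G)$ from $v$.

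The main obstacle I anticipate is the $diam(G) < 2rad(G)$ case: it is not immediate that the diametral vertex one reaches by iterating ``furthest-from'' operations remains at distance $\geq rad(G)$ from the original vertex $v$. The clean fix is likely to combine Corollary~\ref{cor:pathToFurthestIntersectsCenter}(i) with a direct four-point argument: given $v$, let $u \in F(v)$ and let $w \in C(G) \cap I(v,u)$ with $u \in F(w)$; if $\ecc(u) < diam(G)$, pick $z \in F(u)$ with, say, a diametral vertex, and apply the \FPC{} to $\{v, w, u, z\}$ together with the bound $d(w, \cdot) \leq rad(G)$ to force $d(v,z) \geq rad(G)$ whenever $\ecc(z) = diam(G)$. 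Alternatively — and this may be the cleanest route — one observes that in the $diam(G) < 2rad(G)$ regime, by Corollary~\ref{cor:pathToFurthestIntersectsCenter}(i) every vertex's furthest vertex is ``seen'' by the center, and a short argument shows $D(G)$ coincides with (or contains) the relevant furthest-vertex targets; then $\ecc_{D(G)}(v) \geq \ecc(v) \geq rad(G)$ trivially. I would check which of these gives the shortest correct proof, but either way the case analysis on $diam(G)$ versus $2rad(G)$ and an application of Corollary~\ref{cor:pathToFurthestIntersectsCenter} form the backbone.
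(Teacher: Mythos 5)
Your easy case is fine, and in fact the same triangle-inequality-plus-integrality argument disposes of $diam(G)=2rad(G)-1$ as well (this is exactly how the paper handles every graph with $diam(G)\ge 2rad(G)-1$). The genuine gap is the remaining case $diam(G)=2rad(G)-2$, which is the entire substance of the paper's proof and which your outline does not resolve. Corollary~\ref{cor:pathToFurthestIntersectsCenter}(i) only gives you, for $u\in F(v)$, a central $w\in I(v,u)$ with $u\in F(w)$; the conclusion you draw from it, $\ecc(u)\ge rad(G)$, holds for every vertex of any graph and says nothing about $d(v,t)$ for diametral $t$. Your proposed iteration (``walk toward a vertex of larger eccentricity while preserving enough distance back to $v$'') is precisely the unproved claim: Proposition~\ref{prop:eccentricityOfFurthestWrtRadius} only guarantees $\ecc(u')\ge 2rad(G)-3=diam(G)-1$ for $u'\in F(u)$, so the furthest-vertex iteration can stall at non-diametral vertices, and at no point do you exhibit a diametral vertex at distance $\ge rad(G)$ from $v$. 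Your fallback --- that $D(G)$ ``contains the relevant furthest-vertex targets'' so that $\ecc_{D(G)}(v)\ge \ecc(v)$ --- is backwards as written (always $\ecc_{D(G)}(v)\le \ecc(v)$), and the intended premise $F(v)\cap D(G)\neq\emptyset$ is exactly what is in doubt: for a hypothetical violator $u$ of the lemma one has, by definition, $F(u)\cap D(G)=\emptyset$, so assuming that premise is essentially assuming the lemma.

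For comparison, the paper's treatment of $diam(G)=2rad(G)-2$ is a contradiction argument with real machinery: it sets $S=\{u: d(u,t)\le rad(G)-1 \mbox{ for all } t\in D(G)\}$, picks $u\in S$ minimizing $|F(u)|$, and uses the \FPC{} on $v,u,x,y$ (with $v\in F(u)$ and $x,y$ a diametral pair) to pin down $d(u,v)=rad(G)$, $d(v,x)=d(v,y)=2rad(G)-3$ and $u\in C(G)$; it then performs a chord analysis on the non-induced paths $P(x,u)\cup P(u,v)$ and $P(y,u)\cup P(u,v)$ to construct an auxiliary vertex $w$ with $d(w,x)=d(w,y)=rad(G)-1$ and $v\notin F(w)$; the minimality of $|F(u)|$ (or membership of $w$ in $S$) then produces $\overline{w}$ with $d(w,\overline{w})\ge rad(G)$ and $d(u,\overline{w})\le rad(G)-1$, and a final \FPC{} on $v,u,w,\overline{w}$ yields the contradiction. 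None of these ingredients --- the extremal choice of $u$, the construction of $w$, or the final four-point contradiction --- appears in your proposal, so as it stands it does not prove the lemma in the only case where distance-hereditary structure is actually needed.
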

\begin{proof}
We first show that $D(G)$ is a radius certificate for any graph $G$ if $diam(G) \geq 2rad(G) - 1$. 
If 
$D(G)$ is not a radius certificate,
then there is a vertex $u \in V$ such that $max_{v \in D(G)}d(u,v) < rad(G)$.
Thus, for  any diametral pair $x,y$, $d(x,y) \leq d(x,u) + d(u,y) \leq (rad(G) - 1) + (rad(G) - 1) = 2rad(G) - 2$, a contradiction with $d(x,y)=diam(G) \geq 2rad(G) - 1$.

As in a distance-hereditary $G$, $diam(G) \geq 2rad(G) - 2$ holds~\cite{Dragan_1994,YuChepoi91}, it remains to consider only the case when $diam(G) = 2rad(G) - 2$.
Let $S$ be the set of vertices~$u$ such that $d(u,t) \leq rad(G) - 1$ for all $t \in D(G)$.
By contradiction assume $D(G)$ is not a radius certificate and therefore $S$ is not empty.
Let $u \in S$ be a vertex which minimizes $|F(u)|$.
Consider any diametral pair $x,y$ and furthest from $u$ vertex $v \in F(u)$.
Necessarily $v \notin D(G)$ by the choice of~$u$.
Since $d(x,y)=2rad(G)-2$, $d(u,x)\leq rad(G)-1$ and  $d(u,y)\leq rad(G)-1$, clearly $d(u,x) = d(u,y) = rad(G) - 1$.

Consider the 4-point condition on vertices $v,u,x,y$.
We have that the largest distance sum is $d(v,u) + d(x,y) = d(v,u) + 2rad(G) - 2 \geq 3rad(G) - 2$,
given that $d(v,x) + d(u,y) \leq d(x,y)-1 + rad(G) - 1 = 3rad(G) - 4$
and that $d(v,y) + d(u,x) \leq d(x,y)-1 + rad(G) - 1 = 3rad(G) - 4$.
Therefore, the smaller sums are equal, establishing $d(v,x) = d(v,y)$. 
Moreover, since the difference between the largest sum and the other sums is at most 2, we get $d(v,u)=rad(G)$ and $d(v,x)=d(v,y)=2rad(G)-3$. So, $u \in C(G)$. 
%

We claim that there is a vertex~$w$ such that $d(w,x) = rad(G) - 1$, $d(w,y) = rad(G) - 1$ and $v \notin F(w)$.
Fix arbitrary shortest path $P(x,u)$, $P(y,u)$ and $P(u,v)$. 
Since $d(v,x) < d(v,u) + d(u,x) = diam(G) + 1$, path $Q=P(x,u)\cup P(u,v)$ is not induced. Hence, there must exist a chord between shortest path $P(x,u)$ and shortest path $P(u,v)$.
Define vertices $t,w \in P(u,v)$, $s,z \in P(x,u)$, $q, p \in P(y,u)$, as shown in Figure ~\ref{fig:dhgRadiusCertificateProof}.
Since $d(v,x)=2rad(G)-3$, 
we must have the chord $zt \in E$ or the chord $sw \in E$.
By the same argument, there must exist a chord between shortest path $P(y,u)$ and shortest path $P(u,v)$
which is realized by chord $pt \in E$ or $qw \in E$.
We note that if $zt \in E$ then $pt \notin E$ since $d(x,y) = 2rad(G)-2$. 
Up to symmetry, we have two cases as shown in Figure ~\ref{fig:dhgRadiusCertificateProof}.
In case (a) we have $zt,qw \in E$,
and in case (b) we have $sw,qw \in E$.
In either case vertex~$w$ satisfies the desired properties, establishing the claim.
\begin{figure}[!htbp] 
  \begin{center}
    \includegraphics[scale=0.7]{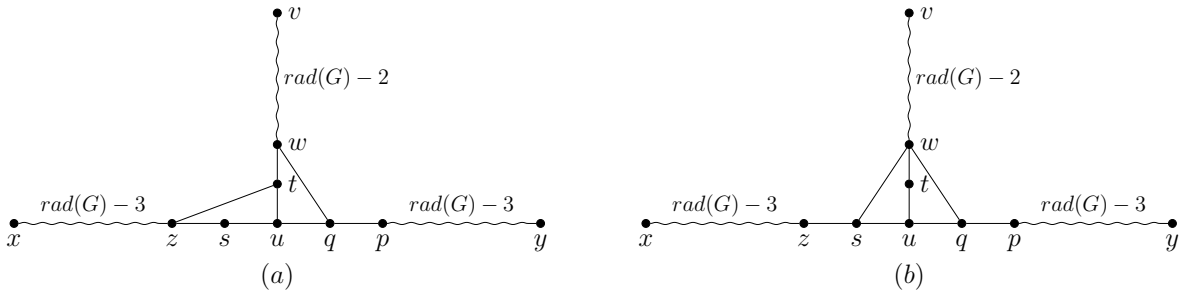}
\caption{Illustration to the proof of Lemma~\ref{lem:dhgRadiusCertificate}.}
    \label{fig:dhgRadiusCertificateProof}
  \end{center}
\end{figure}

We next claim that there is a vertex~$\overline{w}$ such that $d(w,\overline{w}) \geq rad(G)$ and $d(u,\overline{w}) \leq rad(G) - 1$.
On one hand, if $w \notin S$ then, by definition of $S$, there exists a vertex~$\overline{w} \in D(G)$ such that $d(w,\overline{w}) \geq rad(G)$ and, by the choice of~$u \in S$, we have $d(u,\overline{w}) \leq rad(G) - 1$.
On the other hand, if $w \in S$ then, by minimality of $|F(u)|$ and since $v \notin F(w)$, there exists a vertex $\overline{w} \in F(w) \setminus F(u)$. 
As $\overline{w} \notin F(u)$, $d(u,\overline{w}) \leq rad(G) - 1$ 
and, as $\overline{w} \in F(w)$,  $d(w,\overline{w}) \geq rad(G)$, establishing the claim. 

Consider now the \FPC{} on vertices $v,u,w,\overline{w}$.
Since $v \notin D(G)$ we have $d(v,\overline{w}) + d(w,u) \leq 2rad(G) - 3 + 2 = 2rad(G) - 1$.
We also have $d(v,w) + d(\overline{w},u) \leq rad(G) - 2 + rad(G) - 1 = 2rad(G) - 3$
and $d(v,u) + d(w,\overline{w}) \geq rad(G) + rad(G) = 2rad(G)$.
Given that $d(v,u) + d(w,\overline{w})$ is strictly larger than the other sums, it must differ from them by at most 2.
However, it differs by at least 3, giving a contradiction.
\end{proof}

As a consequence of these results, if the set $C^1(G)$ of a distance-hereditary graph $G$ is known, then all vertex eccentricities in $G$ can straightforwardly be found by performing a BFS from each vertex of $C^1(G)$.  Similarly, if the set $C(G)$ ($D(G)$) is known, then the entire set $D(G)$ ($C(G)$, respectively) of $G$ can be found.
However, as we will discuss in Section~\ref{sec:allEccentricities}, there is a more efficient approach to compute all eccentricities of a distance-hereditary graph.

We note that Lemma \ref{lem:dhgDiamCertificate} and Lemma \ref{lem:dhgRadiusCertificate} do not hold for general graphs, as illustrated in Figure~\ref{fig:badForRadiusCertificate} by a  graph $G$ with $diam(G)=6$ and $rad(G)=4$.
Here $D(G) = \{x,y\}$ and $C(G) = \{u\}$, and all other vertices have eccentricity 5.
However, $D(G)$ is not a radius certificate since $e_{D(G)}(u) = 3 < rad(G)$. 
Moreover, $C(G)$ is not a diameter certificate since $e^{C(G)}(v)=d(v,u) + \ecc(u) = 8 > diam(G)$.
\begin{figure}[H]
  \begin{center}
    \includegraphics[scale=0.7]{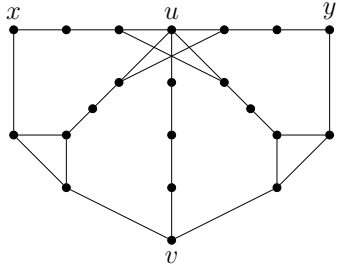}
\caption{A (non-distance-hereditary) graph $G$ where $D(G)$ is not a radius certificate and $C(G)$ is not a diameter certificate.
 }
    \label{fig:badForRadiusCertificate}
  \end{center}
\end{figure}

One aims also to minimize the size of a certificate.
In 
trees, for example, a single diametral pair is a sufficient radius certificate rather than the full set of diametral vertices.
Unfortunately this is not true 
for distance-hereditary graphs.
The graph $G$ in Figure~\ref{fig:fullDiametralSetForRadiusCertificate} illustrates that every diametral vertex is necessary to establish a radius certificate. Graph $G$ consists of a clique of vertices $\{u_1,...,u_\ell\}$ and a clique of vertices $\{v_1,...,v_\ell\}$,
where each $u_i$ is adjacent to all vertices $v_{j \neq i}$, and each $u_i$ and $v_i$ has a pendant vertex $x_i$ and $y_i$, respectively.
$G$ is distance-hereditary as it can be dismantled via a sequence of pendant and twin vertex eliminations. 
All vertices $x_i$ and $y_i$ are pendant,
each $u_i$ vertex is a false twin to~$v_i$,
and the remaining $v_i$ vertices are true twins (as they form a clique in the remaining graph).
Here $D(G)$ consists of all $x_i$ and $y_i$ vertices and $C(G)$ consists of all $u_i$ and $v_i$ vertices,
where $diam(G) = d(x_i,y_i) = 4$ and $rad(G) = d(v_i,x_i) = d(u_i,y_i) = 3$.
However, any $x_i \in D(G)$ has a vertex $v_i$ such that $d(v_i,t) < rad(G)$ for all $t \in D(G) \setminus \{x_i\}$.
By symmetry, the same is true for $y_i$ and its counterpart $u_i$.
Hence, all vertices of~$D(G)$ are necessary to form a radius certificate. 
One can also show that all vertices of $C(G)$ are necessary to form a diameter certificate. 
\begin{figure}[H]
  \begin{center}
    \includegraphics[scale=0.7]{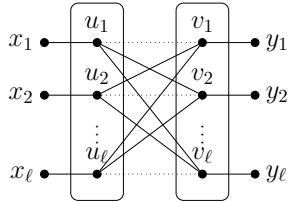}
\caption{A distance-hereditary graph $G$ for which $D(G) \setminus \{t\}$ is not a radius certificate for any $t \in D(G)$ and for which $C(G) \setminus \{c\}$ is not a diameter certificate for any $c \in C(G)$.}
    \label{fig:fullDiametralSetForRadiusCertificate}
  \end{center}
\end{figure}

\section{
Eccentricities, mutually distant pairs and distances to the center}
\label{sec:eccFunction}
In this section, we show that the eccentricity of any vertex of a distance-hereditary graph is bounded by its distances to just two 
mutually distant vertices. Furthermore, the distance between  any two mutually distant vertices is bounded by their distances to an arbitrary peripheral vertex (a vertex which is furthest for some other vertex).
The unimodality behavior of the eccentricity function described in Theorem~\ref{thm:localityAndCloseToCenter} gives also a relation between the eccentricity of a vertex
and its distance to $C^1(G)$.

\begin{lemma}\label{lem:duality1}
Let $x,y$ be a mutually distant pair of $G$,
and let $u \in V$ and $v \in F(u)$ be a furthest vertex from~$u$.
Let also $\alpha \coloneqq d(u,x)$ and $\beta \coloneqq d(u,y)$.
Then,
\[\max\{\alpha,\beta\} \leq \ecc(u) \leq \max\{ \max\{\alpha,\beta\},\ \min\{\alpha,\beta\} + 2\} \leq \max\{\alpha,\beta\} + 2.\]
Moreover, if $\ecc(u) = \max\{\alpha,\beta\} + 2$, then $\alpha=\beta=\ecc(u)-2$ and $d(v,x) = d(v,y) = d(x,y)$.
\end{lemma}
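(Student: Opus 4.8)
The plan is to obtain every assertion of the lemma from a single application of the \FPC{} (Proposition~\ref{prop:dhgCharacterization}\ref{byFourPointCondition}) to the quadruple $u,v,x,y$. The lower bound is immediate: since $v\in F(u)$ we have $\ecc(u)=d(u,v)\geq d(u,x)=\alpha$ and $\ecc(u)\geq d(u,y)=\beta$, so $\ecc(u)\geq\max\{\alpha,\beta\}$. For the remaining inequalities I would assume without loss of generality that $\alpha\leq\beta$, so that it suffices to prove $\ecc(u)\leq\max\{\beta,\alpha+2\}$, and I would record the two structural facts that drive the whole argument: $d(u,v)=\ecc(u)\geq d(u,x),d(u,y)$ because $v$ is furthest from $u$, and $d(x,y)=\ecc(x)=\ecc(y)\geq d(v,x),d(v,y)$ because $x,y$ is a mutually distant pair.

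Next I would write the three distance sums as $S_{1}=d(u,v)+d(x,y)$, $S_{2}=d(u,x)+d(v,y)=\alpha+d(v,y)$ and $S_{3}=d(u,y)+d(v,x)=\beta+d(v,x)$, and add the appropriate inequalities from the previous paragraph to conclude $S_{1}\geq S_{2}$ and $S_{1}\geq S_{3}$; thus $S_{1}$ is a largest sum. If $S_{1}$ is the unique largest sum, the \FPC{} forces $S_{2}=S_{3}$ and $S_{1}\leq S_{2}+2$, and rewriting $\ecc(u)+d(x,y)\leq\alpha+d(v,y)+2$ with $d(v,y)\leq d(x,y)$ gives $\ecc(u)\leq\alpha+2$. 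Otherwise $S_{1}=S_{2}$ or $S_{1}=S_{3}$; substituting and cancelling with $\ecc(u)\geq\beta\geq\alpha$ and $d(v,x),d(v,y)\leq d(x,y)$ collapses the inequalities into equalities, which pins down $\ecc(u)=\beta=\max\{\alpha,\beta\}$ (and, in the $S_{1}=S_{2}$ branch, also $\alpha=\beta$). In every case $\ecc(u)\leq\max\{\beta,\alpha+2\}\leq\beta+2=\max\{\alpha,\beta\}+2$, which is exactly the displayed chain.

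For the ``moreover'' clause I would suppose $\ecc(u)=\max\{\alpha,\beta\}+2=\beta+2$. The branches in which $\ecc(u)=\beta$ are then impossible, so we are in the first case, where $\ecc(u)\leq\alpha+2\leq\beta+2=\ecc(u)$; hence $\alpha=\beta=\ecc(u)-2$. Returning to that case with $\ecc(u)=\alpha+2$, the inequality $\ecc(u)+d(x,y)\leq\alpha+d(v,y)+2$ must be tight, so $d(v,y)=d(x,y)$, and then $S_{2}=S_{3}$ combined with $\alpha=\beta$ gives $d(v,x)=d(v,y)=d(x,y)$.

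There is no conceptual obstacle here; the only real work is the bookkeeping of the case split — correctly identifying $S_{1}$ as a largest sum, handling ties among the three sums so that the ``exceeds by at most $2$'' clause of the \FPC{} is invoked only when the two genuinely smaller sums coincide, and carrying the equality cases through to the ``moreover'' conclusion.
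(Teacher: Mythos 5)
Your proposal is correct and follows essentially the same route as the paper: the trivial lower bound, the four-point condition applied to $u,v,x,y$ with $d(u,v)+d(x,y)$ identified as a largest sum, a case split on whether it equals one of the other sums (forcing $\ecc(u)=\max\{\alpha,\beta\}$) or is strictly largest (forcing the smaller sums equal and $\ecc(u)\leq\min\{\alpha,\beta\}+2$), and then tightness of these inequalities for the ``moreover'' clause. The only cosmetic differences are your WLOG assumption $\alpha\leq\beta$ and the cleaner partition of cases, neither of which changes the argument.
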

\begin{proof}
Let $v \in F(u)$. By the choice of~$v$, we have $\ecc(u) = d(u,v) \geq \max\{d(u,x), d(u,y)\}$.
Consider the 4-point condition on vertices $u,v,x,y$.
As $x$,$y$ is a mutually distant pair, we have for the three distance sums that $d(u,v) + d(x,y) = \ecc(u) + d(x,y)$,  $d(u,x) + d(v,y) \leq d(u,x) + d(x,y) \leq \ecc(u) + d(x,y)$,
and $d(u,y) + d(v,x) \leq d(u,y) + d(x,y) \leq \ecc(u) + d(x,y)$.
Clearly the first sum is largest.

We first consider the case when the first sum equals one of the latter.
Suppose that $d(u,v) + d(x,y) = d(u,x) + d(v,y)$.
Then, $\ecc(u) + d(x,y) = d(u,x) + d(v,y) \leq \ecc(u) + d(x,y)$.
Hence, $\ecc(u) = d(u,x) = \max\{d(u,x), d(u,y)\}$. 
Suppose now that $d(u,v) + d(x,y) = d(u,y) + d(v,x)$.
Then, $\ecc(u) + d(x,y) = d(u,y) + d(v,x) \leq \ecc(u) + d(x,y)$.
Hence, $\ecc(u) = d(u,y) = \max\{d(u,x), d(u,y)\}$. 
In either case, $\ecc(u) = \max\{d(u,x), d(u,y)\}$.

We next consider the case when the two smaller sums are equal and differ from the largest one by at most 2.
We have $\ecc(u) = d(u,v) \leq d(v,y) + d(u,x) - d(x,y) + 2=d(v,x) + d(u,y) - d(x,y) + 2$.
Since $d(x,y)$ is not smaller than $d(v,y)$ and $d(v,x)$, we obtain  $\ecc(u) \leq d(u,x) + 2$ and 
$\ecc(u) \leq d(u,y) + 2$, i.e., $\ecc(u) \leq \min\{d(u,x), d(u,y)\} + 2$.
Moreover, if $\ecc(u) = \max\{d(u,x), d(u,y)\} + 2$, we must be in the latter case when the two smaller sums are equal (otherwise, $\ecc(u) = \max\{d(u,x), d(u,y)\}$ as shown previously),
and so $d(u,v) = \ecc(u) = \min\{d(u,x), d(u,y)\} + 2$.
Hence, $d(u,x) = d(u,y) = d(u,v) - 2$ and, since $d(u,x) + d(v,y) = d(u,y) + d(x,v)$, $d(v,y) = d(x,v)$ holds too.
Combining this with the fact that $d(u,v) + d(x,y) - d(u,x) - d(y,v) \leq 2$, we obtain  $d(x,y) \leq d(v,y)$ 
and, since $x,y$ are mutually distant, necessarily $d(x,y)=d(v,y) = d(x,v)$, completing the proof.
\end{proof}

\begin{lemma}\label{lem:duality2}
Let $x,y$ be a mutually distant pair of $G$,
and let $u \in V$ and $v \in F(u)$ be a furthest vertex from~$u$. Let also $\alpha \coloneqq d(v,x)$ and $\beta \coloneqq d(v,y)$.
Then,
\[\max\{\alpha,\beta\} \leq d(x,y) \leq \max\{ \max\{\alpha,\beta\},\ \min\{\alpha,\beta\} + 2\} \leq \max\{\alpha,\beta\} + 2.\]
Moreover, if $d(x,y) = \max\{\alpha,\beta\} + 2$ then $\alpha=\beta=d(x,y)-2$ and $d(u,x) = d(u,y) = d(u,v)$.
\end{lemma}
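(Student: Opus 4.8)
The plan is to reuse the machinery of Lemma~\ref{lem:duality1} almost verbatim, interchanging the roles of the mutually distant pair $\{x,y\}$ and the ``furthest pair'' $\{u,v\}$: we again apply the \FPC{} (Proposition~\ref{prop:dhgCharacterization}\ref{byFourPointCondition}) to the quadruple $u,v,x,y$, but this time we extract a bound on $d(x,y)$ rather than on $\ecc(u)$.

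First I would establish the lower bound: since $x,y$ is a mutually distant pair, $d(x,y) = \ecc(x) \geq d(x,v) = \alpha$ and $d(x,y) = \ecc(y) \geq d(y,v) = \beta$, hence $d(x,y) \geq \max\{\alpha,\beta\}$. Next I would record the three distance sums of the \FPC{} on $u,v,x,y$, namely $d(u,v) + d(x,y)$, $d(u,x) + \beta$, and $d(u,y) + \alpha$. Using $d(u,x), d(u,y) \leq \ecc(u) = d(u,v)$ together with $\alpha, \beta \leq d(x,y)$ (just proved), both of the last two sums are at most $d(u,v) + d(x,y)$, so the first sum is the (weakly) largest.

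Then I would split according to the \FPC{}. If the largest sum equals one of the other two, say $d(u,v) + d(x,y) = d(u,x) + \beta$, then the two slack inequalities $d(u,x) \leq d(u,v)$ and $\beta \leq d(x,y)$ must both hold with equality, forcing $d(x,y) = \beta = \max\{\alpha,\beta\}$; symmetrically $d(x,y) = \alpha = \max\{\alpha,\beta\}$ if the largest sum equals $d(u,y) + \alpha$. If instead the two smaller sums are equal and the largest exceeds them by at most $2$, then $d(u,v) + d(x,y) \leq d(u,x) + \beta + 2 \leq d(u,v) + \beta + 2$, and likewise with $\alpha$ in place of $\beta$, which gives $d(x,y) \leq \min\{\alpha,\beta\} + 2$. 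Taking the larger of the two outcomes yields $d(x,y) \leq \max\{\max\{\alpha,\beta\},\ \min\{\alpha,\beta\}+2\}$, and the last inequality $\leq \max\{\alpha,\beta\}+2$ is trivial.

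For the ``moreover'' clause, suppose $d(x,y) = \max\{\alpha,\beta\}+2$. We cannot be in the first case above, which would give $d(x,y) = \max\{\alpha,\beta\}$, so the second case applies and $d(x,y) \leq \min\{\alpha,\beta\}+2 \leq \max\{\alpha,\beta\}+2 = d(x,y)$; all three quantities therefore coincide, forcing $\alpha = \beta = d(x,y) - 2$. Retracing the chain $d(x,y) \leq d(u,x) + \beta + 2 - d(u,v) \leq \beta + 2$ with $\beta = d(x,y)-2$ substituted in then forces $d(u,x) = d(u,v)$, and symmetrically $d(u,y) = d(u,v)$; since $d(u,v) = \ecc(u)$, this is the claimed equality $d(u,x) = d(u,y) = d(u,v)$. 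I do not expect a genuine obstacle: the statement is the formal dual of Lemma~\ref{lem:duality1} under the interchange of $\{x,y\}$ and $\{u,v\}$, and the only care needed is bookkeeping which of the ``$\leq$'' steps must be tight in each branch of the \FPC{}.
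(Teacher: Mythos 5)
Your proof is correct and is exactly the argument the paper intends: the paper omits the proof of Lemma~\ref{lem:duality2}, stating only that it is analogous to Lemma~\ref{lem:duality1} but argued from the perspective of $d(x,y)$, and your swap of the roles of $\{x,y\}$ and $\{u,v\}$ in the \FPC{} on $u,v,x,y$, with the tightness bookkeeping in each branch, is precisely that analogous argument.
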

\begin{proof}
The proof is analogous to that of Lemma~\ref{lem:duality1} and is omitted. The only difference is that now we argue from the perspective of $d(x,y)$ and not of $\ecc(u)$. 
\end{proof}

Figure~\ref{fig:dualityIsSharp}(a) illustrates that the upper bounds of Lemma~\ref{lem:duality1} and Lemma~\ref{lem:duality2} are sharp using vertices~$z$ and~$w$ for two opposing purposes.
First, $\ecc(z) = d(z,w) = \max\{d(x,z), d(y,z)\} + 2$, whereas $w \in F(z)$ has $d(x,y) = \max\{d(x,w), d(y,w)\}$.
Secondly, $\ecc(w) = d(z,w) = \max\{d(x,w), d(y,w)\}$, whereas $z \in F(w)$ has $d(x,y) = \max\{d(x,z), d(y,z)\} + 2$.
Recall the implications of Lemma~\ref{lem:duality1} and Lemma~\ref{lem:duality2} which state that
for a mutually distant pair $x,y$ and fixed vertices~$u \in V$ and~$v \in F(u)$,
if either $\ecc(u)$ or $d(x,y)$ is realized by its upper bound as given in the above inequalities, then the other value is realized by its lower bound.
So, it is not possible to obtain for the same $u \in V$ and~$v \in F(u)$ that both $\ecc(u) = \max\{d(x,u), d(y,u)\} + 2$ and $d(x,y) = \max\{d(x,v), d(y,v)\} + 2$ are true
 (Figure ~\ref{fig:dualityIsSharp}(a) uses a different starting vertex~$u$ to illustrate both upper bounds - with $u:=z$ and then with $u:=w$).
However, we show in Figure ~\ref{fig:dualityIsSharp}(b) an example when for fixed vertices~$u$ and~$v \in F(u)$, both $\ecc(u) = \max\{d(x,u), d(y,u)\} + 1$ and $d(x,y) = \max\{d(x,v), d(y,v)\} + 1$ are true.

\begin{figure}[H]
\centering
\includegraphics[scale=0.7]{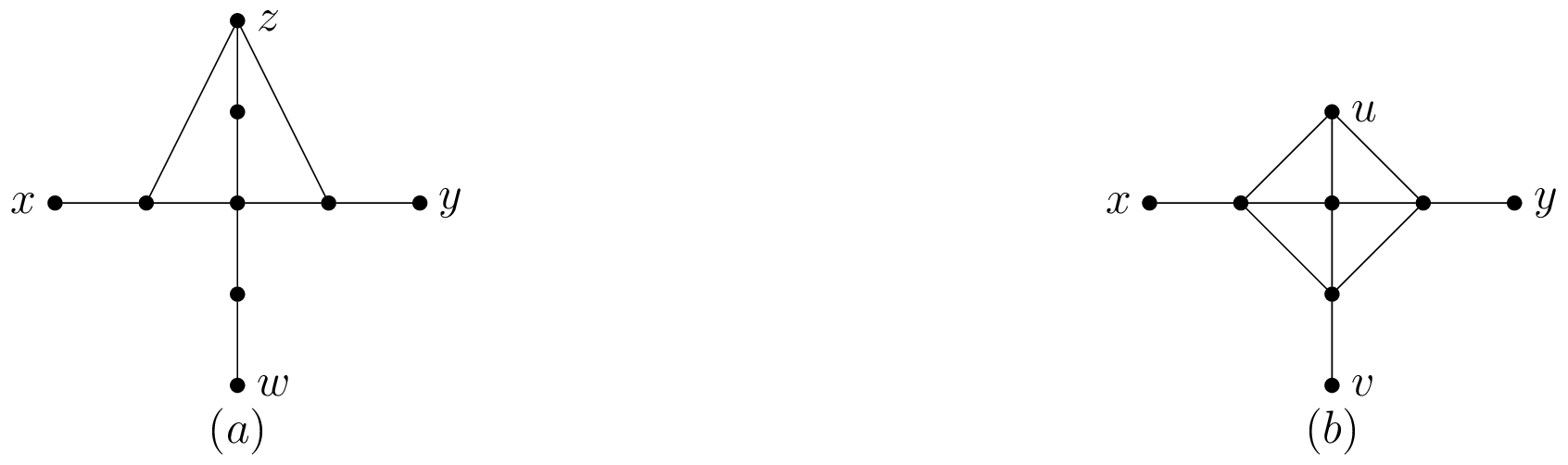}
\caption{Illustration to the sharpness of Lemma~\ref{lem:duality1} and Lemma~\ref{lem:duality2}.}
\label{fig:dualityIsSharp}
\end{figure}

In the case when~$x$ and~$y$ form a diametral pair, Lemma~\ref{lem:duality2} yields a result known from~\cite{YEH2003297}. We have $diam(G)=d(x,y)\le \max\{d(v,x),d(v,y)\}+2\le \ecc(v)+2$. 
\begin{corollary}~\cite{YEH2003297}\label{cor:vertexCloseToDiameter}
If vertex~$v$ of $G$ is a furthest vertex from any~$u \in V$, then $e(v) \geq diam(G) - 2$.
\end{corollary}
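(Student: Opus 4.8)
The plan is to derive Corollary~\ref{cor:vertexCloseToDiameter} as an immediate specialization of Lemma~\ref{lem:duality2}, exactly as the sentence preceding the statement suggests. Fix any vertex $u \in V$ and let $v \in F(u)$ be a furthest vertex from $u$; we want to show $e(v) \geq diam(G) - 2$. To invoke Lemma~\ref{lem:duality2} we need a mutually distant pair $x,y$, so the first step is to choose $x,y$ to be a diametral pair of $G$, i.e., $d(x,y) = diam(G)$. A diametral pair is in particular mutually distant: if $d(x,y) = diam(G)$ then $y$ is a furthest vertex from $x$ and $x$ is a furthest vertex from $y$, so $x \in F(y)$ and $y \in F(x)$. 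Hence the hypotheses of Lemma~\ref{lem:duality2} are satisfied with this choice of $x,y$, $u$, and $v$.

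Next I would simply read off the conclusion. Setting $\alpha \coloneqq d(v,x)$ and $\beta \coloneqq d(v,y)$, Lemma~\ref{lem:duality2} gives
\[
d(x,y) \leq \max\{\alpha,\beta\} + 2.
\]
Since $x,y$ is a diametral pair, $d(x,y) = diam(G)$, and since $v$ has eccentricity $e(v)$, both $\alpha = d(v,x) \leq e(v)$ and $\beta = d(v,y) \leq e(v)$, so $\max\{\alpha,\beta\} \leq e(v)$. Combining, $diam(G) = d(x,y) \leq \max\{\alpha,\beta\} + 2 \leq e(v) + 2$, which rearranges to $e(v) \geq diam(G) - 2$, as desired.

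There is essentially no obstacle here: the entire content of the corollary is already contained in Lemma~\ref{lem:duality2}, and the only thing to verify is the trivial observation that a diametral pair is a mutually distant pair, which makes the lemma applicable. I would present the argument in just two or three sentences. The one point worth stating explicitly (and the only place a careless reader might stumble) is that we are free to choose $x,y$ to be \emph{any} mutually distant pair we like when applying Lemma~\ref{lem:duality2}, and choosing a diametral pair is what converts the generic bound $d(x,y) \leq \max\{d(v,x),d(v,y)\}+2$ into a statement about $diam(G)$. No case analysis, no use of the \FPC{} directly, and no appeal to Theorem~\ref{thm:localityAndCloseToCenter} is needed.
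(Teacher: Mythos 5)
Your proposal is correct and is essentially the paper's own argument: the text immediately preceding the corollary derives it by taking $x,y$ to be a diametral pair (which is mutually distant) and reading off $diam(G)=d(x,y)\le \max\{d(v,x),d(v,y)\}+2\le e(v)+2$ from Lemma~\ref{lem:duality2}, exactly as you do. Nothing to add.
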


Corollary \ref{cor:vertexCloseToDiameter} can be used to find a pair of mutually distant vertices of a distance-hereditary graph in linear time. Pick an arbitrary start vertex $v_0$. With at most five BFSs find vertices $v_1,\dots, v_k$ ($2\le k\le 5$) such that $v_i\in F(v_{i-1})$ and $d(v_k,v_{k-1})=d(v_{k-1},v_{k-2})$. Since, by Corollary \ref{cor:vertexCloseToDiameter}, $\ecc(v_1)\ge diam(G)-2$, there are at most two improvements on $\ecc(v_1)$ to get a required mutually distant pair $v_{k-1},v_{k-2}$.

Now, by Lemma \ref{lem:duality1}, the obtained distances from $v_{k-1}$ and $v_{k-2}$ to all vertices $u\in V$ already yield good lower bounds for vertex eccentricities in $G$ (they are within 2 from exact eccentricities).
We next show that, in some cases, they are even closer.

\begin{corollary}\label{cor:xycDiff}
Let $x,y$ be a mutually distant pair of $G$ and let~$u \in V$.
\setlist{nolistsep}
\begin{enumerate}[noitemsep, label=(\roman*)]
	\item If $|d(x,u) - d(y,u)| \geq 2$, then $\ecc(u) = \max\{d(x,u), d(y,u)\}$.
	\item If $|d(x,u) - d(y,u)| = 1$ or $u \in C(G)$, then $\max\{d(x,u), d(y,u)\} \leq e(u) \leq \max\{d(x,u), d(y,u)\} + 1$.
\end{enumerate}
\end{corollary}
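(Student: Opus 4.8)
# Proof Proposal for Corollary~\ref{cor:xycDiff}

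The plan is to derive both parts directly from Lemma~\ref{lem:duality1}, which already pins down $\ecc(u)$ to within a window of size at most $2$ and, crucially, tells us exactly when the window has width $2$. Let $v \in F(u)$ and write $\alpha = d(x,u)$, $\beta = d(y,u)$ as in that lemma. The lemma gives
\[\max\{\alpha,\beta\} \leq \ecc(u) \leq \max\{\max\{\alpha,\beta\},\ \min\{\alpha,\beta\}+2\},\]
together with the rigidity statement: if $\ecc(u) = \max\{\alpha,\beta\}+2$ then necessarily $\alpha = \beta$. So the whole corollary is really an exercise in reading off what that upper bound and that rigidity clause say under the two hypotheses on $|\alpha - \beta|$.

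For part (i), suppose $|\alpha - \beta| \geq 2$, i.e. $\max\{\alpha,\beta\} \geq \min\{\alpha,\beta\} + 2$. Then $\max\{\max\{\alpha,\beta\},\ \min\{\alpha,\beta\}+2\} = \max\{\alpha,\beta\}$, so Lemma~\ref{lem:duality1} squeezes $\ecc(u)$ between $\max\{\alpha,\beta\}$ and $\max\{\alpha,\beta\}$; hence $\ecc(u) = \max\{d(x,u),d(y,u)\}$, which is exactly the claim. (One could also argue via the rigidity clause: $\ecc(u)$ cannot be $\max\{\alpha,\beta\}+2$ since that would force $\alpha=\beta$, contradicting $|\alpha-\beta|\geq 2$, and it cannot be $\max\{\alpha,\beta\}+1$ because the upper bound $\min\{\alpha,\beta\}+2 < \max\{\alpha,\beta\}+1$ rules it out; but the direct squeeze is cleaner.)

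For part (ii), first take the case $|\alpha - \beta| = 1$, so $\min\{\alpha,\beta\}+2 = \max\{\alpha,\beta\}+1$, and Lemma~\ref{lem:duality1} gives $\max\{\alpha,\beta\} \leq \ecc(u) \leq \max\{\alpha,\beta\}+1$, which is the stated bound. (The rigidity clause again confirms the upper endpoint $\max\{\alpha,\beta\}+2$ is unreachable since $\alpha\neq\beta$, but that is already subsumed.) Now take the case $u \in C(G)$, i.e. $\ecc(u) = rad(G)$. If $\alpha = \beta$ then the lemma's window is $[\alpha,\alpha+2]$ a priori, so I must show $\ecc(u) \leq \max\{\alpha,\beta\}+1$ using centrality. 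Suppose for contradiction $\ecc(u) = rad(G) = \alpha + 2$. Since $x,y$ is a mutually distant pair, $d(x,y) \leq \ecc(x) = \ecc(y)$; a diametral/mutually-distant pair has large eccentricity, and in any case $d(x,y) \leq d(x,u)+d(u,y) = 2\alpha = 2(rad(G)-2) = 2rad(G)-4$. On the other hand $\ecc(x) \geq rad(G)$, and from $x \in F(y)$ we get $\ecc(x) = d(x,y) \leq 2rad(G)-4$, so $rad(G) \leq 2rad(G)-4$, i.e. $rad(G) \geq 4$ — not yet a contradiction, so this crude bound is insufficient and I need to bring in Proposition~\ref{prop:eccentricityOfFurthestWrtRadius} or the rigidity structure. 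The cleaner route: apply the "moreover" part of Lemma~\ref{lem:duality1}. If $\ecc(u) = \max\{\alpha,\beta\}+2$ then $d(v,x) = d(v,y) = d(x,y)$, so $v \in F(x) \cap F(y)$, meaning $\ecc(x) = d(x,y) = d(x,v)$; but also $d(u,v) = \ecc(u) = rad(G)$ and $d(u,x) = \alpha = rad(G)-2$, giving $d(x,v) \geq d(u,v) - d(u,x) $ is uninformative — instead use $d(x,v) \leq d(x,u) + d(u,v) = (rad(G)-2) + rad(G) = 2rad(G)-2$, while $d(x,v) = d(x,y) = \ecc(x) \geq rad(G)$; combined with $diam(G) \geq d(x,y)$ and $\ecc(x)=d(x,y)$ we get $x \in D(G)$-like behavior. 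The clean finish: $u$ is central, $u$ lies on no shortest path realizing $d(x,v)$ unless $d(x,u)+d(u,v) = d(x,v)$; we have $d(x,u)+d(u,v) = 2rad(G)-2$ and $d(x,v) = \ecc(x) \leq diam(G)$. By the distance-hereditary bound $diam(G) \leq 2rad(G)$, this is consistent, so I genuinely need the rigidity: from $\alpha=\beta=rad(G)-2$ and $d(v,x)=d(v,y)=d(x,y)$, consider the \FPC{} on $x,y,u,v$ or invoke that $u$ would then be a center at distance $rad(G)-2$ from both ends of a pair of eccentricity $\geq 2rad(G)-3$ (Proposition~\ref{prop:eccentricityOfFurthestWrtRadius}) while $d(x,y) \leq 2rad(G)-4$, contradicting $d(x,y) = \ecc(x) \geq 2rad(G)-3$. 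That last inequality chain, $2rad(G)-3 \leq \ecc(x) = d(x,y) \leq d(x,u)+d(u,y) = 2rad(G)-4$, is the contradiction. Hence $\ecc(u) \leq \max\{\alpha,\beta\}+1$ when $u \in C(G)$, completing part (ii).

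The main obstacle is the $u \in C(G)$ sub-case of part (ii): unlike the $|\alpha-\beta|=1$ case, Lemma~\ref{lem:duality1} alone leaves a window of width $2$, and closing it requires combining the rigidity clause ($\alpha=\beta$, $v \in F(x)\cap F(y)$) with a lower bound on the eccentricity of a mutually distant vertex — namely $\ecc(x) = d(x,y) \geq 2rad(G)-3$ from Proposition~\ref{prop:eccentricityOfFurthestWrtRadius} (since $y$ is furthest from $x$) — which collides with the triangle-inequality upper bound $d(x,y) \leq d(x,u) + d(u,y) = 2(rad(G)-2)$ forced by $u$ being central at the assumed distance. Everything else is a direct transcription of Lemma~\ref{lem:duality1}.
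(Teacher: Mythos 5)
Your proposal is correct and follows essentially the same route as the paper: parts (i) and the $|d(x,u)-d(y,u)|=1$ case are read off directly from Lemma~\ref{lem:duality1}, and the remaining case $u \in C(G)$ with $d(x,u)=d(y,u)$ is settled by assuming $e(u)=d(x,u)+2$ and colliding Proposition~\ref{prop:eccentricityOfFurthestWrtRadius} (which gives $d(x,y)=e(x)\geq 2rad(G)-3$ since $x,y$ are mutually distant) with the triangle inequality $d(x,y)\leq d(x,u)+d(u,y)=2rad(G)-4$. Despite the exploratory detours in your write-up, the final inequality chain is exactly the paper's argument, and the rigidity clause of Lemma~\ref{lem:duality1} you invoke along the way is not actually needed.
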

\begin{proof}
By Lemma~\ref{lem:duality1}, $\max\{d(x,u), d(y,u)\}\le \ecc(u) \leq \max\{\max\{d(x,u), d(y,u)\},\min\{d(x,u), d(y,u)\} + 2\}$. If $|d(x,u) - d(y,u)| \geq 2$,  then $\max\{d(x,u), d(y,u)\}\ge \min\{d(x,u), d(y,u)\} + 2$ and therefore $\ecc(u) = \max\{d(x,u), d(y,u)\}$. 
%
%
If $|d(x,u) - d(y,u)| = 1$, then $\ecc(u) \leq \max\{d(x,u), d(y,u)\} + 1$.
By contradiction assume now that $u \in C(G)$, $d(x,u) = d(y,u)$ and $e(u) = d(x,u) + 2$. 
By Proposition~\ref{prop:eccentricityOfFurthestWrtRadius}, we have $d(x,y) = e(x) \geq 2rad(G) - 3$.
By the triangle inequality, $d(x,y) \leq d(x,u) + d(u,y) = 2(rad(G) - 2) = 2rad(G) - 4$, a contradiction.
\end{proof}


In what follows, we analyze deeper the case when $d(u,x) = d(u,y)$. As the graph on Figure \ref{fig:dualityIsSharp} showed, in this case,  
$e(u) = \max\{d(x,u), d(y,u)\} + 2$ may happen. However, we demonstrate that it happens not very often. First we show that if $d(x,y)$ is odd, then still $e(u) \le \max\{d(x,u), d(y,u)\} + 1$. For this we will need one auxiliary lemma. 

\begin{lemma}\label{lem:sliceWhenEquidistantToOddXY}
Let $u,x,y$ be vertices of~$G$.
If $d(u,x) = d(u,y)$ and $d(x,y) = 2k+1$ for some integer $k$,
then all vertices $s \in S_k(x,y)$ satisfy $d(u,x) = d(u,s) + k$.
\end{lemma}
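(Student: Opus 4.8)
The plan is to show that $S_k(x,y)$ — the "middle slice" of any shortest $(x,y)$-path when $d(x,y)=2k+1$ — behaves like a single gateway between $x$ and $y$ as seen from $u$. Concretely, I want to prove that $d(u,s)+k = d(u,x)$ for every $s\in S_k(x,y)$, which combined with the symmetric statement about $S_{k+1}(x,y)$ (where $d(u,s')+k+1 = d(u,y)$, obtained the same way) uses $d(u,x)=d(u,y)$ only to keep the two halves consistent; but really each half is proved independently. The key structural fact available is Proposition~\ref{prop:slicesAreJoined}: consecutive interval slices $S_k(x,y)$ and $S_{k+1}(x,y)$ are joined. So any vertex of $S_k$ is adjacent to any vertex of $S_{k+1}$, and in particular the slices are "thin" in the metric sense.

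**First I would** fix $s\in S_k(x,y)$ and $s'\in S_{k+1}(x,y)$, so $ss'\in E$ by Proposition~\ref{prop:slicesAreJoined}, and $d(x,s)=k$, $d(y,s')=k$, $d(s,s')=1$. The triangle inequality gives $d(u,s)\le d(u,x)+k$ and, going the other way through the $y$-side, $d(u,s')\le d(u,y)+k = d(u,x)+k$, while also $d(u,x)\le d(u,s)+k$ and $d(u,y)\le d(u,s')+k$. Hence $d(u,x)-k \le d(u,s) \le d(u,x)+k$ and likewise for $s'$. To pin down $d(u,s)$ exactly I would apply the 4-point condition (Proposition~\ref{prop:dhgCharacterization}\ref{byFourPointCondition}) to the quadruple $u,x,y,s$ (and separately to $u,x,y,s'$). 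The three sums are $d(u,x)+d(y,s)$, $d(u,y)+d(x,s)$, $d(u,s)+d(x,y)$. Using $d(u,x)=d(u,y)$, $d(x,s)=k$, $d(y,s)=k+1$, $d(x,y)=2k+1$, the first two sums are $d(u,x)+k+1$ and $d(u,x)+k$, and the third is $d(u,s)+2k+1$. Since the two smaller of these three must be equal (and the largest exceeds them by at most $2$), I expect to force $d(u,s)+2k+1$ to equal $d(u,x)+k+1$, i.e. $d(u,s)=d(u,x)-k$ — provided $d(u,s)$ is small enough that the third sum is not strictly the smallest. The case analysis here is the crux.

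**The hard part will be** ruling out the configuration where $d(u,s)$ is so small that $d(u,s)+d(x,y)$ becomes one of the two smaller sums. If $d(u,s)+2k+1 < d(u,x)+k$, then the two largest sums $d(u,x)+k+1$ and $d(u,x)+k$ are forced equal, which is absurd; so that cannot happen and we get $d(u,s)\ge d(u,x)-k-1$. The borderline case $d(u,s)=d(u,x)-k-1$ (giving sums $d(u,x)+k+1$, $d(u,x)+k$, $d(u,x)+k$, with the two smaller equal and the largest exceeding by exactly $1$) is not excluded by the 4-point condition alone, so I must eliminate it by hand. Here I would bring in $s'\in S_{k+1}(x,y)$ with $ss'\in E$: the symmetric argument on $u,x,y,s'$ gives $d(u,s')\in\{d(u,x)-k-1, d(u,x)-k, d(u,x)-k+1\}$ with the analogous borderline issue, and then $|d(u,s)-d(u,s')|\le 1$ since $ss'\in E$. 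If $d(u,s)=d(u,x)-k-1$, then we'd need $d(u,s')\le d(u,x)-k$, and I would chase a shortest $(u,s)$-path together with the edge $ss'$ and the shortest $(s',y)$-path: this builds a path from $u$ to $y$ of length $(d(u,x)-k-1)+1+k = d(u,x) = d(u,y)$, so it is a shortest $(u,y)$-path and must be "chordal-free"; examining which chords an induced path can have, using the absence of induced $C_{\ge 5}$, house, domino, and gem (Proposition~\ref{prop:dhgCharacterization}\ref{byForbiddenSubgraphs}), should force $x$ itself to lie too close to $u$, contradicting $d(u,x)=d(u,y)$ and $d(x,y)=2k+1$ being the largest such sum.

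**An alternative I would keep in reserve** is to avoid the delicate borderline elimination by instead arguing via item~\ref{byDownNeighbors} of Proposition~\ref{prop:dhgCharacterization}: consider the BFS layers $N^i(u)$, note that $x$ and a shortest-path neighbor structure connecting $x$ to $y$ stays within a controlled band of layers, and use that vertices of $N^j(u)$ connected outside $N^{j-1}(u)$ have identical down-neighborhoods, which propagates "$d(u,\cdot)$ is constant on each slice and decreases by exactly one per slice step toward the middle." This is essentially how Lemma~\ref{lem:twoFromCenter} was argued, so re-using that technique should be routine; the parity $d(x,y)=2k+1$ is exactly what guarantees the middle does not split into two slices at the same distance from $u$, which is the only place oddness is needed. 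Once $d(u,s)=d(u,x)-k$ is established for one $s\in S_k(x,y)$, Proposition~\ref{prop:slicesAreJoined} and the fact that all of $S_k(x,y)$ is joined to all of $S_{k+1}(x,y)$ (hence any two vertices of $S_k(x,y)$ are within distance $2$, and in fact have a common neighbor) lets me transport the value to every vertex of the slice by the same 4-point argument, completing the proof.
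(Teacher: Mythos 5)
Your setup is exactly the paper's: apply the four-point condition to the quadruple $u,x,y,s$, compute the three sums $d(u,s)+2k+1$, $d(u,x)+k+1$, $d(u,x)+k$, note that the last two differ by exactly $1$ (this is where $d(u,x)=d(u,y)$ and the oddness of $d(x,y)$ enter), so the first sum must equal one of them, and equality with $d(u,x)+k+1$ is precisely the claim. The problem is how you handle the other alternative. You treat $d(u,s)=d(u,x)-k-1$ as a genuinely hard borderline case that must be eliminated ``by hand'' via a vertex $s'\in S_{k+1}(x,y)$, a chord analysis along a $(u,y)$-path, and the forbidden subgraphs, and you leave that elimination as a speculation (``should force $x$ itself to lie too close to $u$''). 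That sketch is both unexecuted and misdirected: the concatenation $u$--$s$--$s'$--$y$ you propose has length $(d(u,x)-k-1)+1+k=d(u,y)$, which is perfectly consistent and yields no contradiction by itself. The case dies in one line through $x$, not $y$: by the triangle inequality $d(u,x)\le d(u,s)+d(s,x)=d(u,s)+k$, so $d(u,s)\ge d(u,x)-k$, and the value $d(u,x)-k-1$ is simply impossible. You even wrote this very bound in your second paragraph ($d(u,x)-k\le d(u,s)$) and then failed to use it; the paper's proof consists of exactly this observation and nothing more.

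Two smaller points. First, the four-point condition asserts only that at least two of the three sums are equal (not that ``the two smaller must be equal''); your case analysis happens to respect this, but state it correctly. Second, the closing machinery --- transporting the value across the slice via Proposition~\ref{prop:slicesAreJoined}, and the alternative layer/down-neighbor argument --- is unnecessary: the four-point argument is applied to an arbitrary $s\in S_k(x,y)$, so once the wrong case is excluded by the triangle inequality, every vertex of the slice is handled simultaneously.
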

\begin{proof}
Let $s \in S_k(x,y)$ and $d(u,x)=d(u,y)$.
necessarily,  $d(s,y) = k + 1$.
Consider the \FPC{} on vertices $x,y,u,s$.
We have  $d(u,s) + d(x,y) = d(u,s) + 2k + 1$,
 $d(u,x) + d(s,y) = d(u,x) + k + 1$,
and  $d(u,y) + d(x,s) = d(u,x) + k$.
Since at least two sums must be equal and the latter two sums are not, we consider the two remaining cases.
If $d(u,s) + d(x,y) = d(u,x) + d(s,y)$, then $d(u,x) = d(u,s) + (2k+1) - (k+1) = d(u,s) + k$, and we are done.
If $d(u,s) + d(x,y) = d(u,y) + d(x,s)$, then $d(u,x) = d(u,y) = d(u,s) + (2k+1) - k = d(u,s) + k + 1$.
However, by the triangle inequality, $d(u,x) \leq d(u,s) + d(s,x) = d(u,s) + k$, a contradiction.
\end{proof}

Next lemma handles the case when $d(x,y)$ is odd.

\begin{lemma}\label{lem:ecc1ApproxWhenOdd}
Let $x,y$ be a mutually distant pair of $G$.
If $d(x,y)$ is odd, then any vertex $u \in V$ has $e(u) \leq max\{d(x,u), d(y,u)\} + 1$.
\end{lemma}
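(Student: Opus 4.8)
The plan is to reduce the statement to the single delicate case $d(u,x)=d(u,y)$, and then to rule out the value $\ecc(u)=d(u,x)+2$ by routing a short walk from $u$ to a furthest vertex through the middle slice $S_k(x,y)$, where parity is what saves one unit.

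First I would dispose of the easy cases using what is already available. By Lemma~\ref{lem:duality1} applied to $u$ and any $v\in F(u)$, with $\alpha:=d(u,x)$ and $\beta:=d(u,y)$, we have $\ecc(u)\le\max\{\max\{\alpha,\beta\},\,\min\{\alpha,\beta\}+2\}$. If $|\alpha-\beta|\ge 2$ then $\max\{\alpha,\beta\}\ge\min\{\alpha,\beta\}+2$, so $\ecc(u)=\max\{\alpha,\beta\}$; and if $|\alpha-\beta|=1$ then $\min\{\alpha,\beta\}+2=\max\{\alpha,\beta\}+1$, so $\ecc(u)\le\max\{\alpha,\beta\}+1$. (This is exactly Corollary~\ref{cor:xycDiff}.) Hence it remains only to treat the case $\alpha=\beta=:a$, where Lemma~\ref{lem:duality1} guarantees only $\ecc(u)\le a+2$.

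So assume for contradiction that $d(u,x)=d(u,y)=a$ and $\ecc(u)=a+2$, and pick $v\in F(u)$. Since $\ecc(u)=\max\{\alpha,\beta\}+2$, the ``moreover'' clause of Lemma~\ref{lem:duality1} gives $d(v,x)=d(v,y)=d(x,y)$. Write $d(x,y)=2k+1$ and choose any $s\in S_k(x,y)$ (nonempty as $k\ge 0$), so $d(s,x)=k$ and $d(s,y)=k+1$. Now I would apply Lemma~\ref{lem:sliceWhenEquidistantToOddXY} twice: applied to $u$ (which has $d(u,x)=d(u,y)$) it yields $d(u,s)=d(u,x)-k=a-k$, and applied to $v$ (which has $d(v,x)=d(v,y)=2k+1$) it yields $d(v,s)=d(v,x)-k=k+1$. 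Then the triangle inequality gives $\ecc(u)=d(u,v)\le d(u,s)+d(s,v)=(a-k)+(k+1)=a+1$, contradicting $\ecc(u)=a+2$; this closes the case and proves $\ecc(u)\le\max\{d(u,x),d(u,y)\}+1$.

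I do not anticipate a genuine obstacle here: the whole argument is the bookkeeping of two applications of the auxiliary Lemma~\ref{lem:sliceWhenEquidistantToOddXY}, one from $u$ and one from the furthest vertex $v$, combined with one triangle inequality. The only point worth a sentence of care is the degenerate instance $k=0$ (i.e.\ $d(x,y)=1$), where $S_0(x,y)=\{x\}$, $s=x$, and the same computation $d(u,v)\le d(u,x)+1$ still goes through; after noting this, the proof is complete.
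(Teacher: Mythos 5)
Your proof is correct and follows essentially the same route as the paper's: assume $\ecc(u)=\max\{d(u,x),d(u,y)\}+2$, use the ``moreover'' clause of Lemma~\ref{lem:duality1} to force $d(u,x)=d(u,y)$ and $d(v,x)=d(v,y)=d(x,y)$ for $v\in F(u)$, apply Lemma~\ref{lem:sliceWhenEquidistantToOddXY} from both $u$ and $v$ to a middle-slice vertex $s\in S_k(x,y)$, and conclude $d(u,v)\le d(u,x)+1$ by the triangle inequality, a contradiction. Your explicit preliminary reduction via Corollary~\ref{cor:xycDiff} and the remark on the $k=0$ case are harmless additions; nothing is missing.
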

\begin{proof}
Let $d(x,y) = 2k + 1$ for some integer $k$. By contradiction assume that $e(u) = \max\{d(x,u), d(y,u)\} + 2$. 
Consider a vertex $v \in F(u)$.
By Lemma~\ref{lem:duality1}, when $e(u) = \max\{d(x,u), d(y,u)\} + 2$, we have $d(u,x) = d(u,y) = e(u) - 2$, and $d(x,y) = d(x,v) = d(y,v)$.
Let $s \in S_k(x,y)$.
By Lemma~\ref{lem:sliceWhenEquidistantToOddXY} applied to vertex~$u$ and to vertex~$v$, we have $d(u,x) = d(u,s) + k$ and $d(v,x) = d(v,s) + k$.
By the triangle inequality, $e(u)=d(u,v) \leq d(u,s) + d(s,v) = (d(u,x) - k) + (d(v,x) - k) = d(u,x) + d(v,x) - 2k = d(u,x) + d(x,y) - 2k = d(u,x) +1$, contradicting with  $e(u) = d(u,x) + 2$.
\end{proof}

The following theorem summarizes the obtained bounds on the eccentricity of any vertex~$u \in V$. The distances from~$u$ to a selected pair of vertices is very close to the eccentricity of~$u$, and in some cases, measures it exactly.

\begin{theorem}\label{thm:allBounds}
Let $x,y$ be mutually distant vertices of $G$.
For every vertex~$u \in V$, the following holds:
\begin{equation*}
\max\{d(x,u), d(y,u)\} \le \ecc(u) \le \max\{d(x,u), d(y,u)\} + \begin{cases}
0, &\text{if $|d(x,u) - d(y,u)| \geq 2,$}\\
1, &\text{if $|d(x,u) - d(y,u)| = 1$ or $d(x,y)$ is odd,}\\
2, &\text{otherwise.}
\end{cases}
\end{equation*}
\end{theorem}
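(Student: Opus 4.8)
The plan is to assemble the theorem directly from the bounds already proved, splitting according to which clause of the displayed right-hand side applies.

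First I would dispose of the lower bound, which requires nothing from the distance-hereditary structure: for any $v \in F(u)$ we have $\ecc(u) = d(u,v) \ge d(u,w)$ for every $w \in V$, so in particular $\ecc(u) \ge \max\{d(x,u), d(y,u)\}$. For the upper bound I would distinguish the three regimes in the order in which the cases are listed. When $|d(x,u) - d(y,u)| \ge 2$, Corollary~\ref{cor:xycDiff}(i) gives exactly $\ecc(u) = \max\{d(x,u), d(y,u)\}$, which is the ``$+0$'' clause. When this fails but $|d(x,u) - d(y,u)| = 1$, Corollary~\ref{cor:xycDiff}(ii) yields $\ecc(u) \le \max\{d(x,u), d(y,u)\} + 1$; and when $d(x,y)$ is odd, Lemma~\ref{lem:ecc1ApproxWhenOdd} yields the same bound $\ecc(u) \le \max\{d(x,u), d(y,u)\} + 1$ irrespective of the value of $|d(x,u)-d(y,u)|$. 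Together these two sub-cases cover the ``$+1$'' clause. Finally, in the remaining situation $d(x,u) = d(y,u)$ with $d(x,y)$ even, the general upper bound of Lemma~\ref{lem:duality1}, namely $\ecc(u) \le \max\{\max\{d(x,u),d(y,u)\},\ \min\{d(x,u),d(y,u)\}+2\} \le \max\{d(x,u),d(y,u)\}+2$, supplies the ``$+2$'' clause.

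Since every statement invoked has already been established in the excerpt, there is essentially no mathematical obstacle; the only care needed is bookkeeping. I would verify that the case split is exhaustive (the three hypotheses $|d(x,u)-d(y,u)| \ge 2$, $|d(x,u)-d(y,u)| = 1$, $d(x,u)=d(y,u)$ partition all of $V$, and the ``or $d(x,y)$ is odd'' alternative only strengthens the second case) and that when several clauses could in principle apply — for instance $|d(x,u)-d(y,u)| \ge 2$ together with $d(x,y)$ odd — the clause read top-to-bottom is the strongest valid one, which it is since $0 \le 1 \le 2$. Thus the ``hard part'' is merely matching the hypotheses of Corollary~\ref{cor:xycDiff} and Lemma~\ref{lem:ecc1ApproxWhenOdd} to the branches of the displayed formula exactly, and the proof is a short case analysis with no new ideas.
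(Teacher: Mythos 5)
Your proposal is correct and follows exactly the paper's route: the theorem is stated there as a summary of Lemma~\ref{lem:duality1}, Corollary~\ref{cor:xycDiff}, and Lemma~\ref{lem:ecc1ApproxWhenOdd}, which is precisely the assembly you carry out. The lower bound and the case bookkeeping are handled just as the paper intends, so there is nothing to add.
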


As a consequence of Theorem~\ref{thm:allBounds}, in distance-hereditary graphs, all eccentricities with an additive one-sided error of at most 2 can be computed in linear time. {We further investigated the case in which $d(x,u) = d(y,u)$ and $d(x,y)$ is even, but this case proved to be difficult and did not lead to any improvements on the upper bounds for $\ecc(u)$. 
}

We remark that a 2-approximation of eccentricities is known for distance-hereditary graphs. 
One common approach to approximating eccentricities in a graph $G$ is via an eccentricity $k$-approximating spanning tree~$T$ \cite{JGAA-Dragan,Dragan2018,Dragan2017EccentricityAT,Prisner2000}, i.e.,  a spanning tree $T$ of $G$ such that  $e_T(v) - e_G(v) \leq k$ holds for each vertex~$v$ of $G$.
 Note that every additive tree $k$-spanner (a spanning tree $T$ of $G$ such that $d_T(x,y)\le d_G(x,y)+k$ holds for every vertex pair $x,y$) is eccentricity $k$-approximating. 
 However, there are graph families which do not admit any additive tree $k$-spanners and yet they have very good eccentricity approximating spanning trees. The introduction of eccentricity approximating spanning trees is an attempt to weaken the restriction of additive tree spanners and instead closely approximate only distances to most distant vertices, the eccentricities. This is fruitful especially for those graphs for which additive tree $k$-spanners with small $k$ do not exist. For example, for every $k$ there is a chordal graph without an additive tree $k$-spanner, though every chordal graph has an eccentricity $2$-approximating spanning tree~\cite{Prisner2000,Dragan2017EccentricityAT} computable in linear time~\cite{Dragan2018}.
 More generally, all so-called $\delta$-hyperbolic graphs (note that chordal graphs are 1-hyperbolic) have an eccentricity $(4\delta+1)$-approximating spanning tree~\cite{JGAA-Dragan,hyperb-terrain}. As distance-hereditary graphs are also 1-hyperbolic, that general result for $\delta$-hyperbolic graphs implies that all distance-hereditary graphs have an eccentricity 5-approximating spanning tree. 
 In fact, the situation with distance-hereditary graphs is  
 even simpler.  They have additive tree 2-spanners~\cite{Prisner_1997} (computable in linear time) and therefore eccentricity 2-approximating spanning trees. Furthermore, in general, the additive error 2 in an eccentricity approximating spanning tree cannot be improved.  
 %
 A distance-hereditary graph in Figure~\ref{fig:dhgOneApproxTreeImpossible} has no eccentricity 1-approximating spanning tree. 
 Consider any edge $uv$ of the inner $C_4$ which is not present in $T$; either $e_T(u)=e_G(u)+2$ or $e_T(v)=e_G(v)+2$.
So, another approach is needed to efficiently compute all eccentricities in a distance-hereditary graph. In Section~\ref{sec:allEccentricities}, we present a new linear time algorithm for that which utilizes a characteristic pruning sequence. 

 \begin{figure}[H]
   \begin{center}
     \includegraphics[scale=0.9]{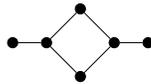}
     \caption{A distance-hereditary graph in which every spanning tree is eccentricity 2-approximating.}
     \label{fig:dhgOneApproxTreeImpossible}
   \end{center}
 \end{figure}


We turn now to a relation between the eccentricity of a vertex
and its distance to $C(G)$ or $C^1(G)$. 

\begin{lemma}\label{lem:eccFunctionCentersEqn}
Let~$v \in V$ be any vertex of an arbitrary graph $G$ and let $k$ be an integer.
If $e(v) = d(v,C^k(G)) + rad(G) + k$, then $e(v) = d(v,C^{k+1}(G)) + rad(G) + k + 1$.
\end{lemma}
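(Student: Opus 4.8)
The plan is to pin down $d(v,C^{k+1}(G))$ exactly by proving two opposite inequalities, after which the claimed equality is immediate arithmetic. Nothing beyond the triangle inequality and the obvious nesting $C^{k}(G)\subseteq C^{k+1}(G)$ is required, which is consistent with the statement being asserted for an arbitrary graph rather than a distance-hereditary one.

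First I would record the universal upper bound valid in every graph: for any vertex $v$ and any integer $j\ge 0$, choosing $c\in C^{j}(G)$ with $d(v,c)=d(v,C^{j}(G))$ gives $e(v)\le d(v,c)+e(c)\le d(v,C^{j}(G))+rad(G)+j$ by the triangle inequality. Applying this with $j=k+1$ yields $e(v)\le d(v,C^{k+1}(G))+rad(G)+k+1$, that is, $d(v,C^{k+1}(G))\ge e(v)-rad(G)-k-1$.

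For the matching reverse inequality I would use the key observation that a neighbour of a vertex of $C^{k}(G)$ lies in $C^{k+1}(G)$: if $c\in C^{k}(G)$ and $cw\in E$, then $e(w)\le e(c)+1\le rad(G)+k+1$. Under the hypothesis $e(v)=d(v,C^{k}(G))+rad(G)+k$ we have $d(v,C^{k}(G))=e(v)-rad(G)-k$; assuming $v\notin C^{k}(G)$ (the boundary case $v\in C^{k}(G)$, which forces $e(v)=rad(G)+k$, is outside the intended scope) this distance is at least $1$, so we may fix a nearest vertex $c\in C^{k}(G)$ and let $w$ be the vertex preceding $c$ on a shortest $(v,c)$-path. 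Then $w\in C^{k+1}(G)$ and $d(v,w)=d(v,C^{k}(G))-1=e(v)-rad(G)-k-1$, whence $d(v,C^{k+1}(G))\le e(v)-rad(G)-k-1$.

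Combining the two bounds gives $d(v,C^{k+1}(G))=e(v)-rad(G)-k-1$, which rearranges to $e(v)=d(v,C^{k+1}(G))+rad(G)+k+1$, as desired. The argument is short; there is no substantive obstacle, the only point needing a little care being the degenerate case $d(v,C^{k}(G))=0$, which must be excluded for the equality in the conclusion to make sense.
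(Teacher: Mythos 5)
Your proof is correct and takes essentially the same route as the paper's: fix a nearest $c \in C^{k}(G)$, note that the neighbour of $c$ on a shortest $(v,c)$-path lies in $C^{k+1}(G)$ to get $d(v,C^{k+1}(G)) \le e(v) - rad(G) - k - 1$, and combine this with the triangle-inequality bound $e(v) \le d(v,C^{k+1}(G)) + rad(G) + k + 1$. Your caveat about the degenerate case $v \in C^{k}(G)$ is apt, since the paper's argument implicitly assumes $d(v,C^{k}(G)) \ge 1$ as well.
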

\begin{proof}
Suppose $e(v) = d(v,C^k(G)) + rad(G) + k$ and let $u \in F(v)$.
Let~$c \in  C^{k}(G)$ be a closest vertex to~$v$ in $C^{k}(G)$.
Consider an adjacent vertex $z \in S_1(c,v)$.
By the choice of~$c$, $e(z) = rad(G) + k + 1$ and therefore $z \in C^{k+1}(G)$.
Then,  $e(v) = d(v,c) + rad(G) + k = d(v,z) + d(z,c) + rad(G) + k \geq d(v,C^{k+1}(G)) + rad(G) + k + 1$.
By the triangle inequality,  also $e(c) \leq d(v,C^{k+1}(G)) + rad(G) + k + 1$.
\end{proof}

\begin{lemma}\label{lem:eccFunctionEqn}
Let $G$ be a distance-hereditary graph.
\setlist{nolistsep}
\begin{enumerate}[noitemsep, label=(\roman*)]
	\item If $diam(G) < 2rad(G)$, then all vertices~$v \in V$ satisfy $e(v) = d(v,C(G)) + rad(G)$.
	\item All vertices~$v \in V\setminus C(G)$ satisfy $e(v) = d(v,C^1(G)) + rad(G) + 1$.
\end{enumerate}
\end{lemma}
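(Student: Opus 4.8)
The plan is to treat both parts as instances of one statement about ``descending'' along the eccentricity function. Set $\mu\in\{rad(G),\,rad(G)+1\}$ and let $C_\mu=\{u\in V: e(u)\le\mu\}$, so that $C_{rad(G)}=C(G)$ and $C_{rad(G)+1}=C^1(G)$. In both parts the inequality $e(v)\le d(v,C_\mu)+\mu$ is immediate: if $c\in C_\mu$ is a vertex of $C_\mu$ closest to $v$, then $e(v)\le d(v,c)+e(c)\le d(v,C_\mu)+\mu$ by the triangle inequality, i.e.\ $d(v,C_\mu)\ge e(v)-\mu$. So everything reduces to the reverse bound $d(v,C_\mu)\le e(v)-\mu$ for every $v$ with $e(v)\ge\mu$.

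I would prove this reverse bound by induction on $e(v)-\mu\ge 0$, using the hypothesis $(\ast)$ that \emph{every} vertex $v$ with $e(v)>\mu$ has a neighbor $w$ with $e(w)<e(v)$; since adjacent vertices have eccentricities differing by at most $1$, such a $w$ automatically satisfies $e(w)=e(v)-1\ge\mu$. If $e(v)=\mu$ then $v\in C_\mu$ and $d(v,C_\mu)=0$. If $e(v)>\mu$, pick $w$ as in $(\ast)$; by the induction hypothesis $d(w,C_\mu)\le e(w)-\mu=e(v)-1-\mu$, hence $d(v,C_\mu)\le e(v)-\mu$. Combined with $d(v,C_\mu)\ge e(v)-\mu$ this gives $e(v)=d(v,C_\mu)+\mu$, which is exactly the claimed identity.

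It remains to check hypothesis $(\ast)$ in the two cases, and this is where Theorem~\ref{thm:localityAndCloseToCenter} enters. For part~(i) take $\mu=rad(G)$: a vertex with $e(v)>rad(G)$ is non-central, and the exceptional (non-unimodal) alternative of Theorem~\ref{thm:localityAndCloseToCenter} forces $diam(G)=2rad(G)$, which is excluded here; hence $v$ has a neighbor of strictly smaller eccentricity, so the identity $e(v)=d(v,C(G))+rad(G)$ holds for every $v$ (for $v\in C(G)$ it is trivial). For part~(ii) take $\mu=rad(G)+1$: a vertex with $e(v)>rad(G)+1$ is non-central, and the exceptional alternative of Theorem~\ref{thm:localityAndCloseToCenter} requires $e(v)=rad(G)+1$, which is impossible; hence such $v$ has a neighbor of strictly smaller eccentricity, and for every $v$ with $e(v)\ge rad(G)+1$, i.e.\ every $v\notin C(G)$, we obtain $e(v)=d(v,C^1(G))+rad(G)+1$.

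There is no serious obstacle here; the only thing that requires care is that the non-unimodal exception of Theorem~\ref{thm:localityAndCloseToCenter} is harmless in both settings---ruled out entirely by $diam(G)<2rad(G)$ in part~(i), and confined to eccentricity exactly $rad(G)+1$ (i.e.\ to the boundary set $C^1(G)$ to which we are measuring distances) in part~(ii)---so that $(\ast)$ holds throughout the required range of $e(v)$. As alternatives, part~(i) follows in one line from Corollary~\ref{cor:pathToFurthestIntersectsCenter}(i) by reading off a shortest path from $v$ to a furthest vertex that passes through a central vertex, and part~(ii) analogously from Corollary~\ref{cor:pathToFurthestIntersectsCenter}(ii) (with a short extra step to account for the possibility that the witnessing center vertex lies in $C(G)$ rather than $C^1(G)\setminus C(G)$); moreover, once part~(i) is known, the sub-case $diam(G)<2rad(G)$ of part~(ii) is immediate from Lemma~\ref{lem:eccFunctionCentersEqn} applied with $k=0$.
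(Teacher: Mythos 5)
Your proposal is correct and follows essentially the same route as the paper: the paper's one-line proof invokes Theorem~\ref{thm:localityAndCloseToCenter} (descent along strictly decreasing eccentricities, with the exceptional vertices already lying in $C^1(G)$), together with Lemma~\ref{lem:eccFunctionCentersEqn} and Corollary~\ref{cor:pathToFurthestIntersectsCenter}, which are exactly the ingredients you use or cite as alternatives. Your unified induction on $e(v)-\mu$ for $\mu\in\{rad(G),rad(G)+1\}$ is just an explicit write-up of that descent, with the triangle inequality supplying the matching upper bound.
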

\begin{proof}
The statements follow from Theorem~\ref{thm:localityAndCloseToCenter} and Lemma~\ref{lem:eccFunctionCentersEqn} (see also Corollary \ref{cor:pathToFurthestIntersectsCenter}).
\end{proof}

\begin{lemma}\label{lem:allPaths}
Let $v$ be an arbitrary vertex of $G$ and $v'$ be an arbitrary vertex of $C(G)$ closest to $v$.
Then,
$d(v,C(G)) + rad(G) - 1 \leq e(v) \leq d(v,C(G)) + rad(G)$. \newline
Furthermore, all shortest paths $P \coloneqq (v' = x_0, x_1, ..., x_\ell = v)$, connecting $v$ with $v'$, satisfy the following:
\setlist{nolistsep}
\begin{enumerate}[noitemsep, label=(\alph*)]
    \item if $e(v) = d(v, C(G)) + rad(G)$, then $e(x_i) = d(x_i,C(G)) + rad(G) = i + rad(G)$ for each $i \in \{0,...,\ell\}$;
    \item if $e(v) = d(v, C(G)) + rad(G) -1$, then $e(x_i) = d(x_i, C(G)) -1 + rad(G) = i - 1 + rad(G)$ for each $i \in \{3,...,\ell\}$ and $e(x_1) = e(x_2) = rad(G) + 1$.
\end{enumerate}
\end{lemma}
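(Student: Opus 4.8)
The plan is to break into the two cases according to the value of $e(v)$, which by Lemma~\ref{lem:eccFunctionEqn} and Theorem~\ref{thm:localityAndCloseToCenter} is pinned to one of $d(v,C(G)) + rad(G)$ or $d(v,C(G)) + rad(G) - 1$ (the latter only in the $diam(G) = 2rad(G)$ regime, where $d(v,C(G)) = 2$ for all such exceptional vertices by Theorem~\ref{thm:localityAndCloseToCenter}, so $e(v) = rad(G)+1$). The two-line bound on $e(v)$ itself is then immediate from Lemma~\ref{lem:eccFunctionEqn}: if $diam(G) < 2rad(G)$ we are in case (a) with equality $e(v) = d(v,C(G)) + rad(G)$; if $diam(G) = 2rad(G)$ and $v\notin C(G)$ then either $e(v) = d(v,C^1(G)) + rad(G) + 1$ with $d(v,C^1(G)) = d(v,C(G))$ (case (a) again) or $v \in C^1(G)\setminus C(G)$ with $d(v,C(G)) = 2$ (case (b)).

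For part (a): take a shortest path $P = (v' = x_0, \dots, x_\ell = v)$ with $x_0 \in C(G)$ and $e(v) = d(v,C(G)) + rad(G) = \ell + rad(G)$. The key monotonicity claim is $e(x_i) = i + rad(G)$. The upper bound $e(x_i) \le i + rad(G)$ follows from the triangle inequality $e(x_i) \le d(x_i, x_0) + e(x_0) = i + rad(G)$. For the matching lower bound, observe that along $P$ the eccentricity can drop by at most $1$ per step (it is a Lipschitz function on the graph), so going from $e(x_0) = rad(G)$ to $e(x_\ell) = \ell + rad(G)$ forces $e(x_i) \ge \ell + rad(G) - (\ell - i) = i + rad(G)$; combined with the upper bound this gives equality, and since $x_0 \in C(G)$ is a closest central vertex to each $x_i$ lying on $P$ (any closer central vertex to $x_i$ would shortcut $P$), we also get $d(x_i, C(G)) = i$, hence $e(x_i) = d(x_i, C(G)) + rad(G) = i + rad(G)$.

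For part (b): here $e(v) = rad(G) - 1 + d(v,C(G))$ with $d(v,C(G)) = \ell$, so $e(v) = \ell - 1 + rad(G)$, and since $v \notin C(G)$ we have $e(v) = rad(G)+1$, forcing $\ell = 2$; thus $x_1, x_2 = v$ with $d(x_1,C(G)) = 1$ and $d(x_2, C(G)) = 2$. We must show $e(x_1) = e(x_2) = rad(G)+1$. Since $x_1 \notin C(G)$, $e(x_1) \ge rad(G)+1$; since $x_1 \in N(x_0) \cap N(x_2)$ with $e(x_0) = rad(G)$ and $e(x_2) = rad(G)+1$, distance (Lipschitz) constraints give $e(x_1) \le rad(G)+1$. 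The statement as written for $\ell \ge 3$ in case (b) is vacuous given the above forcing of $\ell = 2$ — though if the authors intend a more general reading, one reconciles it by noting that for $\ell \ge 3$ the vertex $v$ would not be an exceptional $C^1 \setminus C$ vertex and case (b) cannot arise; I would state this reconciliation explicitly.

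The main obstacle I anticipate is the lower-bound direction in case (a): while "eccentricity is $1$-Lipschitz, so a path from a value $rad(G)$ vertex to a value $\ell + rad(G)$ vertex of length $\ell$ must have eccentricity exactly $i + rad(G)$ at step $i$" is clean, one must be careful that $x_0$ is genuinely a \emph{closest} central vertex to every intermediate $x_i$ — this uses that $P$ is a shortest $v$--$C(G)$ path, so $d(x_i, C(G)) = i$ and no vertex of $C(G)$ is strictly closer to $x_i$ than $x_0$ is along an alternate route. A subtlety is whether case (a) can ever produce $\ell=0$ (i.e. $v \in C(G)$); then the statement is trivially true. I would also double-check that in case (a) under $diam(G)=2rad(G)$ we genuinely have $d(v,C^1(G)) = d(v,C(G))$, which holds precisely because $v$ is \emph{not} one of the exceptional distance-$2$-from-center vertices of eccentricity $rad(G)+1$ — those fall into case (b).
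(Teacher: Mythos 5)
The two-sided bound and part (a) of your argument are fine and essentially match the paper: the upper bound is the triangle inequality through $x_0$, the lower bound in (a) is the $1$-Lipschitz property of the eccentricity along $P$, and $d(x_i,C(G))=i$ because $P$ is a shortest path from $v$ to a nearest central vertex. The problem is case (b), which you dismiss as (almost) vacuous, and that is where the real content of the lemma lies.

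Your inference ``$e(v)=d(v,C(G))+rad(G)-1$ and $v\notin C(G)$ force $e(v)=rad(G)+1$, hence $\ell=2$'' is a non sequitur. By Lemma~\ref{lem:eccFunctionEqn}(ii), every non-central $v$ satisfies $e(v)=d(v,C^1(G))+rad(G)+1$; case (a) corresponds to $d(v,C^1(G))=d(v,C(G))-1$ (not $d(v,C^1(G))=d(v,C(G))$, which would contradict $e(v)\le d(v,C(G))+rad(G)$), and case (b) corresponds to $d(v,C(G))=d(v,C^1(G))+2$, i.e.\ the \emph{nearest $C^1(G)$-vertex to $v$} is one of the exceptional vertices of Theorem~\ref{thm:localityAndCloseToCenter} at distance $2$ from the center. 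The vertex $v$ itself need not be exceptional and can be arbitrarily far from $C(G)$, so $\ell\ge 3$ is not excluded and the clause for $i\in\{3,\dots,\ell\}$ together with $e(x_1)=e(x_2)=rad(G)+1$ must actually be proved. What has to be shown is this: along any shortest $v$--$v'$ path the eccentricity sequence starts at $rad(G)$, ends at $\ell-1+rad(G)$, and changes by at most one per step, so it stalls exactly once, at some index $i$ with $e(x_i)=e(x_{i+1})=rad(G)+i$; the claim is that $i=1$ for \emph{every} such path. The paper proves this by taking the descent path $Q^*$ from $v$ to a closest vertex $w\in C^1(G)$ guaranteed by Theorem~\ref{thm:localityAndCloseToCenter}, extending it to a closest central vertex $c$, showing that $u_k$ and $x_k$ are connected in $<V\setminus N^{d(v,C(G))-k-1}(v)>$ so that Proposition~\ref{prop:dhgCharacterization}\ref{byDownNeighbors} yields the cross edges $u_kx_{k+1},u_{k+1}x_k\in E$ (which rules out a stall at $i\ge 3$), and then eliminating $i=2$ by a chord/induced-path argument around a vertex $z\in F(x_2)$. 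Nothing in your proposal addresses this, so the proof of part (b) is missing.
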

\begin{proof}
Let $r \coloneqq rad(G)$.
By the triangle inequality, $e(v) \leq d(v,C(G))+r$.
Let $e(v)\ge r+1$ and $w$ be a vertex of $C^1(G)$ closest to~$v$.
By the triangle inequality and Lemma~\ref{lem:twoFromCenter}, $d(v,v') = d(v,C(G)) \leq d(v,w) + d(w,C(G)) \leq d(v,C^1(G)) + 2$.
Combining with Lemma~\ref{lem:eccFunctionEqn}, one obtains $e(v) = d(v,C^1(G)) + r + 1 \geq d(v,C(G)) + r - 1$.

Consider an arbitrary shortest path $P \coloneqq (v' = x_0, x_1, ..., x_\ell = v)$ connecting $v$ with $v'$.
As adjacent vertices can have eccentricities which differ by at most one,
 $e(v) = d(v, C(G)) + r$ implies $e(x_i) = d(x_i,C(G)) + r = i + r$ for each $i \in \{0,...,\ell\}$.
If $e(v) = d(v, C(G)) + r -1$, there must exist an index $i \in \{1,...,\ell-1\}$ such that $e(x_i) = e(x_{i+1}) = r + i$
and $e(x_j) = r+j$ for $j \leq i$. We claim that $i=1$ must hold.

We first establish that $i \leq 2$.
By Theorem~\ref{thm:localityAndCloseToCenter}, there is a shortest path $Q^* \coloneqq (w = u_2, ..., u_\rho = v)$ connecting $v$ to
a vertex $w \in C^1(G)$ closest to $v$, where $d(w,C(G)) \leq 2$ and $e(u_k) = r + k - 1$ for all $k \in {2,...,\rho}$.
Let~$c$ be a vertex of $C(G)$ closest to~$w$.
We have that $d(v,C(G)) \leq d(v,c) \leq d(v,w) + d(w,c) \leq e(v) - e(w) + 2 = (d(v, C(G)) + r -1) - (r+1) + 2 = d(v,C(G))$.
Therefore, vertex~$c$ is also a closest to~$v$ central vertex, $d(w,C(G)) = 2$, and $\rho = \ell$.
Let now $Q \coloneqq (c = u_0, u_1, u_2, ..., u_\ell = v)$ be a shortest path which joins shortest paths $Q^*$ and any shortest path connecting $u_2$ to $c$.
As~$c$ is a closest to~$v$ central vertex, necessarily $e(u_1) = e(u_2) = r + 1$.
We claim that central vertices $u_0$ and $x_0$ are connected in $<V\setminus N^{d(v,C(G)) - 1}(v)>$ by respective paths to a vertex~$t \in F(v)$.
Let $u^* \in I(u_0,t)$, $u^* \neq u_0$. We have $d(u_0,t) \geq d(u^*,t) + 1$.
Assume that $d(u^*,v) \leq d(v,C(G)) - 1$.
Then, $e(v) = d(v,t) \leq d(v,u^*) + d(u^*,t) \leq (d(v,C(G)) - 1) + (d(u_0,t) - 1) \leq d(v,C(G)) + r - 2$, a contradiction with $e(v) = d(v,C(G)) + r - 1$.
Therefore, any vertex~$u^*$ on a shortest $(u_0,t)$-path satisfies $d(v,u^*) \geq d(v,C(G))$ and,
by symmetry, any vertex~$x^*$ on a shortest $(x_0,t)$-path also satisfies $d(v,x^*) \geq d(v,C(G))$.
The combined paths connect vertices~$u_0$ and~$x_0$ in $<V \setminus N^{d(v,C(G)) - 1}(v) >$.
As a result, for each $k \in \{0,...,\ell - 1\}$, vertices $u_k$ and $x_k$ are connected in $< V \setminus N^{d(v,C(G)) - k - 1}(v) >$.
Then, by Proposition~\ref{prop:dhgCharacterization}\ref{byDownNeighbors}, $u_{k}x_{k+1} \in E$ and $u_{k+1}x_{k} \in E$.
In particular, $u_2x_3 \in E$, $u_2x_1 \in E$, $u_1x_2 \in E$, and $x_0u_1 \in E$.
Recall that $e(u_2) = r + 1$.
If $i \geq 3$, then $e(x_3) = r + 3$; however, $e(x_3) \leq 1 + e(u_2) = r + 2$, a contradiction.

Assume now that $i=2$.
Hence, $e(x_2) = e(x_3) = r + 2$. Recall that $e(u_2) = e(u_1) = e(x_1) = r + 1$ and $e(x_0) = r$.
Let $z \in F(x_2)$; then $z$ is also furthest from vertices $u_1$, $x_1$, and $x_0$.
Denote by $Z$ a shortest path connecting $x_0$ and $z$, and let $z_0 \in Z$ be the vertex adjacent to $x_0$.
By distance requirements from each vertex $x_2$, $x_1$, and $u_1$ to furthest vertex~$z$, necessarily $z_0x_2 \notin E$, $z_0x_1 \notin E$, and $u_1z_0 \notin E$. 
Since $d(x_3,x_0) = 3$, $z_0x_3 \notin E$.
As $d(u_2,z) \leq e(u_2) = r + 1$, the path obtained by joining shortest paths $Z$ and $(x_0,u_1,u_2)$ by their common end-vertex $x_0$ is not induced.
The only possible chord is $u_2z_0$.
As $u_2$ and $x_2$ are connected in $<V \setminus N(x_0)>$, by Proposition~\ref{prop:dhgCharacterization}\ref{byDownNeighbors}, $z_0x_2 \in E$, a contradiction.
\end{proof}

Lemma~\ref{lem:eccFunctionEqn} and Lemma~\ref{lem:allPaths} establish a close relationship between the eccentricity of a vertex and its distance to a closest vertex of $C(G)$ or of $C^1(G)$.
We now use the following section to fully describe the structure of centers of distance-hereditary graphs.

\section{Centers of distance-hereditary graphs}\label{sec:centers}
In this section, we investigate the structure of centers  of distance-hereditary graphs and provide their full characterization. 
A subset $S \subseteq V$ is called \emph{$m^3$-convex} if and only if $S$ contains every induced path of length at least three between vertices of $S$.
It is known from~\cite{Dragan1999ConvexityAH} that the centers of HHD-free graphs are $m^3$-convex. As every distance-hereditary graph is HHD-free, the centers of distance-hereditary graphs are $m^3$-convex, too.
It is also known from~\cite{YEH2003297} that $C(G)$ is either a cograph or a connected graph with $diam(C(G)) = 3$.
 As every connected subgraph of a distance-hereditary graph is isometric, when $diam(C(G)) = 3$,  $C(G)$ is isometric and a distance-hereditary graph. We remark that if the diameter of a set~$S$ in a distance-hereditary graph is no more than 2 then, by definition, it induces a cograph (or, equivalently, a distance-hereditary graph of diameter at most 2).


To prove our main result of this section, we will need the following auxiliary lemmas.

\begin{lemma}\label{lem:centerWhen2r}
Let $x,y$ be a diametral pair of $G$, $diam(G) = 2rad(G)$,
and $S \coloneqq S_{rad(G)}(x,y)$.
Then, $S$ and $C(G)$ are cographs with $C(G) \subseteq S$,
any vertex of $S_{rad(G)+1}(x,y) \cup S_{rad(G)-1}(x,y)$ is universal to $S$, 
and $C^1(G) \subseteq D(S,1)$.
\end{lemma}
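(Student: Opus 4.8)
The plan is to exploit the structure of the interval slices between a diametral pair $x,y$ when $diam(G) = 2rad(G)$, using that neighboring slices are joined (Proposition~\ref{prop:slicesAreJoined}). First I would observe that every vertex $s \in S = S_{rad(G)}(x,y)$ satisfies $d(s,x) = d(s,y) = rad(G)$, so $\ecc(s) \geq rad(G)$, and I would argue $\ecc(s) = rad(G)$: any furthest vertex $z \in F(s)$ together with $x,y$ and $s$ can be fed into the \FPC{}; since $d(x,y) = 2rad(G)$ is the diameter and $s$ is midway, the distance sums force $d(s,z) \leq rad(G)$. Hence $S \subseteq C(G)$. Combined with Lemma~\ref{lem:twoFromCenter}-style distance bounds — more directly, since $diam(G)=2rad(G)$ every central vertex $c$ lies on a shortest $(x,y)$-path (as in the proof of Lemma~\ref{lem:dhgDiamCertificate}) and satisfies $d(x,c)=d(c,y)=rad(G)$ — we get the reverse inclusion $C(G) \subseteq S$, so in fact $C(G) = S$. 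Being an isometric connected subgraph of a distance-hereditary graph of diameter at most... here I must check $diam(S) \leq 3$: but $C(G)=S$ and by~\cite{YEH2003297} $diam(C(G)) \leq 3$, and a distance-hereditary graph of diameter at most $3$ with the additional slice structure is a cograph — actually I would argue $diam(S) \leq 2$ directly: two vertices $s,s'\in S$ both adjacent (by Proposition~\ref{prop:slicesAreJoined}) to every vertex of the nonempty slice $S_{rad(G)+1}(x,y)$, so $d(s,s')\leq 2$, giving that $S$ induces a cograph; then $C(G) \subseteq S$ also induces a cograph.

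Next I would handle the universality claim. Let $v \in S_{rad(G)+1}(x,y)$; then $v \in S_1(y, \cdot)$ relative to an appropriate endpoint, or more simply, $v$ lies on a shortest $(x,y)$-path one step beyond $S$ on the $x$-side... wait, indices: $S_{rad(G)+1}(x,y)$ is at distance $rad(G)+1$ from $x$, hence distance $rad(G)-1$ from $y$. By Proposition~\ref{prop:slicesAreJoined} applied with $k = rad(G)$, every vertex of $S = S_{rad(G)}(x,y)$ has $S_{rad(G)+1}(x,y) \subseteq N(v)$, i.e., $v$ is adjacent to all of $S$; that is exactly universality of $v$ to $S$. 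Symmetrically, every vertex of $S_{rad(G)-1}(x,y)$ is joined to $S$ by Proposition~\ref{prop:slicesAreJoined} with $k = rad(G)-1$, so it too is universal to $S$.

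Finally, for $C^1(G) \subseteq D(S,1)$: take $v$ with $\ecc(v) = rad(G)+1$. By Lemma~\ref{lem:twoFromCenter}, $d(v,C(G)) \leq 2$; if $d(v,C(G)) \leq 1$ we are done since $C(G)=S$. So suppose $d(v,C(G)) = 2$ and let $c \in S = C(G)$ be closest, with $w \in S_1(c,v)$. I would show $w$ is adjacent to some vertex of $S$, which places $v \in N[w] \subseteq D(S,1)$ — actually I need $w \in D(S,1)$, i.e., $d(w,S)\leq 1$, which is immediate as $w \in N(c)$ and $c \in S$; then $v \in N(w)$ gives $d(v,S) \leq d(v,w) + d(w,... )$ — no, that gives $d(v,S)\le 2$ only. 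The correct route: I would show $v$ itself is adjacent to a vertex of $S$. Consider a furthest vertex $z \in F(c) = F(w)$-type argument: as in the proof of Lemma~\ref{lem:twoFromCenter}, $w$ is non-central with a furthest vertex $u$ satisfying $u \in F(c) \cap F(w)$, and shortest $(x,y)$-path structure through $c$ interacts with $w,v$. The main obstacle — and the step I would spend the most care on — is precisely this last inclusion $C^1(G) \subseteq D(S,1)$: showing that a vertex of eccentricity $rad(G)+1$ at distance exactly $2$ from the center is in fact adjacent to a vertex of the midslice $S$, which requires combining the \FPC{} with Proposition~\ref{prop:dhgCharacterization}\ref{byDownNeighbors} to propagate the joinedness of slices up to $v$, ruling out the configuration where $v$ "hangs off" a neighbor of $S$ without touching $S$.
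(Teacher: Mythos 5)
Your first two assertions are handled correctly and essentially as in the paper: $C(G)\subseteq S$ follows from $d(x,c),d(y,c)\le rad(G)$ together with $d(x,c)+d(c,y)\ge d(x,y)=2rad(G)$; universality of every vertex of $S_{rad(G)+1}(x,y)\cup S_{rad(G)-1}(x,y)$ is exactly Proposition~\ref{prop:slicesAreJoined}; and $diam(S)\le 2$ (a common neighbour in $S_{rad(G)+1}(x,y)$) gives the cograph conclusion for $S$ and hence for $C(G)\subseteq S$. However, your intermediate claim $S\subseteq C(G)$ (so $C(G)=S$) is false, and the \FPC{} cannot force $\ecc(s)\le rad(G)$ for $s\in S$; it only yields $\ecc(s)\le rad(G)+2$, which is precisely Lemma~\ref{lem:duality1}. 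Counterexample: take an induced $C_4$ on $a,s,b,c$, attach pendant $x$ to $a$, pendant $y$ to $b$, and a path $c-w-z$. This graph is distance-hereditary, $rad(G)=2$ with $C(G)=\{c\}$, $diam(G)=4$ with diametral pair $x,y$, and $S_2(x,y)=\{s,c\}$, yet $\ecc(s)=d(s,z)=4$, so $s\notin C(G)$. The lemma only claims $C(G)\subseteq S$; your cograph and universality arguments survive without the false equality, but every later step leaning on $C(G)=S$ does not.

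The real gap is the third assertion, $C^1(G)\subseteq D(S,1)$, which you explicitly leave unproved ("the main obstacle"). Your reduction through Lemma~\ref{lem:twoFromCenter} is also the wrong one: a vertex at distance $2$ from $C(G)\subseteq S$ is only guaranteed to be at distance $2$ from $S$, as you yourself noticed, and since $S$ may be strictly larger than $C(G)$ the problem is not equivalent to being adjacent to a central vertex. The paper argues directly from the distances to $x$ and $y$: for $c\in C^1(G)$ one has $d(x,c),d(y,c)\le rad(G)+1$; if $d(x,c)<rad(G)$ then $c\in S_{rad(G)-1}(x,y)$ and Proposition~\ref{prop:slicesAreJoined} gives $c\in D(S,1)$; if $d(x,c)=d(y,c)=rad(G)$ then $c\in S$; otherwise, w.l.o.g.\ $d(x,c)=rad(G)+1$, and one takes the vertex $u\ne c$ on a shortest $(c,y)$-path closest to $x$ together with some $b\in S_{rad(G)+1}(x,y)$: either $d(x,u)\ge rad(G)+1$, in which case $b$ and $c$ are connected in $<V\setminus N^{rad(G)}(x)>$ and Proposition~\ref{prop:dhgCharacterization}\ref{byDownNeighbors} gives them a common neighbour in $S$, or $d(x,u)\le rad(G)$, which forces $d(x,u)=d(y,u)=rad(G)$ and hence $u\in S\cap N(c)$. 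Without an argument of this kind, your proof of the inclusion $C^1(G)\subseteq D(S,1)$ is missing, so the proposal as it stands does not establish the lemma.
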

\begin{proof}
Any central vertex~$c \in C(G)$ has $d(x,c) \leq rad(G)$ and $d(y,c) \leq rad(G)$, therefore, by distance requirements, $C(G) \subseteq S$.
By Proposition~\ref{prop:slicesAreJoined}, any vertex~$w \in S_{rad(G)+1}(x,y) \cup S_{rad(G)-1}(x,y)$ is universal to slice~$S$ and therefore universal to~$C(G)$.
Moreover, since the diameters of $S$ and $C(G)$ are no more than 2, both are cographs.

Let now $c \in C^1(G)$ and by contradiction assume $c \notin D(S,1)$.
Necessarily $d(x,c) \leq rad(G) + 1$ and $d(y,c) \leq rad(G) + 1$.
If $d(x,c) < rad(G)$ then, by distance requirements, $d(y,c) = rad(G) + 1$ and $c \in S_{rad(G) - 1}(x,y)$. By Proposition~\ref{prop:slicesAreJoined}, $c \in D(S,1)$, a contradiction.
If $d(x,c) = d(y,c) = rad(G)$ then $c \in S$, a contradiction.
Hence, we can assume, without loss of generality, that $d(x,c) = rad(G) + 1$.
Consider vertex~$u \neq c$ on a shortest path $P(c,y)$ closest to~$x$, and let $b \in S_{rad(G)+1}(x,y)$.
Then, $d(y,u) \leq rad(G)$ as $d(y,c) \leq rad(G)+1$. 
If $d(x,u) \geq rad(G) + 1$, then vertices~$b$ and~$c$ are connected in $<V\setminus N^{rad(G)}(x)>$ and therefore, by Proposition~\ref{prop:dhgCharacterization}~\ref{byDownNeighbors}, share common neighbors in $S$. So, $c \in D(S,1)$, a contradiction. 
If $d(x,u) \leq rad(G)$, then $d(x,u) = d(y,u) = d(y,c) - 1 = rad(G)$ as $d(x,y) = 2rad(G)$. Hence, $u \in S \cap N(c)$, a contradiction.
\end{proof}

\begin{lemma}\label{lem:centerWhen2r-1}
Let $x,y$ be a diametral pair of $G$, $diam(G) = 2rad(G) -1$,
and let
$A \coloneqq S_{rad(G)-1}(x,y)$ and $B \coloneqq S_{rad(G)-1}(y,x)$.
Then, $C(G)$ is a cograph and any edge $ab \in E$, where~$a \in A$ and $b \in B$, satisfies $C(G) \subseteq D(\{a,b\},1)$.
Moreover, there is a vertex $a \in A \cap C(G)$ and a vertex $b \in B \cap C(G)$.
\end{lemma}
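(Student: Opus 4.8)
\textbf{Proof plan for Lemma~\ref{lem:centerWhen2r-1}.}
The plan is to first locate $C(G)$ inside the ``middle layers'' of the interval $I(x,y)$, then exploit the joinedness of neighboring slices (Proposition~\ref{prop:slicesAreJoined}) and the down-neighbor property (Proposition~\ref{prop:dhgCharacterization}\ref{byDownNeighbors}) to show that a single edge across $A$ and $B$ dominates the center. Since $diam(G) = 2rad(G)-1$, any central vertex $c$ has $d(x,c) \le rad(G)$ and $d(y,c) \le rad(G)$; combined with $d(x,y) = 2rad(G)-1$ and the triangle inequality, this forces $d(x,c) + d(y,c) \in \{2rad(G)-1, 2rad(G)\}$, so $c$ lies in $S_{rad(G)-1}(x,y) \cup S_{rad(G)}(x,y) = A \cup B$ (note $S_{rad(G)}(x,y) = S_{rad(G)-1}(y,x) = B$). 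Hence $C(G) \subseteq A \cup B$. That $C(G)$ is a cograph then follows because the diameter of $A \cup B$ is small: any two vertices of $A$ are joined to a common vertex of $B$ by Proposition~\ref{prop:slicesAreJoined}, so $diam(A) \le 2$ and likewise $diam(B) \le 2$ and $d(a,b) \le 2$ for $a \in A$, $b \in B$ adjacent—I would spell out that $\mathrm{diam}(C(G)) \le 2$ using these joinedness facts, which forces a cograph.

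Next, fix an edge $ab \in E$ with $a \in A$, $b \in B$, and take any $c \in C(G)$; I want $d(c,\{a,b\}) \le 1$. By the previous paragraph $c \in A$ or $c \in B$; say $c \in A$ (the other case is symmetric by swapping $x$ and $y$). Now $c, a \in A = S_{rad(G)-1}(x,y)$, and $b \in S_{rad(G)}(x,y) = B$ with $b \in N(a)$; I claim $c$ and $a$ are connected in $\langle V \setminus N^{rad(G)-2}(x)\rangle$ via shortest paths to $y$, and so by Proposition~\ref{prop:dhgCharacterization}\ref{byDownNeighbors} applied at layer $k = rad(G)-1$ (with respect to base vertex $x$), the vertices $c$ and $a$ share the same neighbors in $S_{rad(G)}(x,y) = B$. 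In particular $b \in N(c)$, giving $d(c,\{a,b\}) \le 1$. The one subtlety is to verify that $c$ and $a$ indeed lie in a common component of $\langle V \setminus N^{rad(G)-2}(x)\rangle$: this holds because each of them is joined to the whole of $B$ (Proposition~\ref{prop:slicesAreJoined}), so $c, a$ and all of $B$ sit in one component, and $B \subseteq V \setminus N^{rad(G)-2}(x)$ since $d(x,B) = rad(G) > rad(G)-2$.

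Finally, for the existence of $a \in A \cap C(G)$ and $b \in B \cap C(G)$: take $a \in A$ and $b \in B$ with $ab \in E$ realizing a shortest $(x,y)$-path through them, so $d(x,a) = rad(G)-1$, $d(a,y) = rad(G)$, $d(x,b) = rad(G)$, $d(b,y) = rad(G)-1$. Then $e(a) \le rad(G)$ and $e(b) \le rad(G)$: indeed $a$ is universal to $A$'s neighboring slice structure, and more directly, by Corollary~\ref{cor:vertexCloseToDiameter} and the duality lemmas one already knows the eccentricities of such midpoints are tightly controlled; I would instead argue that $a$ (and $b$) has eccentricity exactly $rad(G)$ by noting $d(a,z) \le \max\{d(x,z),d(y,z)\} + 1 \le diam(G)+1$ is too weak, so better to use Proposition~\ref{prop:slicesAreJoined}: every vertex at distance $rad(G)$ from $a$ must reach $a$ through the joined slices, and a short case analysis on $d(x,z)$ versus $d(y,z)$ bounds $d(a,z)$ by $rad(G)$. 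The main obstacle I anticipate is precisely this last step—cleanly proving $e(a) = rad(G)$ for the chosen midpoint $a$ without circular reliance on the structure we are characterizing; I expect the cleanest route is a direct 4-point-condition argument on $\{x,y,a,z\}$ for an arbitrary vertex $z$, using $d(x,y) = 2rad(G)-1$ and $d(x,a)+d(a,y) = 2rad(G)-1$ to force $d(a,z) \le rad(G)$.
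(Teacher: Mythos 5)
Your opening step is where the argument first breaks: from $d(x,c)\le rad(G)$, $d(y,c)\le rad(G)$ and $d(x,y)=2rad(G)-1$ you conclude $c\in S_{rad(G)-1}(x,y)\cup S_{rad(G)}(x,y)=A\cup B$, but the slices are defined inside the interval $I(x,y)$, so this only covers central vertices with $d(x,c)+d(c,y)=2rad(G)-1$. A central vertex with $d(x,c)=d(y,c)=rad(G)$ lies outside $I(x,y)$ and hence outside both $A$ and $B$, and such vertices really occur: take the path $x,a,b,y$ plus a vertex $c$ adjacent to $a$ and $b$; this is distance-hereditary, $diam=3=2rad-1$, $c\in C(G)$, yet $c\notin I(x,y)$. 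So $C(G)\subseteq A\cup B$ is false, and both your domination claim and your cograph claim are unproved for exactly this case. The paper's proof spends most of its effort here: for $c$ with $d(x,c)=d(y,c)=rad(G)$ it picks the neighbor $u$ of $c$ on a shortest $(c,y)$-path, shows $d(x,u)\ge rad(G)$ (else $d(x,y)\le 2rad(G)-2$), deduces that $b$ and $c$ are connected in $\langle V\setminus N^{rad(G)-1}(x)\rangle$, and applies Proposition~\ref{prop:dhgCharacterization}\ref{byDownNeighbors} (and its symmetric version) to conclude that $c$ is universal to $A\cup B$, in particular adjacent to both $a$ and $b$. (Your treatment of the case $c\in A$ is fine, though Proposition~\ref{prop:slicesAreJoined} gives $b\in N(c)$ directly, without the component argument.)

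The existence part is also genuinely flawed, not merely ``difficult'' as you anticipate. You take an arbitrary edge $ab$ with $a\in A$, $b\in B$ on a shortest $(x,y)$-path and try to force $e(a)\le rad(G)$ by a \FPC{} argument on $\{x,y,a,z\}$; but such a midpoint need not be central. Example: vertices $x,a,b,y,w,z$ with edges $xa,ab,by,xw,wb,wz$ (an induced $C_4$ on $x,a,b,w$ with pendants $y$ at $b$ and $z$ at $w$) is distance-hereditary with $rad=2$, $diam=d(x,y)=3$; here $a\in A$, $b\in B$, $ab$ lies on a shortest $(x,y)$-path, yet $e(a)=d(a,z)=3=rad(G)+1$. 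So no local argument about an arbitrary such edge can work — the lemma asserts only that \emph{some} central $a\in A$ and $b\in B$ exist. The paper obtains this globally: assign $r(v)=1$ to central vertices and $r(v)=rad(G)-1$ to the rest, verify the hypotheses of Proposition~\ref{prop:dominatingClique}, get an $r$-dominating clique $K\subseteq C(G)$ (the verification uses $diam(C(G))\le 2$, i.e., the first part of the lemma), and then choose $a\in K$ closest to $x$ and $b\in K$ closest to $y$; the distance constraints force $d(x,a)=d(y,b)=rad(G)-1$ and $ab\in E$, hence $a\in A\cap C(G)$ and $b\in B\cap C(G)$. You would need to import this (or an equivalent global) argument to close the gap.
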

\begin{proof}
By Proposition~\ref{prop:slicesAreJoined}, slices~$A$ and~$B$ are joined.
Consider any $c \in C(G)$ and edge~$ab \in E$ for any vertex pair $a \in A$ and $b \in B$.
As $d(x,y)=2rad(G)-1$ and $\ecc(c)=rad(G)$, $d(x,c) < rad(G)$ implies $d(x,c) = rad(G) - 1$ and $d(y,c) = rad(G)$. Hence, $c \in A$.
By symmetry, if $d(y,c) < rad(G)$ then $c \in B$.
Assume now that $d(x,c) = d(y,c) = rad(G)$.
Then, $b,c \in N^{rad(G)}(x)$.
Consider vertex~$u \neq c$ on a shortest path $P(c,y)$ closest to~$x$.
We have  $d(y,u) \leq rad(G) - 1$ by the choice of~$u$.
If $d(x,u) \leq rad(G) - 1$, then $d(x,y) \leq d(x,u) + d(u,y) \leq 2rad(G) - 2$, a contradiction.
Thus, $d(x,u) \geq rad(G)$.
Vertices~$b$ and~$c$ are connected in $< V \setminus N^{rad(G)-1}(x) >$ by shortest paths to~$y$. By Proposition~\ref{prop:dhgCharacterization}~\ref{byDownNeighbors},
$b$ and~$c$ share neighbors in $A$. 
Therefore, $a \in N(c)$ and, by symmetry, $b \in N(c)$.
Hence, any central vertex either belongs to~$A \cup B$ or is universal to~$A \cup B$.
Thus, $C(G) \subseteq D(\{a,b\},1)$.
Additionally, since any pair of vertices in $C(G)$ is at most distance 2 apart, $C(G)$ is a cograph.

We now show the existence of vertices~$a \in A$ and $b \in B$ such that $a,b \in C(G)$.
Consider the family of disks $D(v,r(v))$ centered at each vertex~$v$,
where $r(v) = 1$ for all central vertices~$v \in C(G)$ and $r(v)=rad(G)-1$ for all others.
Any two non-central vertices~$u,v \in V \setminus C(G)$ have distance no more than the diameter, therefore $d(u,v) \leq 2rad(G) - 1 = r(u) + r(v) + 1$.
Any two central vertices~$u,v \in C(G)$ have distance no more than the diameter of the center, therefore $d(u,v) \leq 2 = r(u) + r(v)$.
By definition, any central vertex~$u \in C(G)$ sees any vertex~$v \in V$ within $rad(G)$, and therefore $d(u,v) \leq rad(G) = r(u) + r(v)$.
Hence, by Proposition~\ref{prop:dominatingClique}, there is an r-dominating clique~$K$.
As any non-central vertex has distance $rad(G)-1$ to a vertex of~$K$, we have $K \subseteq C(G)$.
Let~$a \in K$ be closest to~$x$ and let~$b \in K$ be closest to~$y$.
By distance requirements, $ab$ must be an edge with~$d(x,a)=rad(G)-1$ and $d(b,y)=rad(G)-1$.
Therefore, $a \in A$ and $b \in B$.
\end{proof}

\begin{corollary}
If $diam(G) \geq 2rad(G) - 1$ then $C(G)$ is a cograph.
\end{corollary}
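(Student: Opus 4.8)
The plan is to derive this corollary directly from the two preceding lemmas, which together cover all cases compatible with $diam(G) \geq 2rad(G) - 1$ in a distance-hereditary graph. First I would invoke the known inequality $diam(G) \geq 2rad(G) - 2$ for distance-hereditary graphs (from~\cite{Dragan_1994,YuChepoi91}), so that the hypothesis $diam(G) \geq 2rad(G) - 1$ forces $diam(G)$ to be either exactly $2rad(G) - 1$ or exactly $2rad(G)$; there are no other possibilities.

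In the case $diam(G) = 2rad(G)$, I would fix a diametral pair $x,y$ and apply Lemma~\ref{lem:centerWhen2r}, which states in particular that $C(G)$ is a cograph (being a subset of the slice $S = S_{rad(G)}(x,y)$, whose diameter is at most $2$). In the case $diam(G) = 2rad(G) - 1$, I would again fix a diametral pair $x,y$ and apply Lemma~\ref{lem:centerWhen2r-1}, which explicitly asserts that $C(G)$ is a cograph. Since these two cases exhaust all possibilities under the hypothesis, the corollary follows.

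Since both lemmas have already been established, there is essentially no obstacle here: the only thing to be careful about is making the case split exhaustive, which is precisely where the inequality $diam(G) \geq 2rad(G) - 2$ is used. So the proof is a two-line case analysis, and the "hard part" has already been done in Lemmas~\ref{lem:centerWhen2r} and~\ref{lem:centerWhen2r-1}.

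\begin{proof}
Since $G$ is distance-hereditary, $diam(G) \geq 2rad(G) - 2$ holds~\cite{Dragan_1994,YuChepoi91}, so the hypothesis $diam(G) \geq 2rad(G) - 1$ leaves only two possibilities: $diam(G) = 2rad(G)$ or $diam(G) = 2rad(G) - 1$. Fix a diametral pair $x,y$. If $diam(G) = 2rad(G)$, then by Lemma~\ref{lem:centerWhen2r} the set $C(G)$ is a cograph. If $diam(G) = 2rad(G) - 1$, then by Lemma~\ref{lem:centerWhen2r-1} the set $C(G)$ is a cograph. In either case $C(G)$ is a cograph.
\end{proof}
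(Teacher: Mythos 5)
Your proof is correct and takes essentially the same route the paper intends: the corollary is stated without proof precisely because it follows immediately from Lemmas~\ref{lem:centerWhen2r} and~\ref{lem:centerWhen2r-1} via the two-case split you give. One small correction: the exhaustiveness of that split rests on the trivial bound $diam(G) \leq 2rad(G)$ (valid in every graph), not on $diam(G) \geq 2rad(G) - 2$, which plays no role here.
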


It remains now to investigate the case when $diam(G)=2rad(G)-2$. 

\begin{lemma}\label{lem:hellyLikeCenters}
Let $diam(G) = 2rad(G) - 2$,
and let $M \subseteq C(G)$.
If all $u,v \in M$ satisfy $d(u,v) = 2$, then there is a vertex $c \in C(G)$ that is universal to $M$. 
\end{lemma}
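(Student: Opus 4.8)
The plan is to apply Proposition~\ref{prop:dominatingClique} not to $M$ alone but to $M$ enlarged by a carefully chosen set of ``far'' witnesses, and then to argue that the dominating vertex produced is forced into $C(G)$. Write $r\coloneqq rad(G)$ and assume $|M|\ge 2$ (if $|M|\le 1$ a central vertex universal to $M$ exists trivially); note that two vertices of $M$ at distance $2$ force $diam(G)\ge 2$, hence $r\ge 2$. Put $W\coloneqq\bigcup_{v\in M}F(v)$ and, on $M\cup W$, assign weights $r(u)=1$ for $u\in M$ and $r(u)=r-1$ for $u\in W\setminus M$.

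I would first check the hypothesis of Proposition~\ref{prop:dominatingClique}: two vertices of $M$ are at distance $2=1+1$; a vertex $v\in M$ and a vertex $w\in W$ satisfy $d(v,w)\le e(v)=r=1+(r-1)$; two vertices of $W$ satisfy $d\le diam(G)=2r-2=(r-1)+(r-1)$. Hence there is a single vertex, or a pair of adjacent vertices, that $r$-dominates $M\cup W$. The next step is to localise the $M$-part of this domination to one vertex $c\notin M$ with $d(c,w)\le r$ for all $w\in W$. In the single-vertex case this is immediate (the dominator cannot lie in $M$: it would be at distance $2>1$ from another vertex of $M$), and $d(c,w)\le r-1$. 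For an adjacent pair $\{c_0,c_1\}$: at most one of them lies in $M$ since they are adjacent while $M$ is $2$-scattered; if, say, $c_0\in M$, then every other vertex of $M$ must be dominated by $c_1$, so $c_1\notin M$ alone $1$-dominates $M$; and if neither lies in $M$ but neither alone $1$-dominates $M$, then picking $v_0,v_1\in M$ with $d(v_0,c_0)=d(v_1,c_1)=1$ and $d(v_0,c_1)=d(v_1,c_0)=2$ yields a path $v_0c_0c_1v_1$ of length $3$ joining the distance-$2$ pair $v_0,v_1$; being non-geodesic it is not induced, yet none of $v_0c_1$, $c_0v_1$, $v_0v_1$ can be an edge, a contradiction. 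So in every case some $c\notin M$ with $d(c,v)\le 1$ for all $v\in M$ exists, and for $w\in W$ either $d(c,w)\le r-1$ or $d(c,w)\le 1+(r-1)=r$ via the adjacent partner; thus $d(c,w)\le r$ throughout.

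Finally I would run the eccentricity argument. Since $c$ is adjacent to some $v\in M$ with $e(v)=r$, we get $e(c)\le r+1$. If $e(c)=r+1$, choose $z\in F(c)$; then for every $v\in M$ we have $r+1=d(c,z)\le d(c,v)+d(v,z)\le 1+e(v)=r+1$, so $d(v,z)=e(v)$, i.e.\ $z\in F(v)\subseteq W$, whence $d(c,z)\le r$, contradicting $d(c,z)=r+1$. Therefore $e(c)=r$, so $c\in C(G)$, and since $c\notin M$ and $d(c,v)\le 1$ for all $v\in M$, vertex $c$ is universal to $M$.

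The genuinely non-routine point is the choice $W=\bigcup_{v\in M}F(v)$: taking the union of all furthest-vertex sets (rather than one witness per vertex of $M$) is exactly what pins any $r$-dominator of $M\cup W$ into the center, because a furthest vertex of a hypothetical eccentricity-$(r+1)$ dominator is automatically furthest from all of $M$ and hence already lies in $W$. The forbidden-induced-$P_4$ step that collapses a dominating pair to a single vertex is short, and the remaining distance estimates use only the triangle inequality, so the four-point condition is not needed here.
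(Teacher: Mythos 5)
Your argument is correct, and it runs on the same engine as the paper's proof: Proposition~\ref{prop:dominatingClique} applied with weights $1$ on $M$ and $rad(G)-1$ elsewhere, followed by an argument that one vertex of the resulting dominating vertex/edge is adjacent to all of $M$. The difference is one of scope. The paper $r$-dominates \emph{all} of $V$ (radius $rad(G)-1$ on every vertex outside $M$), which makes centrality of the dominator(s) immediate -- every vertex is then within $rad(G)$ of each end of the dominating edge -- so the only remaining work is universality, done there by exhibiting a forbidden house, gem or $C_5$ from two vertices of $M$ attached to opposite ends of the edge. You instead dominate only $M\cup W$ with $W=\bigcup_{v\in M}F(v)$, which buys nothing over dominating $V$ but costs you the closing paragraph proving $e(c)=rad(G)$ (a furthest vertex of a hypothetical eccentricity-$(rad(G)+1)$ dominator is furthest from all of $M$, hence lies in $W$); that step is correct, but the choice of $W$ you single out as the non-routine idea is exactly what the paper's global domination renders unnecessary. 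Your replacement of the forbidden-subgraph step by the observation that $v_0c_0c_1v_1$ is a non-induced path of length $3$ between vertices at distance $2$ is a clean equivalent, and it even collapses the dominating edge to a single vertex, whereas the paper keeps the edge and only argues one end is universal. One quibble: for $|M|=1$ the conclusion (a central neighbor of the unique vertex of $M$) is not ``trivial''; it does hold when $diam(G)=2rad(G)-2$, but your main argument needs $|M|\ge 2$ -- harmless, since the lemma is only invoked in the paper with $|M|=3$.
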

\begin{proof}
Consider a disk of radius 1 centered at each $s \in M$ and a disk of radius $rad(G) - 1$ centered at each $v \in V \setminus M$.
Any two vertices $u,v \in V \setminus M$ satisfy $d(u,v) \leq diam(G) = 2rad(G) - 2 = r(u) + r(v)$.
Since $M \subseteq C(G)$, any $s \in M$ and $v \in V$ satisfy $d(s,v) \leq rad(G) = r(s) + r(v)$.
By assumption, any two $s,t \in M$ satisfy $d(s,t) = 2 = r(s) + r(t)$.
By Proposition~\ref{prop:dominatingClique}, there is a single vertex or a pair of adjacent vertices $r$-dominating $G$.
In the former case, we are done.
Thus, consider the case when there is an $r$-dominating edge $ab \in E$.
We have $a,b \in C(G)$ since all vertices~$v \in V$ see some  end-vertex of edge $ab$ within $rad(G) - 1$.
We claim that at least one end-vertex of edge $ab$ is universal to~$M$.
By contradiction assume there exist vertices $u,v \in M$ which are adjacent to opposite ends of edge $ab$.
Without loss of generality, let $u \in N(a) \setminus N(b)$ and $v \in N(b) \setminus N(a)$.
Since $d(u,v) = 2$, we get in $G$ either an induced $C_5$, or an induced house, or an induced gem. A contradiction obtained proves the lemma.
\end{proof}

\begin{lemma}\label{lem:centerWhen2r-2}
Let $x,y$ be a diametral pair of $G$, $diam(G) = 2rad(G) - 2$,
and let
$A \coloneqq S_{rad(G)-2}(x,y)$, $S \coloneqq S_{rad(G)-1}(x,y)$, and $B \coloneqq S_{rad(G)-2}(y,x)$.
Then $A \cup B \cup (S \cap C(G)) \subseteq C(G)$ and there is a vertex $c \in S \cap C(G)$.
Moreover, $C(G) \subseteq D(S \cap C(G), 1)$.
\end{lemma}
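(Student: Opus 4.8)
The plan is to follow the same template used in Lemma~\ref{lem:centerWhen2r-1}, adapting the radii to the fact that $diam(G) = 2rad(G)-2$. First I would show the inclusion $A \cup B \cup (S \cap C(G)) \subseteq C(G)$. The set $S \cap C(G) \subseteq C(G)$ is trivial, so the work is to show $A \subseteq C(G)$ (and $B \subseteq C(G)$ by symmetry). For a vertex $a \in A = S_{rad(G)-2}(x,y)$, we have $d(x,a) = rad(G)-2$; since $a$ lies on a shortest $(x,y)$-path, $d(a,y) = d(x,y) - d(x,a) = (2rad(G)-2) - (rad(G)-2) = rad(G)$. Hence $\ecc(a) \geq rad(G)$, and I must establish the reverse inequality $\ecc(a) \leq rad(G)$. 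For this, take $u \in F(a)$; apply the \FPC{} to $x,y,a,u$ together with $d(x,a) + d(a,y) = d(x,y)$ (so $a \in I(x,y)$), using the bounds $d(x,u) \leq diam(G) = 2rad(G)-2$ and $d(y,u) \leq 2rad(G)-2$, to force $d(a,u) \leq rad(G)$. This is the standard ``a vertex near the middle of a diametral path is central'' computation; the $diam = 2rad - 2$ slack is exactly what makes it work.

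Next I would prove the existence of a central vertex in $S = S_{rad(G)-1}(x,y)$ via an $r$-dominating clique argument, exactly as in Lemma~\ref{lem:centerWhen2r-1} and Lemma~\ref{lem:hellyLikeCenters}. Set $r(v) = 1$ for $v \in C(G)$ and $r(v) = rad(G)-2$ for $v \notin C(G)$. Then any two non-central vertices are at distance $\leq diam(G) = 2rad(G)-2 = r(u)+r(v)$; any central vertex and any vertex are at distance $\leq rad(G) = r(u)+r(v)$ when the other is non-central, and at distance $\leq diam(C(G)) \leq 3 \leq 2 = r(u)+r(v)$ — wait, this needs care: I should instead invoke that in this regime $C(G)$ is a cograph (so $diam(C(G)) \leq 2$), which itself follows from Lemma~\ref{lem:hellyLikeCenters} or from~\cite{YEH2003297}, giving $d(u,v) \leq 2 = r(u)+r(v)$ for central pairs. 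By Proposition~\ref{prop:dominatingClique} there is an $r$-dominating clique $K$, and since every non-central vertex is within $rad(G)-2$ of $K$, we get $K \subseteq C(G)$; picking the vertex of $K$ closest to $x$ and distance-counting places it in $S_{rad(G)-1}(x,y) \cap C(G)$ (a vertex of a clique inside $C(G)$ closest to $x$ has distance $rad(G)-1$ to $x$, since it must be strictly closer than the $B$-side vertex yet within $rad(G)$, and the join structure of slices from Proposition~\ref{prop:slicesAreJoined} pins it to the middle slice).

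Finally, for $C(G) \subseteq D(S \cap C(G), 1)$ I would argue that every central vertex $c$ is either in $A \cup B \cup S$ (all within distance $1$ of $S$, using Proposition~\ref{prop:slicesAreJoined} for $A,B$ and trivially for $S \cap C(G)$ — though here one must check $A,B$ vertices see a vertex of $S \cap C(G)$, not merely of $S$), or else $c$ has $d(x,c), d(y,c)$ both large and one runs the Proposition~\ref{prop:dhgCharacterization}\ref{byDownNeighbors} ``connected below $N^{k-1}(x)$ forces shared down-neighbors'' trick, as in Lemma~\ref{lem:centerWhen2r} and Lemma~\ref{lem:centerWhen2r-1}, to show $c$ is adjacent to a vertex of $S \cap C(G)$. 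The main obstacle I anticipate is precisely this last point: ensuring the down-neighbor shared by $c$ lies in $S \cap C(G)$ and not merely in $S$; this should follow because, having already proven $A \subseteq C(G)$, any vertex of $S$ adjacent to an $A$-vertex (hence within distance $2$ of $x$-side central vertices) will be forced into $C(G)$ by a further \FPC{} estimate, but tracking the distances carefully here is where the care is needed. I would also need Lemma~\ref{lem:hellyLikeCenters} available to handle the case where the relevant central vertices are pairwise at distance exactly $2$.
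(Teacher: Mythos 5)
Your first step ($A\cup B\subseteq C(G)$) is fine and is essentially the paper's argument (it is Lemma~\ref{lem:duality1} specialized to $d(a,x)=rad(G)-2$, $d(a,y)=rad(G)$). The first genuine gap is in your proof that $S\cap C(G)\neq\emptyset$. Your dominating-clique route breaks twice. With $r(v)=rad(G)-2$ on non-central vertices, two diametral vertices have $d(x,y)=2rad(G)-2>2(rad(G)-2)+1$, so Proposition~\ref{prop:dominatingClique} does not apply; and your fallback that $C(G)$ is a cograph when $diam(G)=2rad(G)-2$ is precisely what you may not assume -- this is the only regime in which $diam(C(G))=3$ can occur (case (ii) of the section's main theorem), and neither Lemma~\ref{lem:hellyLikeCenters} nor \cite{YEH2003297} yields $diam(C(G))\le 2$ here. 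One can repair the radii (say $r\equiv rad(G)-1$ off the center, $r\equiv 1$ on it, using the ``$+1$'' clique condition and $diam(C(G))\le 3$) and still obtain a clique $K\subseteq C(G)$, but then the deeper problem remains: with $diam(G)=2rad(G)-2$ the count $d(x,y)\le d(x,k_x)+1+d(k_y,y)\le 2rad(G)-1$ has slack $1$, so it does not force any vertex of $K$ onto a shortest $(x,y)$-path, hence not into $S$; Proposition~\ref{prop:slicesAreJoined} cannot ``pin'' $k_x$ to the middle slice because it only speaks about vertices already in $I(x,y)$. (The analogous count in Lemma~\ref{lem:centerWhen2r-1} works only because $diam=2rad-1$ makes it tight.) The paper does not use a dominating clique at all for this step: it assumes $S\cap C(G)=\emptyset$, picks $w\in S$ minimizing $|F(w)|$ and $v\in F(w)$, uses that $w$'s neighbors $s_1\in A$, $s_2\in B$ are central to get, via Proposition~\ref{prop:dhgCharacterization}\ref{byDownNeighbors}, a common neighbor $t\in S$ with $d(t,v)=rad(G)-1$, and then derives a 4-point-condition contradiction with $diam(G)=2rad(G)-2$ using some $u\in F(t)\setminus F(w)$.

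The second gap is in $C(G)\subseteq D(S\cap C(G),1)$. Your key mechanism -- that any vertex of $S$ adjacent to an $A$-vertex is forced into $C(G)$ by a further 4-point-condition estimate -- cannot be right: by Proposition~\ref{prop:slicesAreJoined} \emph{every} vertex of $S$ is adjacent to \emph{every} vertex of $A$, yet $S\not\subseteq C(G)$ in general (the lemma only asserts $S\cap C(G)\neq\emptyset$, and the paper's existence argument explicitly starts from a non-central $w\in S$). The paper proceeds differently: it first proves the intermediate claim $C(G)\subseteq D(M,1)$ with $M=A\cup B\cup(S\cap C(G))$ (the case $d(w,a)=d(w,b)=2$ via Lemma~\ref{lem:hellyLikeCenters}, the case $d(w,a)\ge 3$ via a 4-point-condition contradiction with $x,y$), and then upgrades it: if a central $w$ is adjacent to $a\in A$ but to no vertex of $S\cap C(G)$, then $d(w,y)=rad(G)$, $w$ and $a$ are connected in $<V\setminus N^{rad(G)-1}(y)>$, and Proposition~\ref{prop:dhgCharacterization}\ref{byDownNeighbors} gives $N(w)\supseteq N(a)\cap N^{rad(G)-1}(y)\supseteq S\cap C(G)$ -- that is, $w$ \emph{inherits} $a$'s central neighbors in $S$; no one claims $a$'s neighbors in $S$ are all central. (The side issue you flag about $A,B$ is immediate once some $c\in S\cap C(G)$ exists, since every vertex of $A\cup B$ is adjacent to $c$ by Proposition~\ref{prop:slicesAreJoined}.) As written, your proposal does not close either of these two steps.
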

\begin{proof}
Consider any $s \in A$. By Lemma~\ref{lem:duality1}, $e(s) \leq \max\{\max\{d(s,x), d(s,y)\}, \min\{d(s,x), d(s,y)\} + 2\} = d(s,y) = rad(G)$.
Hence, $s \in C(G)$ and, by symmetry, $A \cup B \subseteq C(G)$.

By contradiction assume there is no central vertex in $S$.
Let $w$ be a vertex from $S$ minimizing $|F(w)|$, and let $v \in F(w)$.
Since $w \notin C(G)$, $d(w,v) \geq rad(G) + 1$.
Denote by $s_1 \in S_1(w,x)$ and $s_2 \in S_1(w,y)$ two adjacent to $w$ vertices on a shortest path from $x$ to $y$. 
As previously established, both $s_1$ and $s_2$ are central since they belong to $A$ and $B$, respectively.
Thus, $d(s_1,v) \leq rad(G)$ and $d(s_2,v) \leq rad(G)$.
Therefore, $d(s_1,v) = d(s_2,v) = rad(G)$ and $d(w,v) = rad(G) + 1$.
Since $s_1,s_2 \in N^{rad(G)}(v)$ and are connected via~$w$ in the graph $< V \setminus N^{rad(G)-1}(v) >$, by Proposition~\ref{prop:dhgCharacterization}\ref{byDownNeighbors}, there is a vertex $t \in N^{rad(G)-1}(v)$ adjacent to~$s_1$ and~$s_2$.
As $t \in S_{rad(G)-1}(y,x)$ and $v \in F(w) \setminus F(t)$,
by minimality of $|F(w)|$, there is a vertex $u \in F(t) \setminus F(w)$.
By our assumption, $u\notin C(G)$, i.e.,  $d(u,t) \geq rad(G) + 1$.
Consider the \FPC{} on vertices $t,w,v,u$. We have 
 $d(v,u) + d(w,t) = d(v,u) + 2$,
 $d(v,t) + d(u,w) \leq rad(G) - 1 + rad(G)$, and
 $d(v,w) + d(u,t) \ge rad(G) + 1 + rad(G) + 1 = 2rad(G) + 2$.
Since the latter two sums differ by more than 2, necessarily, $d(v,u) + d(w,t) = d(v,w) + d(u,t)$.
Hence, $d(v,u) = d(v,w) + d(u,t) - d(w,t) \ge 2rad(G)$, a contradiction with $diam(G) = 2rad(G) - 2$.
Thus, there is a vertex $c \in S \cap C(G)$.

Next, we establish an intermediate claim that $C(G) \subseteq D(M,1)$, where $M \coloneqq A \cup B \cup (S \cap C(G))$.
By contradiction suppose there is a vertex $w \in C(G)$ with $w \notin D(M,1)$.
Consider arbitrary vertices $a \in A$ and $b \in B$.
Thus, $d(a,b)=2$, and $a,b \in M$ and, by the choice of~$w$, necessarily $d(w,a) \geq 2$ and $d(w,b) \geq 2$.
If $d(w,a)=d(w,b)=2$ then, by Lemma~\ref{lem:hellyLikeCenters} applied to the set $\{w,a,b\}$, there is a central vertex~$u$ adjacent to $w,a,b$.
In this case $u \in S \cap C(G)$ and therefore $u \in M$, contradicting with $w \notin D(M,1)$.
Assume now, without loss of generality, that $d(w,a) \geq 3$. 
Consider the \FPC{} on vertices $w,x,y,a$.
We have $d(x,y) + d(w,a) \geq 2rad(G) + 1$ is the largest sum
since $d(x,w) + d(y,a) \leq 2rad(G)$
and $d(x,a) + d(w,y) \leq 2rad(G) - 2$.
Since the smaller two sums must be equal and differ from the larger one by at most two, inequality $d(x,y) + d(w,a) \geq d(x,a) + d(y,w) + 3$ gives a contradiction which establishes the claim that $C(G) \subseteq D(M,1)$.

Finally, we establish that $C(G) \subseteq D(S \cap C(G), 1)$.
By contradiction assume there is a central vertex $w \in C(G) \setminus S$ which is not adjacent to any vertex of $S \cap C(G)$.
By the previous claim, $w$ is adjacent to some vertex from $A$ or $B$.
Without loss of generality, let $wa \in E$ for some vertex $a \in A$.
Since $d(a,y) = rad(G)$, necessarily, $d(w,y) \geq rad(G) - 1$.
If $d(w,y) = rad(G) - 1$ then $w \in S$, a contradiction.
So, $d(w,y) = rad(G)$ must hold.
Now, vertices~$w$ and~$a$ are connected in $< V \setminus N^{rad(G)-1}(y) >$.
By Proposition~\ref{prop:dhgCharacterization}~\ref{byDownNeighbors}, $N(w) \cap N^{rad(G)-1}(y) = N(a) \cap N^{rad(G)-1}(y)$.
By Proposition~\ref{prop:slicesAreJoined}, also $S \cap C(G) \subseteq N(a) \cap N^{rad(G)-1}(y)$.
Thus, $w$ is universal to $S \cap C(G)$, a contradiction.
\end{proof}

We are ready to prove the main result of this section. 
\begin{theorem}
Let $H$ be a subgraph of a distance-hereditary graph $G$ induced by $C(G)$.
Either \\
(i) $H$ is a cograph, or\\
(ii) $H$ is a connected distance-hereditary graph with $diam(H)=3$ and $C(H)$ is a connected cograph with $rad(C(H))=2$.\\
Furthermore, any such graph $H$ is the center of some distance-hereditary graph.
\end{theorem}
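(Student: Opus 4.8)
The plan is to prove the two assertions in turn: first the structural dichotomy, then realizability. For the dichotomy, if $H$ is a cograph we are in case (i), so assume $H$ is not a cograph. By~\cite{YEH2003297}, $C(G)$ is then connected with $diam(C(G))=3$; combining this with the earlier corollary that $diam(G)\ge 2rad(G)-1$ forces $C(G)$ to be a cograph, and with the known bound $diam(G)\ge 2rad(G)-2$, we get $diam(G)=2rad(G)-2$. As a connected induced subgraph of the distance-hereditary graph $G$, $H$ is isometric, hence itself distance-hereditary, with $diam(H)=3$; a distance-hereditary graph of diameter $3$ has no universal vertex (one would force diameter $\le 2$) yet satisfies $diam\ge 2rad-2$, so $rad(H)=2$. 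Applying the same corollary to $H$ shows $C(H)$ is a cograph (as $diam(H)=3=2rad(H)-1$), and applying Lemma~\ref{lem:centerWhen2r-1} to $H$ produces an edge $ab$ of $H$ with $a,b\in C(H)$ such that every vertex of $C(H)$ is within distance $1$ of $\{a,b\}$; in particular $C(H)$ is connected, and (being a connected induced subgraph of $H$) isometric in $H$.

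What remains, and the step I expect to be the crux, is $rad(C(H))=2$, i.e.\ that no vertex of $C(H)$ is adjacent to all other vertices of $C(H)$; this is exactly the condition that rules out, say, $P_4$ (whose centre $K_2$ has radius $1$) as a centre, so the argument must genuinely use the embedding $H=C(G)\hookrightarrow G$. My approach is to apply Lemma~\ref{lem:centerWhen2r-2} to a diametral pair $x,y$ of $G$: with $A:=S_{rad(G)-2}(x,y)$, $S:=S_{rad(G)-1}(x,y)$ and $B:=S_{rad(G)}(x,y)$ one has $A\cup B\subseteq C(G)=V(H)$, both $A$ and $B$ are joined to $S$ (neighbouring slices, Proposition~\ref{prop:slicesAreJoined}), every $a\in A$, $b\in B$ satisfy $d_G(a,b)=2$, and $C(G)\subseteq D(S\cap C(G),1)$. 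Since $A$ (resp.\ $B$) is joined to the dominating set $S\cap C(G)$, each vertex of $H$ lies within $H$-distance $2$ of every vertex of $A$ (resp.\ $B$), whence $A\cup B\subseteq C(H)$. If $c^{*}$ were universal in $C(H)$ it could not lie in $A$ (an $A$-vertex is at distance $2$ from the nonempty set $B\subseteq C(H)$) nor, symmetrically, in $B$, and being adjacent to some vertex of $A$ and some vertex of $B$ pins $c^{*}$ into $S\cap C(G)$; one also checks a vertex $z$ furthest from $c^{*}$ cannot lie in $V(H)$ (else $z$ has a neighbour in $C(H)$, forcing $d_G(c^{*},z)\le 2<rad(G)$), so $z$ is a non-central vertex of $G$ at distance exactly $rad(G)$ from $c^{*}$. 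From this configuration — a middle-slice vertex $c^{*}$ adjacent to all of $C(H)$, a far non-central vertex $z$, and the induced $P_4$ on a diametral pair of $H$ together with the edge $ab$ from Lemma~\ref{lem:centerWhen2r-1} — I would extract a contradiction with $diam(G)=2rad(G)-2$ by ruling out the house, the gem and long induced cycles among these vertices and their common neighbours. Pinning down this last contradiction cleanly is the delicate part of the proof.

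For realizability, cographs are settled by~\cite{YEH2003297}, so assume $H$ is as in (ii) and put $\widetilde C:=C(H)$ and $P:=V(H)\setminus\widetilde C$ (note $P\ne\emptyset$ since $diam(H)=3$). Because $\widetilde C$ is a connected cograph of radius $2$ (hence of diameter $2$) and isometric in $H$, every $v\in\widetilde C$ has $H$-eccentricity exactly $2$ and some $v'\in\widetilde C$ with $d_H(v,v')=2$, while every $p\in P$ has $H$-eccentricity $3$, realized by a (necessarily non-central) vertex of $P$. Construct $G$ from $H$ by attaching to each $v\in\widetilde C$ one new pendant vertex $v^{+}$. Then $G$ is distance-hereditary: deleting the vertices $v^{+}$ one at a time (each is pendant) returns $H$, which has a pruning sequence by Proposition~\ref{prop:dhgCharacterization}\ref{byPruningSequence}. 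Distances within $V(H)$ are unchanged, so a short eccentricity count gives $e_G(u)=3$ for every $u\in V(H)$: for $u\in\widetilde C$ all $G$-distances from $u$ are $\le 3$, with $3$ attained at $v'^{+}$ for $v'$ with $d_H(u,v')=2$; for $u\in P$ they are $\le 3$, with $3$ attained at a $P$-vertex at $H$-distance $3$. Meanwhile $e_G(v^{+})\ge d_G(v^{+},v'^{+})=d_H(v,v')+2\ge 4$ for a suitable $v'$, so $rad(G)=3$, $C(G)=V(H)$, and the subgraph of $G$ induced by $C(G)$ is exactly $H$ (also $diam(G)=4=2rad(G)-2$, consistent with the non-cograph-centre case), which completes the construction.
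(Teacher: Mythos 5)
Most of your argument is sound and close to the paper's: the reduction to $diam(G)=2rad(G)-2$, $diam(H)=3$, $rad(H)=2$, the fact that $C(H)$ is a connected cograph, and the realizability constructions (pendants attached to $C(H)$ in case (ii)) all match the paper or are easily verified variants. The genuine gap is exactly where you flag it: the claim $rad(C(H))=2$, i.e.\ that no vertex of $C(H)$ is universal to $C(H)$, is never actually proved. You assemble a configuration (a universal vertex $c^{*}$ forced into the middle slice $S_{rad(G)-1}(x,y)$, a furthest vertex $z\in F(c^{*})$ outside $C(G)$, a diametral $P_4$ of $H$) and then say you ``would extract a contradiction \dots by ruling out the house, the gem and long induced cycles,'' deferring the delicate part. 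That deferred step is the heart of statement (ii); without it the theorem is not established, and it is not evident that the configuration you describe is contradictory by forbidden-subgraph considerations alone. In addition, a supporting claim inside this sketch is unjustified: to exclude $z\in V(H)$ you assert that every vertex of $H$ has a neighbour in $C(H)$, but what you have derived (via Lemma~\ref{lem:centerWhen2r-1} applied to $H$) is only $C(H)\subseteq D(\{a,b\},1)$, which dominates $C(H)$, not all of $H$. (This sub-gap is fillable: Proposition~\ref{prop:dominatingClique} with unit radii gives a dominating clique $K$ of $H$ with $K\subseteq C(H)$, which is how the paper also gets connectivity of $C(H)$.)

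For comparison, the paper closes the crucial step by a direct argument that avoids any forbidden-subgraph case analysis: for an arbitrary $c\in C(H)$ take $t\in F_G(c)$, so $d_G(c,t)=rad(G)$; since $diam(G)=2rad(G)-2<2rad(G)$, Lemma~\ref{lem:eccFunctionEqn}(i) (resting on Theorem~\ref{thm:localityAndCloseToCenter}) gives a $G$-central vertex $z$ with $d_G(t,z)=e_G(t)-rad(G)\le rad(G)-2$, whence $cz\notin E$ and, as $c\in C(H)$, $d_H(c,z)=2$; a short \FPC{} computation on $c,u,z,t$ then shows no $u\in H$ can be at $H$-distance $3$ from $z$, so $z\in C(H)$ and $c$ has a non-neighbour in $C(H)$. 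If you want to keep your route, you would have to replace the deferred forbidden-subgraph step with an argument of comparable substance; as written, the proposal does not prove part (ii).
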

\begin{proof}
If $diam(G) \geq 2rad(G) - 1$ then, by Lemma~\ref{lem:centerWhen2r} and Lemma~\ref{lem:centerWhen2r-1}, $H$ is a cograph.
Assume now that $diam(G) = 2rad(G) - 2$ and $diam(H) = 3$ (if $diam(H) \leq 2$ then, by definition, $H$ is a cograph). 
Then, $H$ is a connected distance-hereditary graph~\cite{YEH2003297}, and so $2rad(H) - 2 \leq diam(H) \leq 2rad(H)$. 
On one hand, $rad(H) \geq \lceil (diam(H)/2) \rceil = 2$.
On the other hand, $rad(H) \leq \lfloor (diam(H) + 2)/2 \rfloor = 2$.
Hence, $rad(H) = 2$.

So, $H$ is a connected distance-hereditary graph with $diam(H)=3$ and $rad(H)=2$. Consider the center $C(H)$ of $H$. 
First we show that $C(H)$ is connected. 
Let $r(u)=1$ for each vertex~$u \in H$.
Then, any pair $u,v \in H$ satisfies $d_H(u,v) \leq diam(H) = r(u) + r(v) + 1$.
By Proposition~\ref{prop:dominatingClique}, there is a clique~$K$ in $H$ dominating $H$.
Since each vertex of~$K$ is at most distance $2=rad(H)$ from every vertex  of~$H$, $K \subseteq C(H)$ holds. 
Moreover, 
every two vertices of $C(H)$ are connected through vertices of $K\subseteq C(H)$, implying that $C(H)$ is connected in $H$.
In distance-hereditary graphs every connected subgraph is isometric. Hence, $C(H)$ is an isometric subgraph of $H$. As $rad(H)=2$, every two vertices of $C(H)$ are at distance at most 2 from each other, implying  $diam(C(H))=2$. Thus, $C(H)$ is a connected cograph with $rad(C(H))\leq 2$. 

We will show next that $rad(C(H))= 2$, i.e.,  
for any $c \in C(H)$ there is a vertex $z \in C(H)$ such that $cz \notin E$. 
Consider a vertex~$t \in F(c)$ furthest from $c$ in $G$. We have $d_G(c,t)=rad(G)$. 
Let $z \in H$ be a closest vertex to~$t$ which is central in $G$.
Since $diam(G)=2rad(G)-2$, by Lemma~\ref{lem:eccFunctionEqn}, we have $d_G(t,z) = d_G(t,C(G)) = e_G(t) - rad(G) \leq diam(G) - rad(G) = rad(G) - 2$.
Moreover, vertices~$z$ and~$c$ are not adjacent since $d_G(c,t)=rad(G)$ and $d_G(t,z) \leq rad(G) - 2$.
But, since $c \in C(H)$, $d_G(c,z)= d_H(c,z)\leq 2$.
Therefore, $d_H(c,z) = 2$ and $d_G(t,z) = rad(G) - 2$.
We next establish that $z$ belongs to $C(H)$.
By contradiction, assume that there is a vertex~$u \in H$ such that $d_H(z,u) > rad(H) = 2$.
Then, $d_H(z,u) = diam(H) = 3$ and, by the choice of~$c$ ($c\in C(H)$), necessarily $d_H(c,u) \leq 2$.
Consider the \FPC{} on vertices $c,u,z,t$.
We have that $d(c,t) + d(u,z) = rad(G) + 3$ is the largest sum
since $d(c,z) + d(u,t) \leq rad(G) + 2$
and $d(c,u) + d(z,t) \leq rad(G)$.
However, $d(c,t) + d(u,z) \geq d(c,u) + d(z,t) + 3$, giving a contradiction since the smaller two sums must be equal and differ from the larger one by at most two. 
Hence, $z$ belongs to $C(H)$ showing that every $c \in C(H)$ has a non-adjacent vertex $z \in C(H)$. 

Finally, we show that any such graph $H$ is the center of some distance-hereditary graph $G$.  In what follows, we refer to Figure ~\ref{fig:centerCharacterization} for an illustration.
If $H$ is a cograph, then one can construct a graph $G$ by simply adding to $H$ four new vertices $x,x^*,y,y^*$. Vertices ~$x$ and~$y$ are universal to~$H$, and vertices $x^*,y^*$ are pendant to~$x$ and~$y$, respectively. Now graph $H$ is the center of $G$ as any vertex~$u$ of the cograph~$H$ is at most distance 2 to any vertex of~$G$, whereas $d_G(x,y^*)=3$ and $d_G(y,x^*)=3$.
Suppose now that $H$ is a connected distance-hereditary graph with $diam(H)=3$ and $C(H)$ is a connected cograph with $rad(C(H))=2$. 
One can construct a graph $G$ by adding to $H$ (with $C(H) = \{c_1,c_2,...,c_\ell\}$) $\ell$ new vertices $x_1,x_2,...,x_\ell$ such that each $x_i$ is pendant to $c_i \in C(H)$.
Each $c_i \in C(H)$ has $d_G(c_i,u) \leq 2$ for all~$u \in H$.
Since $rad(C(H))=2$, each $c_i$ has a non-adjacent vertex $c_k \in C(H)$, and therefore $d_G(c_i,x_k) = 3$.
Any vertex~$u \in H \setminus C(H)$ has a vertex~$v \in H \setminus C(H)$, for which $d_G(u,v) = 3$. Furthermore, any such $u$ satisfies $d_G(u,x_i) = d_G(u,c_i) + 1 \leq 3$ for each $x_i$.
Since $rad(C(H))=2$, for any pendant~$x_i$, vertex~$c_i$ has a non-adjacent vertex~$c_k \in C(H)$ and therefore $d_G(x_i,x_k) = 4$.
Hence,  $H$ is the center of $G$.
\end{proof}

\begin{figure}[H]
  \begin{center}
    \includegraphics[scale=0.7]{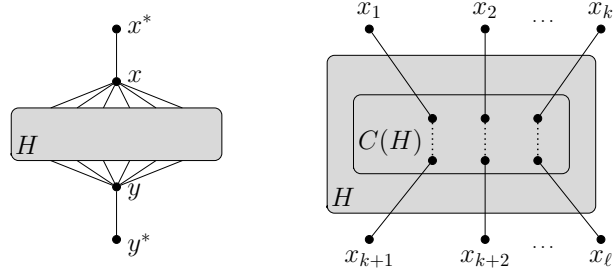}
	\caption{Any cograph $H$ (left), and any connected distance-hereditary graph $H$ with diameter 3 where $C(H)$ is a connected cograph with radius 2 (right), is the center of some distance hereditary graph.}
    \label{fig:centerCharacterization}
  \end{center}
\end{figure}


\section{Computing all eccentricities}\label{sec:allEccentricities}
{Recall that Theorem~\ref{thm:allBounds} yields a linear time 2-approximation of all eccentricities. We also used ideas from Section~\ref{sec:centers} to find in linear time a set $S \subseteq C(G)$ sufficient for an additive 1-approximation of all eccentricities. However, it proved to be challenging to get exact computation of all eccentricities using that approach. Instead, we found it more fruitful to use a characteristic pruning sequence of~$G$.}
We provide in this section a simple linear time algorithm to compute all eccentricities of a distance-hereditary graph $G$ using a weight function and a special pruning sequence produced by processing layers of a breadth-first search tree of $G$ (see~\cite{DAMIAND200199}).

For a given graph $G=(V(G),E(G))$ and an $n$-tuple $(p(v_1), p(v_2), ..., p(v_n))$ of non-negative vertex weights
, we define the $p$-weighted eccentricity of each vertex $v \in V(G)$ as 
$e_{G,p}(v) = \max_{u \in V(G)}\{ d_G(v,u) + p(u) \}$.
We refer to the set of furthest vertices from a vertex~$v$ under weight function $p$ in $G$
as $F_{G,p}(v) = \{ u \in V(G): e_{G,p}(v) = d_G(v,u) + p(u) \}$. Clearly, when $p(v)=0$ for all $v\in V(G)$, we have  $e_{G,p}(v) =e_G(v)$ and $F_{G,p}(v)=F_G(v)$ agreeing with earlier definitions.  

\begin{lemma}\label{lem:pruningTwins}
Let $x,y \in V$ be twins with $p(x) \leq p(y)$.
Set $G' = G - \{x\}$ and $p'(v) = p(v)$ for all $v \in V(G')$.
Then, $e_{G,p}(v) = e_{G',p'}(v)$ for all $v \in V(G') \setminus \{y\}$,
 $e_{G,p}(y) = \max\{p(x) + d_G(x,y), e_{G',p'}(y)\}$,
 and $e_{G,p}(x) = \max\{p(y) + d_G(x,y), e_{G,p}(y)\}$.
Moreover, if $F_{G,p}(y) \setminus \{x\} \neq \emptyset$, then $e_{G,p}(y) = e_{G',p'}(y)$.

\end{lemma}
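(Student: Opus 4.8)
The plan is to reduce all five assertions to two elementary facts about twins. \textbf{Fact 1:} for every $z \in V \setminus \{x,y\}$ one has $d_G(x,z)=d_G(y,z)$; indeed, the second vertex of a shortest $x$–$z$ path is a neighbour of $x$ other than $y$ (a shortest path cannot begin $x,y,\dots$ when $x,y$ are adjacent twins, since it could then be rerouted through a common neighbour of $x$ and $y$), hence also a neighbour of $y$, which gives a $y$–$z$ walk of the same length; by symmetry the distances agree. \textbf{Fact 2:} $G'=G-\{x\}$ is a connected induced subgraph of the distance-hereditary graph $G$ (removing one vertex of a twin pair cannot disconnect $G$), hence isometric in $G$: $d_{G'}(u,v)=d_G(u,v)$ for all $u,v\in V(G')$. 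In particular, for every $v\in V(G')$ we have $e_{G',p'}(v)=\max_{u\in V(G)\setminus\{x\}}\{d_G(v,u)+p(u)\}$, so that $e_{G,p}(v)=\max\{\,e_{G',p'}(v),\ d_G(v,x)+p(x)\,\}$.

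Given these, three of the four parts are short bookkeeping. For $v\in V(G')\setminus\{y\}$ (so $v\neq x,y$), the extra term $d_G(v,x)+p(x)$ is, by Fact 1 and $p(x)\le p(y)$, at most $d_G(v,y)+p(y)\le e_{G',p'}(v)$, hence absorbed, giving $e_{G,p}(v)=e_{G',p'}(v)$. For $v=y$, isolating the argument $u=x$ in the definition of $e_{G,p}(y)$ produces the term $d_G(x,y)+p(x)$ while the remaining maximum over $u\in V(G)\setminus\{x\}$ is exactly $e_{G',p'}(y)$, so $e_{G,p}(y)=\max\{\,d_G(x,y)+p(x),\ e_{G',p'}(y)\,\}$. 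For the \emph{moreover} clause, a vertex $w\in F_{G,p}(y)\setminus\{x\}$ lies in $V(G')$ and satisfies $e_{G,p}(y)=d_G(y,w)+p(w)=d_{G'}(y,w)+p'(w)\le e_{G',p'}(y)\le e_{G,p}(y)$, forcing equality.

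The one genuinely delicate point — and the step I expect to be the main obstacle — is the formula for $e_{G,p}(x)$, since here the asymmetry $p(x)\le p(y)$ really enters, and I would establish the two inequalities separately. For ``$\le$'': for every $u\in V(G)\setminus\{y\}$ one has $d_G(x,u)+p(u)\le d_G(y,u)+p(u)$ — this is Fact 1 when $u\notin\{x,y\}$, and for $u=x$ it is simply $p(x)\le d_G(x,y)+p(x)=d_G(y,x)+p(x)$ — so $\max_{u\neq y}\{d_G(x,u)+p(u)\}\le e_{G,p}(y)$, and restoring the argument $u=y$ (which contributes $d_G(x,y)+p(y)$) yields $e_{G,p}(x)\le\max\{\,d_G(x,y)+p(y),\ e_{G,p}(y)\,\}$. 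For ``$\ge$'': the argument $u=y$ gives $e_{G,p}(x)\ge d_G(x,y)+p(y)$ at once, and choosing a vertex $u^{*}$ realizing $e_{G,p}(y)$ one checks $e_{G,p}(x)\ge e_{G,p}(y)$ by a three-way case split ($u^{*}\notin\{x,y\}$ via Fact 1; $u^{*}=y$ and $u^{*}=x$ via $p(y)\ge p(x)$ and $d_G(x,y)\ge 0$). Throughout, the only thing to monitor is that the ``self-terms'' $p(x)$ (and, for $y$, $p(y)$) are correctly dominated; once that is handled, the rest is routine.
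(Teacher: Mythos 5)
Your proposal is correct and follows essentially the same route as the paper's proof: it rests on the same two ingredients (twins are equidistant from all other vertices, and $G'=G-\{x\}$ is a connected induced, hence isometric, subgraph) and then does the same bookkeeping with maxima, with the only cosmetic difference that you verify these two facts explicitly and prove the formula for $e_{G,p}(x)$ by two inequalities with a case split where the paper uses a single chain of max-manipulations inserting dominated terms.
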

\begin{proof}
Let $v \in V(G') \setminus \{y\}$.
As $x$ and $y$ are twins, $d_G(v,y) = d_G(v,x)$.
Since $G'$ is isometric in $G$, $d_{G}(v,u) + p(u) = d_{G'}(v,u) + p'(u)$ for any $u \in V(G')$.
Then, by definition of eccentricity, $e_{G',p'}(v) = \max_{u \in V(G')}\{ d_{G'}(v,u) + p'(u)\} = \max_{u \in V(G)\setminus \{x\}}\{ d_{G}(v,u) + p(u)  \}$.
As $p(x) \leq p(y)$, $d_G(v,y) + p(y) \geq d_G(v,x) + p(x)$.
Thus, $e_{G',p'}(v) = \max\{ d_G(v,x) + p(x), \max_{u \in V(G) \setminus  \{x\}}\{ d_{G}(v,u) + p(u) \} \} 
                    = \max_{u \in V(G)}\{ d_{G}(v,u) + p(u) \} 
                    = e_{G,p}(v) $.
Similarly, 
$e_{G,p}(y) 
            = \max\{ d_G(x,y) + p(x), \max_{u \in V(G')}\{ d_G(y,u) + p(u) \}  \}
            = \max\{ d_G(x,y) + p(x), e_{G',p'}(y) \}$.
Moreover, if $F_{G,p}(y) \setminus \{x\}\neq\emptyset$, 
then $e_{G,p}(y) = e_{G',p'}(y)$ as realized by the weighted distance in $G'$ from $y$ to any $v \in F_{G,p}(y) \setminus \{x\}$.
%

As $d(x,y) \geq 1$ and $p(x) \le p(y)$, $d_G(x,y) + p(y) \geq \max\{d_G(x,y) + p(x), p(y)\}$.
Again, by definition of eccentricity,
\vspace*{-0.2cm}
\begin{align*}
e_{G,p}(x) &= \textstyle{\max\{ p(x), d_G(x,y) + p(y), \max_{u \in V(G) \setminus  \{x,y\}} \{ d_G(x,u) + p(u)\}\}} \\
            &= \textstyle{\max\{ d_G(x,y) + p(y), \max_{u \in V(G) \setminus  \{x,y\}} \{ d_G(y,u) + p(u)\}, d_G(x,y) + p(x), p(y) \}} \\
            &= \max\{ d_G(x,y) + p(y), e_{G,p}(y) \}. \qedhere
\end{align*}
\end{proof}

\begin{lemma}\label{lem:pruningPendants}
Let $x$ be a vertex pendant to $y$.
Set $G' = G - \{x\}$, 
$p'(y) = \max\{1 + p(x), p(y)\}$, and $p'(v) = p(v)$ for all $v \in V(G') \setminus \{y\}$.
Then,
$e_{G,p}(v) = e_{G',p'}(v)$ for all $v \in V(G')$.
If $F_{G,p}(y) \setminus \{x\} \neq \emptyset$, then $e_{G,p}(x) = \max\{p(x), e_{G',p'}(y) + 1\}$.
\end{lemma}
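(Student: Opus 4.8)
\textbf{Proof plan for Lemma~\ref{lem:pruningPendants}.}
The plan is to mirror the structure of the proof of Lemma~\ref{lem:pruningTwins}, using that $G'$ is an isometric subgraph of $G$ (since distance-hereditary graphs are closed under induced subgraphs and every connected induced subgraph is isometric), and that removing the pendant vertex $x$ changes distances only through paths that used $x$ as an endpoint. First I would fix a vertex $v \in V(G')$ and split the maximum defining $e_{G,p}(v)$ into the term coming from $u = x$ and the terms coming from $u \in V(G')$. For the latter, isometry gives $d_G(v,u) + p(u) = d_{G'}(v,u) + p'(u)$ whenever $u \neq y$, and for $u = y$ we have $d_G(v,y) + p(y) \le d_{G'}(v,y) + p'(y)$ since $p'(y) \ge p(y)$. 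The key observation is that the $u = x$ term equals $d_G(v,x) + p(x) = d_G(v,y) + 1 + p(x) \le d_{G'}(v,y) + p'(y)$, because $x$ is pendant to $y$ so every shortest $(v,x)$-path passes through $y$, and $p'(y) \ge 1 + p(x)$. Hence the contribution of $x$ is dominated by the contribution of $y$ computed in $(G',p')$, which yields $e_{G,p}(v) = e_{G',p'}(v)$.

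For the statement about $e_{G,p}(x)$, I would again use that $x$ is pendant to $y$, so $d_G(x,u) = 1 + d_G(y,u)$ for every $u \neq x$, giving
\[
e_{G,p}(x) = \max\Bigl\{\, p(x),\ \max_{u \in V(G')}\{ 1 + d_G(y,u) + p(u) \} \,\Bigr\} = \max\{\, p(x),\ 1 + e_{G,p}(y) \,\}.
\]
Under the hypothesis $F_{G,p}(y) \setminus \{x\} \neq \emptyset$, there is a vertex realizing $e_{G,p}(y)$ in $V(G')$, and for that vertex the weighted distance is unchanged in $(G',p')$ except possibly the $y$-dependence through $p'$; since $p'$ only increased $p(y)$ and $y$ is at distance $0$ from itself, one checks $e_{G,p}(y) = e_{G',p'}(y)$ exactly as in Lemma~\ref{lem:pruningTwins} (the value is realized in $V(G')$ with weights agreeing with $p'$ on $V(G') \setminus \{y\}$, and the $u=y$ term $p(y)$ is not larger than $p'(y)$ but also need not be the maximizer). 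Substituting gives $e_{G,p}(x) = \max\{ p(x),\ e_{G',p'}(y) + 1 \}$, as claimed.

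The main obstacle I anticipate is the bookkeeping around the vertex $y$ itself: its weight changes from $p(y)$ to $p'(y)$, so one must be careful that (a) for $v \neq y$ the $u=y$ term never becomes the strict maximizer in a way that breaks the equality $e_{G,p}(v) = e_{G',p'}(v)$ — this is handled precisely because the increase in $p'(y)$ is exactly calibrated to absorb the lost $x$-term — and (b) the claim $e_{G,p}(y) = e_{G',p'}(y)$ in the last sentence genuinely needs the hypothesis $F_{G,p}(y) \setminus \{x\} \neq \emptyset$, since otherwise the only furthest vertex from $y$ is $x$ and removing it strictly decreases the eccentricity (one would then only have $e_{G,p}(y) = \max\{1+p(x), e_{G',p'}(y)\}$, with $p'(y) = 1+p(x)$ recording this). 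Everything else is a routine case analysis on whether the maximizer is $x$, $y$, or some other vertex, using only the pendant relation $d_G(x,\cdot) = 1 + d_G(y,\cdot)$ and isometry of $G'$ in $G$.
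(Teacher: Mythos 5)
Your overall route is the same as the paper's (use $d_G(x,u)=1+d_G(y,u)$ for $u\neq x$, isometry of $G'$, and the fact that the update $p'(y)=\max\{1+p(x),p(y)\}$ exactly records the lost $x$-term), and your treatment of the first claim is fine once the reverse inequality you mention in obstacle (a) is written out. However, in the second claim you place the hypothesis at the wrong step, and the step you state unconditionally is the one that fails. Your displayed identity $e_{G,p}(x)=\max\{p(x),\,1+e_{G,p}(y)\}$ is not true in general: it silently replaces $\max_{u\in V(G')}\{d_G(y,u)+p(u)\}$ by $e_{G,p}(y)$, which is only legitimate when the maximum defining $e_{G,p}(y)$ is attained at some vertex other than $x$, i.e.\ exactly when $F_{G,p}(y)\setminus\{x\}\neq\emptyset$. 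For a concrete failure, take $G$ a single edge $xy$ with $p(x)=5$, $p(y)=0$: then $F_{G,p}(y)=\{x\}$, $e_{G,p}(x)=5$, but $\max\{p(x),1+e_{G,p}(y)\}=\max\{5,7\}=7$.

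Conversely, your point (b) is backwards: in the pendant case the equality $e_{G,p}(y)=e_{G',p'}(y)$ does \emph{not} need the hypothesis, because it is just the first claim of the lemma applied to $v=y$ — the $u=y$ term in $(G',p')$ is $p'(y)=\max\{1+p(x),p(y)\}$, which absorbs the removed $x$-term (and even your fallback $\max\{1+p(x),e_{G',p'}(y)\}$ collapses to $e_{G',p'}(y)$ since $e_{G',p'}(y)\geq p'(y)\geq 1+p(x)$). The situation where the hypothesis really is needed for $e_{G,p}(y)=e_{G',p'}(y)$ is the twin case of Lemma~\ref{lem:pruningTwins}, where no weight is updated; you have transplanted that concern here. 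The repair is short and recovers the paper's argument: assuming $F_{G,p}(y)\setminus\{x\}\neq\emptyset$, one has $\max_{u\in V(G)\setminus\{x\}}\{d_G(y,u)+p(u)\}=e_{G,p}(y)=e_{G',p'}(y)$, and then the pendant relation gives $e_{G,p}(x)=\max\{p(x),\,1+\max_{u\in V(G)\setminus\{x\}}\{d_G(y,u)+p(u)\}\}=\max\{p(x),\,e_{G',p'}(y)+1\}$.
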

\begin{proof}
Let $v \in V(G')$.
As $x$ is pendant to $y$,  $d_G(v,x) = d_G(v,y) + 1$.
Since $G'$ is isometric in $G$, $d_{G'}(v,y) + p'(y) = \max\{d_G(v,y) + 1 + p(x), d_G(v,y) + p(y)\} =  \max\{d_G(v,x) + p(x), d_G(v,y) + p(y)\}$.
Then, by definition of eccentricity,
\vspace*{-.25cm}
\begin{align*}
e_{G',p'}(v) &= \textstyle{\max\{ d_{G'}(v,y) + p'(y),  \max_{u \in V(G') \setminus  \{y\}}\{ d_{G'}(v,u) + p'(u) \}  \} } \\
              &= \textstyle{\max\{ d_G(v,x) + p(x), d_G(v,y) + p(y),  \max_{u \in V(G') \setminus  \{y\}}\{ d_{G}(v,u) + p(u) \}  \} } \\
              &= \textstyle{\max_{u \in V(G)}\{d_G(v,u) + p(u) \} } 
              = e_{G,p}(v).
\end{align*}

Assume now that $F_{G,p}(y) \setminus \{x\} \neq \emptyset$.
Hence, $e_{G',p'}(y) = e_{G,p}(y) = \max_{u \in V(G) \setminus  \{x\}}\{d_G(y,u) + p(u)\}$.
Thus, by definition of eccentricity, 
\vspace*{-.25cm}
\begin{align*}
e_{G,p}(x) &= \textstyle{ \max\{ p(x), p(y) + 1, \max_{u \in V(G) \setminus  \{x,y\}}\{ d_G(x,u) + p(u) \}  \}  } \\
           &= \textstyle{ \max\{ p(x), p(y) + 1, \max_{u \in V(G) \setminus  \{x,y\}}\{ d_G(y,u) + p(u) \} + 1  \}  } \\
           &= \textstyle{ \max\{ p(x), \max_{u \in V(G) \setminus  \{x\}}\{ d_G(y,u) + p(u) \} + 1  \}  }
           = \textstyle{ \max\{ p(x), e_{G',p'}(y) + 1 \}.  } \qedhere
\end{align*}
\end{proof}

We use the pruning sequence (the vertex elimination ordering) $\sigma = (v_1,...,v_n)$ that can be constructed in linear time via $\rho$ iterations of a systematic removal of pendants/twins from each layer $\layer_{\rho},...,\layer_1$ of a breadth-first search tree rooted at $v_n$ (see~\cite{DAMIAND200199}).
Iteration $k$, where $k=\rho,...,1$, consists of four consecutive steps:
\setlist{nolistsep}
\begin{itemize}[noitemsep]
\item[(a)] remove any $x \in \layer_k$ twin to some $y \in \layer_k$ of the same connected component in $\layer_k$ (i.e., $x$ and $y$ belong to the same connected component of the subgraph of $G$ induced by vertices of $\layer_k$),
\item[(b)] remove any $x \in \layer_k$ pendant to some $y \in \layer_{k-1}$,
\item[(c)] remove any $x \in \layer_{k-1}$ twin to some $y \in \layer_{k-1}$ belonging to the same neighborhood $N(z) \cap \layer_{k-1}$ of some $z \in \layer_k$, and 
\item[(d)] remove any $x \in \layer_k$ pendant to some $y \in \layer_{k-1}$.
\end{itemize}
Note that we move to the next step only when no vertex remains satisfying the condition of the previous step. 
At the end of iteration $k$, all vertices of $\layer_k$ have been removed (see~\cite{DAMIAND200199}).
By this ordering, any $u \in \layer_k$ satisfies that
if $u$ is a pendant to $v$, then $v \in \layer_{k-1}$, and
if $u$ is a twin to $v$, then $v \in \layer_k$.
Let $G_i$ denote the graph induced by $\{v_i,...,v_n\}$ for each $i=1,...,n$.

\begin{theorem}
There is a linear time algorithm to compute all eccentricities in a distance-hereditary graph.
\end{theorem}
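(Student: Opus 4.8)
The plan is to process the pruning sequence $\sigma = (v_1, \dots, v_n)$ in reverse, building up the graph one vertex at a time, while maintaining for the current graph $G_i$ a weight function $p_i$ together with the $p_i$-weighted eccentricities $e_{G_i, p_i}(v)$ of all its vertices. Concretely, I would traverse $i$ from $n-1$ down to $1$: at step $i$ we add $v_i$ back to $G_{i+1}$ to obtain $G_i$, where $v_i$ is (by the properties of $\sigma$) either pendant to some $v_j \in G_{i+1}$ or a twin of some $v_j \in G_{i+1}$. Lemma~\ref{lem:pruningPendants} (for the pendant case) and Lemma~\ref{lem:pruningTwins} (for the twin case) then tell us exactly how the weighted eccentricities and the weight function change: in both cases the weighted eccentricity of every old vertex $v \in V(G_{i+1})$ is unchanged, only the weight $p(v_j)$ at the attachment vertex may increase, and the weighted eccentricity of the newly added vertex $v_i$ is computable in $O(1)$ from $e_{G_i,p_i}(v_j)$ (plus $p(v_i)$ and $d(v_i,v_j)=1$). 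Running this to $i=1$ with the initial weight function $p_n \equiv 0$ on the one-vertex graph $G_n$ yields $e_{G,p}(v) = e_G(v)$ for the all-zero weights, which is exactly the unweighted eccentricity we want.

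The subtlety, and the reason the construction of $\sigma$ in~\cite{DAMIAND200199} is invoked, is the phrase ``if $F_{G,p}(y)\setminus\{x\}\neq\emptyset$'' in both lemmas: the clean update $e_{G,p}(y) = e_{G',p'}(y)$ for the attachment vertex $y=v_j$ (and the corresponding formula for $e_{G,p}(x)$ with $x = v_i$) is only guaranteed when $y$ has some $p$-furthest vertex other than the vertex being added back. I would argue that the layered structure of $\sigma$ guarantees this. When we add back $v_i$, it lies in some BFS layer $\layer_k$ and $v_j$ lies in $\layer_k$ or $\layer_{k-1}$; because the root $v_n$ of the BFS tree sits in $\layer_0$ and is already present in $G_i$ for every $i<n$, and because weights are only ever increased from $0$ by small amounts tied to pendant chains, the vertex $v_n$ (or more generally a vertex in a lower layer already present) realizes a weighted distance from $v_j$ that is at least as large as $d(v_j, v_i) + p(v_i)$ — so $v_i$ is never the unique $p$-furthest vertex from $v_j$ at the moment it is reintroduced. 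This is the step I expect to be the main obstacle: one must check carefully, using the specific four-step removal order (a)--(d) and the fact that each layer is exhausted before moving on, that the attachment vertex always retains a strictly-or-equally-far competitor in the already-reconstructed subgraph, so that the $O(1)$ update formulas genuinely apply throughout.

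Granting that, the complexity analysis is routine. Computing a BFS tree and the pruning sequence $\sigma$ takes $O(n+m)$ time by~\cite{DAMIAND200199}. The reverse sweep performs $n-1$ reinsertion steps, each of which, by Lemma~\ref{lem:pruningPendants} or Lemma~\ref{lem:pruningTwins}, updates a constant number of stored values (the weight at one vertex and the weighted eccentricity of one new vertex) and reads a constant number of stored values; there is no need to recompute any distances, since the only distances that ever appear in the update formulas are $d(x,y)=1$ between a vertex and its pendant/twin attachment. Hence the sweep is $O(n)$, and the total running time is $O(n+m)$, i.e., linear. At termination we read off $e_G(v) = e_{G,\mathbf{0}}(v)$ for all $v$.

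I would therefore structure the proof as: (1) recall the construction and layer-properties of $\sigma$; (2) define $G_i$, $p_i$, and the invariant that we have stored $e_{G_i,p_i}(v)$ for all $v\in V(G_i)$; (3) the base case $G_n$ with $p_n\equiv 0$; (4) the inductive step, splitting on pendant vs.\ twin, citing Lemmas~\ref{lem:pruningPendants} and~\ref{lem:pruningTwins}, with the key lemma that $F_{G_i,p_i}(v_j)\setminus\{v_i\}\neq\emptyset$ thanks to the layered ordering; (5) the final read-off at $i=1$; and (6) the linear-time bound. The heart of the argument is step (4)'s non-uniqueness claim; everything else is bookkeeping.
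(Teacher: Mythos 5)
Your overall framework---a forward pass along $\sigma$ accumulating the weights $p_i$, then a backward sweep that reinserts one vertex at a time and updates weighted eccentricities via Lemmas~\ref{lem:pruningTwins} and~\ref{lem:pruningPendants}---is the same as the paper's, and you correctly locate the crux in the hypothesis $F_{G,p}(y)\setminus\{x\}\neq\emptyset$ of the pendant update. But your resolution of that crux is false. It is not true that the attachment vertex always retains a competitor at the moment of reinsertion, and the root does not supply one: near the root the weights are large (a vertex of layer $\layer_k$ can carry weight up to $rad(G)-k$), and the entire ``depth'' of the graph may be stored precisely in the weight of the vertex being reinserted. A concrete counterexample is $G=C_4$ with vertices $r,u,w,x$ and edges $ru, rw, ux, wx$, BFS rooted at the central vertex $r$. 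The paper's construction yields $\sigma=(u,x,w,r)$ ($u$ removed as a false twin of $w$ in step (c), then $x$ pendant to $w$ in step (d), then $w$ pendant to $r$), and the forward pass gives $p_3(w)=1$ and $p_4(r)=2$. When your uniform sweep reinserts $w$ as a pendant to $r$, the current graph is $\{w,r\}$ with $p_3(w)=1$, $p_3(r)=0$, so $F_{G_3,p_3}(r)=\{w\}$ is a singleton; the pendant formula would return $e_{G_4,p_4}(r)+1=3$, whereas the true value is $e_{G_3,p_3}(w)=1$, and the error propagates to a final output $e_G(w)=3$ instead of $2$. So the non-uniqueness claim that you call the heart of your step (4) genuinely fails, exactly in the region near the root.

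This is why the paper does not run the $O(1)$ reinsertion updates all the way back to the one-vertex graph. It stops the reduction at $G_z$ (the root together with part of layer $\layer_1$), computes all $p_z$-weighted eccentricities there \emph{directly} using bucket-sorted weight lists over neighbors and non-neighbors (possible in linear time because $G_z$ has diameter at most $2$), then handles the layer-$2$ pendants of $G_y$ by the same direct device, and only for indices $i<y$ does it apply the $O(1)$ formulas. The statement you need---$F_{G_i,p_i}(v_j)\setminus\{v_i\}\neq\emptyset$ for every pendant reinsertion with $i<y$---is then proved by a rather delicate contradiction argument: one tracks the earliest moment the stored eccentricity of $v_j$ could have dropped (which can happen only at a twin removal whose unique furthest vertex was the removed twin), uses the per-layer weight bound $p(v)\le rad(G)-k$, separates true from false twins, and finally invokes steps (c) and (d) of the construction of $\sigma$ to force $i\ge y$; nothing as simple as ``the root is a competitor'' suffices. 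Two further small inaccuracies in your write-up: in the twin case the weighted eccentricity of the old partner $v_j$ can change upon reinsertion (Lemma~\ref{lem:pruningTwins} updates it, unconditionally), so your invariant that all old eccentricities are unchanged is not quite right; and false twins are at distance $2$, not $1$. Neither of these harms the linear-time bound, but the missing treatment of the $i\ge y$ region and the missing proof of the non-uniqueness claim for $i<y$ are real gaps.
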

\begin{proof}

Let $\sigma = (v_1,...,v_n)$ be the pruning sequence constructed as described above by
each iteration $k=rad(G),...,1$ of removing vertices from layer $\layer_k$ of a BFS tree rooted at a central vertex $v_n$. 
Denote by $v_y \in \layer_2$ the first pendant vertex of $\sigma$ encountered in step (d)  of iteration 2 (or in step (b) if $\layer_2$ becomes empty after steps (a) and (b)). Denote by $v_z$ the first vertex of $\sigma$ encountered in iteration 1.
Thus, the graph $G_z$ consists of $v_n$ and some  twins/pendants in  $\layer_1$ adjacent to~$v_n$.
The algorithm is summarized as follows. 
We process vertices $v_i$, $i < z$, from $v_1$ to $v_z$ (from left to right along $\sigma$).  
We denote by $p_i$ the weight function of each vertex immediately before vertex~$v_i$ is processed. For each vertex $v_j \in \sigma$, set $p_1(v_j) = 0$.
As each $v_i$ is processed, $p_{i+1}$ is invariant (that is, $p_{i+1}(v_j) = p_{i}(v_j)$ for every $v_j$) with the exception of one case: if $v_i$ is a pendant to $v_j$ in $G_i$, then let $p_{i+1}(v_j) = \max\{p_{i}(v_i) + 1, p_{i}(v_j)\}$.
We can assume that if $v_i$ is a twin to vertex $v_j$ in $G_i$, then $p_i(v_i) \le p_i(v_j)$, since otherwise, their positions as twins can be swapped in $\sigma$.
Observe that
the weight function of a vertex $v_j \in \layer_k$ can only increase when a vertex $v_i \in \layer_{k+1}$ is pendant to $v_j$, where $i < j$.
Hence, if any vertex $v_i$ belongs to layer $\layer_k$, then for any integer $\ell$, $p_\ell(v_i) \leq rad(G) - k$.
Additionally, every vertex $v_i$ and integers $\ell < \kappa$ satisfy $p_\ell(v_i) \le p_\kappa(v_i)$.
After all vertices $v_i$, $i < z$, are processed (along $\sigma$ from left to right), we next compute all $p_z$-weighted eccentricities in $G_z$.
Then, we compute all $p_y$-weighted eccentricities in $G_y$.
Finally, we process each vertex $v_i$, $i < y$, along the reverse direction of $\sigma$. The $p_{i+1}$-weighted eccentricities in $G_{i+1}$ are used to obtain the $p_i$-weighted eccentricities in $G_i$.

%

\emph{Backward phase 1: Compute all $p_z$-weighted eccentricities in $G_z$.}
Denote by $V^*$ and $N_i^*$ the vertex lists ordered by decreasing weight $p_z$ from the respective vertex sets $V(G_z)$ and $N(v_i)\cap V(G_z)$ for each $v_i \in V(G_z)$.
The lists are ordered in total linear time with a bucket sort.
The first vertex $w$ of $N_{i}^*$ has maximum $p_z(w)$ among neighbors of $v_i$ in $G_z$, and 
the first vertex $u$ of $V^*\setminus N_i^*$ has maximum $p_z(u)$ among non-neighbors of $v_i$.
By definition, the weighted eccentricity of each $v_i$ is $e_{G_z,p_z}(v_i)
= \max\{p_z(v_i), 1 + p_z(w), 2 + p_z(u)\}$
if there exists a non-neighbor $u \in V^* \setminus N_i^*$, and
$e_{G_z,p_z}(v_i) = \max\{p_z(v_i), 1 + p_z(w)\}$ otherwise.

\emph{Backward phase 2: Compute all $p_y$-weighted eccentricities in $G_y$.}
By choice of $y$, each vertex $v_i$ for $y \le i < z$ satisfies $v_i \in \layer_2$ and $v_i$ is pendant to a vertex $v_j \in \layer_1$ in $G_y$.
Hence, for $M = \{v_z,...,v_n\}$ and $S = \{ v_y,...,v_{z-1} \}$, $\max_{m \in M}p_z(m) = \max\{ \max_{m \in M}p_y(m),  \max_{u \in S}p_y(u) + 1  \}$ holds.
Therefore, we may again use ideas from the previous phase.
For each vertex $v_i$, $y \le i < z$, pendant to $v_j$, we have a vertex $w \in N(v_j)$ in $G_z$ with maximum $p_z(w)$ among neighbors of $v_j$ in $G_z$ and a vertex $u \notin N(v_j)$ in $G_z$ with maximum $p_z(u)$ among non-neighbors of $v_j$ in $G_z$.
By definition, the weighted eccentricity of each $v_i$ is
$e_{G_y,p_y}(v_i) = \max\{
p_i(v_i),
p_i(v_j) + 1,
p_z(w) + 2,
p_z(u) + 3
\}$ if there exists in $G_z$ a non-neighbor $u \notin N(v_j)$, and
$e_{G_y,p_y}(v_i) = \max\{
p_i(v_i),
p_z(v_j) + 1,
p_z(w) + 2
\}$ otherwise.
Additionally, by Lemma~\ref{lem:pruningPendants}, each vertex $v_\ell$ for $\ell \ge z$ has equal $p_z$ and $p_y$ weighted eccentricities, that is, $e_{G_y,p_y}(v_\ell) = e_{G_z,p_z}(v_\ell)$.

\emph{Backward phase 3: Compute all $p_i$-weighted eccentricities in $G_i$ for $i < y$ along a reverse direction of $\sigma$.}
If $v_i$ is a twin to $v_j$ in $G_i$, by Lemma~\ref{lem:pruningTwins},
$e_{G_i,p_i}(v_j) = \max\{p_i(v_i) + d_G(v_i,v_j), e_{G_{i+1},p_{i+1}}(v_j)\}$,
$e_{G_i,p_i}(v_i) = \max\{p_i(v_j) + d_G(v_i,v_j), e_{G_i,p_i}(v_j)\}$,
and for all $u \in V(G_i) \setminus \{v_i, v_j\}$
$e_{G_i,p_i}(u) = e_{G_{i+1},p_{i+1}}(u)$.
We consider now the case that $v_i$ is a pendant to $v_j$.
By Lemma~\ref{lem:pruningPendants}, each $u \in V(G_i) \setminus \{v_i\}$ satisfy
$e_{G_i,p_i}(u) = e_{G_{i+1},p_{i+1}}(u)$.
We claim that $e_{G_i,p_i}(v_i) = \max\{p_i(v_i), e_{G_{i+1},p_{i+1}}(v_j) + 1\}$.
It remains only to show that any pendant $v_i$, where $i < y$, satisfies $F_{G_i,p_i}(v_j) \setminus \{v_i\} \neq \emptyset$; applying Lemma~\ref{lem:pruningPendants} then proves the claim.

By contradiction, let $v_i \in \layer_\gamma$ pendant to $v_j \in \layer_{\gamma-1}$ be the earliest vertex in $\sigma$ with $i < y$ and $F_{G_i,p_i}(v_j) = \{v_i\}$.
Hence, $e_{G_i,p_i}(v_j) = d_{G_i}(v_j,v_i) + p_i(v_i) > \max_{i+1 \le t \le n}\{d_{G_i}(v_j,v_t) + p_i(v_t)\}$.
As $i < y$ and $v_y \in \layer_2$, then $\gamma \ge 2$.
Thus, $p_i(v_i) \le rad(G) - 2$ and $e_{G_i,p_i}(v_j) = d_G(v_j,v_i) + p_i(v_i) \le rad(G) - 1$.
If $e_{G_i,p_i}(v_j) = e_{G}(v_j)$, we obtain a contradiction with $e_{G}(v_j) \le rad(G) - 1$.
Therefore, $e_{G_i,p_i}(v_j) < e_{G}(v_j)$.
Let $v_\ell \in \sigma$ be the earliest vertex such that $e_{G_\ell,p_\ell}(v_j) > e_{G_{\ell+1},p_{\ell+1}}(v_j)$, where $\ell < i < j$.
By Lemma~\ref{lem:pruningTwins} and Lemma~\ref{lem:pruningPendants},
$v_\ell$ is a twin to $v_j$ in $G_\ell$ such that $F_{G_\ell,p_\ell}(v_j) = \{v_\ell\}$ and $p_\ell(v_\ell) \le p_\ell(v_j)$.
Then, $rad(G) \le e_G(v_j) = e_{G_\ell,p_\ell}(v_j) = d_G(v_\ell,v_j) + p_\ell(v_{\ell})$.
Hence, $p_i(v_j) \ge p_\ell(v_j) \ge p_\ell(v_{\ell})\ge rad(G) - d_G(v_\ell,v_j)$.
If $p_i(v_j)=rad(G) - 1$ or if $v_\ell$ is a true twin to $v_j$, then $p_i(v_j) \geq rad(G) - 1$, and a contradiction arises with
$p_i(v_j) < d_G(v_j,v_i) + p_i(v_i) \le rad(G) - 1$ (recall that $F_{G_i,p_i}(v_j) = \{v_i\}$).
Necessarily, $v_\ell$ is a false twin to $v_j$ and $p_i(v_j) = rad(G) - 2$.
As $rad(G) - 2 = p_i(v_j) < e_{G_i,p_i}(v_j) = d_G(v_i,v_j) + p_i(v_i) \le 1 + rad(G) - \gamma$, we obtain $\gamma < 3$.
Hence, $v_i \in \layer_2$ and $v_\ell \in \layer_1$.
As $\ell < i$, necessarily $v_\ell$ is removed in iteration 2 step (c) of $\sigma$ construction.
However, this implies that $v_i$ is removed in iteration 2 step (d);
therefore, $i \geq y$, a contradiction that proves the claim.
\end{proof}


\section{Concluding remarks}
We have shown that the eccentricity function in distance-hereditary graphs is almost unimodal. We used this result to fully characterize centers of distance-hereditary graphs and to provide several bounds on the eccentricity of a vertex. Finally, a new linear time algorithm to calculate all eccentricities is presented.

\medskip\noindent
{\bf Acknowledgments.} We are very grateful to anonymous referees for many useful suggestions and to Guillaume Ducoffe for informing us about the relevant results from \cite{Coudert_2019}.  

\bibliographystyle{plain}
\bibliography{bibliography}

\end{document}